\documentclass[conference]{IEEEtran}
\usepackage{cite}
\usepackage{graphicx}
\usepackage{psfrag}
\usepackage{url}
\usepackage{amsmath}
\usepackage{array}
\usepackage{amssymb}
\usepackage{amsfonts}
\usepackage{epstopdf}

\newtheorem{lemma}{Lemma}

\newtheorem{definition}{Definition}
\newtheorem{theorem}{Theorem}

\usepackage{subfigure}

\title{Interference Alignment with Diversity for the $2 \times 2$ $X$ Network with four antennas }

\begin{document}

\author{
\authorblockN{Abhinav Ganesan and B. Sundar Rajan\\}
\IEEEauthorblockA{\small{Email: \{abhig\_88, bsrajan\}@ece.iisc.ernet.in}}
}

\maketitle
\thispagestyle{empty}	

\begin{abstract}
A transmission scheme based on the Alamouti code, which we call the Li-Jafarkhani-Jafar (LJJ) scheme, was recently proposed for the $2 \times 2$ $X$ Network (i.e., two-transmitter (Tx) two-receiver (Rx) $X$ Network) with two antennas at each node. This scheme was claimed to achieve a sum degrees of freedom (DoF) of $\frac{8}{3}$ and also a diversity gain of two when fixed finite constellations are employed at each Tx. Furthermore, each Tx required the knowledge of only its own channel unlike the Jafar-Shamai scheme which required global CSIT to achieve the maximum possible sum DoF of $\frac{8}{3}$. In this paper, we extend the LJJ scheme to the $2 \times 2$ $X$ Network with four antennas at each node. The proposed scheme also assumes only local channel knowledge at each Tx. We prove that the proposed scheme achieves the maximum possible sum DoF of $\frac{16}{3}$. In addition, we also prove that, using any fixed finite constellation with appropriate rotation at each Tx, the proposed scheme achieves a diversity gain of at least four.

\end{abstract}	

\section{Introduction}  \label{sec1}
The problem of capacity region of Gaussian interference networks has been open for decades except for a few special cases \cite{HS,RKF}. In the course of pursuit of capacity region of general Gaussian interference networks, researchers have been led into approximating their capacity regions (see for example, \cite{ETW}) and their sum-capacities. A popular way of approximating the sum-capacity of a Gaussian interference network is using the concept of degrees of freedom (DoF). The sum DoF of a Gaussian interference network is said to be $d$ if the sum-capacity can be written as $d~log_2 SNR + o(log_2 SNR)$ \cite{JaS}. A $K \times J$ MIMO $X$ network is a Gaussian interference network where each of the $J$ receivers (Rx) require one independent message from each of the $K$ transmitters (Tx). Henceforth, a $K \times J$ MIMO $X$ network with $M$ antennas at each node shall be abbreviated as $(K,J,M)-X$ Network. The sum DoF of $(2,2,M)-X$ Network was studied in \cite{MMK,JaS}. In \cite{MMK}, it was shown that a sum DoF of $\lfloor{\frac{4M}{3}}\rfloor$ is achievable in a $(2,2,M)-X$ Network while the work in \cite{JaS} shows that a sum DoF of $\frac{4M}{3}$ is achievable. Furthermore, $\frac{4M}{3}$ was also proven to be an outerbound on the sum DoF of $(2,2,M)-X$ Network \cite{JaS}. The transmission scheme in \cite{JaS} that achieved this sum DoF was based on the idea of interference alignment (IA). We shall henceforth call this scheme as the Jafar-Shamai scheme.

The concept of IA for $M>1$ involved linear precoding using a $3$-symbol extension of the channel in such a way that the interference subspaces at the receivers overlap while being linearly independent of the desired signal subspace. This assumed constant channel matrices and knowledge of all the channel gains at both the transmitters (i.e., global CSIT). The desired signals were retrieved by simple zero-forcing. 

In a recent work by Li et al. \cite{LJJ} an IA scheme for $(2,2,2)-X$ Network using the Alamouti code and appropriate channel dependent precoding was proposed. In this scheme,  each transmitter needs the knowledge of the channel from itself to both the receivers (i.e., local CSIT) whereas, in the Jafar-Shamai scheme, global CSIT is needed. This scheme, which we call the LJJ scheme, claimed to achieve the sum DoF of $(2,2,2)-X$ Network which is equal to $\frac{8}{3}$. However, \cite{LJJ} assumed the channel gains to be independently distributed as circularly symmetric complex Gaussian. Also, the proof of achievability of the sum DoF of $(2,2,2)-X$ Network is incomplete. We present a complete proof in Section \ref{subsec2} of this paper with the assumption that the real and imaginary parts of the channel gains are distributed independently according to an arbitrary continuous distribution like in the Jafar-Shamai scheme. Further, the LJJ scheme also achieves a diversity gain of two with node-to-node symbol rate of $\frac{2}{3}$ complex symbols per channel use (cspcu) where, the complex symbols are assumed to take values from a fixed finite constellation.
 
In this work, we extend the LJJ scheme to $(2,2,4)-X$ Network using Srinath-Rajan (S-R) space-time block code (STBC) which was proposed for the asymmetric $4 \times 2$ single user MIMO system \cite{SrR1}. The S-R code possesses a repetitive Alamouti structure upto scaling by a constant. This makes it convenient to adapt the LJJ scheme to $(2,2,4)-X$ Network. We prove that the proposed scheme achieves the sum DoF of $(2,2,4)-X$ Network which is equal to $\frac{16}{3}$. This scheme also requires only local CSIT like the LJJ scheme. Furthermore, under a more practical scenario of fixed finite constellation inputs, we prove that the proposed scheme achieves a diversity gain of at least four.

 The contributions of the paper are summarized below.

\begin{itemize}
\item We provide a complete proof of achievability of sum DoF of $\frac{8}{3}$ by the LJJ scheme (see Theorem \ref{thm3} in Section \ref{subsec2}).
\item We extend the LJJ scheme to $(2,2,4)-X$ Network using the S-R STBC. It is proved that this scheme achieves a sum DoF of $\frac{16}{3}$ (see Theorem \ref{thm5} in Section \ref{sec4}). The proposed scheme requires only local CSIT while the Jafar-Shamai scheme requires global CSIT to achieve the same sum DoF.
\item We prove that the proposed scheme also achieves a diversity gain of at least four (see Theorem \ref{thm4} in Section \ref{sec4}) when fixed finite constellations are employed at the transmitters. Simulation results show that the diversity gain of the proposed scheme is strictly greater than four.
\end{itemize}

The paper is organized as follows. Section \ref{sec2} formally introduces the system model. A brief overview of the Jafar-Shamai scheme for $(2,2,4)-X$ Network and the LJJ scheme for $(2,2,2)-X$ Network along with a complete proof of the sum DoF achieved by the LJJ scheme is given in Section \ref{sec3}. Extension of the LJJ scheme for $(2,2,4)-X$ Network based on the S-R STBC is described in Section \ref{sec4}. Simulation results comparing the proposed scheme with the Jafar-Shamai scheme and the time division multiple access (TDMA) scheme are presented in Section \ref{sec5}. We conclude the paper with Section \ref{sec6}.

{\em Notations:}  The set of complex number is denoted by $\mathbb C$. The notation ${\cal CN}(0,\sigma^2)$ denotes the circularly symmetric complex Gaussian distribution with mean zero and variance $\sigma^2$. For a complex number $x$, the notation $\overline{x}$ denotes the conjugate of $x$. The real and imaginary parts of a complex number $a$ are denoted by $a^R$ and $a^I$ respectively. The trace of a matrix $A$ is denoted by $\text{tr}(A)$. For an invertible matrix $A$, the notation $A^{-H}$ denotes the hermitian of the matrix $A^{-1}$. The $i^\text{th}$ row, $j^\text{th}$ column element of a matrix $A$ is denoted by $a_{ij}$. The $i^\text{th}$ row and the $i^\text{th}$ column of a matrix $A$ are denoted by $A(i,:)$ and $A(:,i)$ respectively. The Frobenius norm of a matrix $A$ is denoted by $||A||$. The identity matrix of size $n \times n$ is denoted by $I_n$. The Kronecker product of two matrices $A$ and $B$ is denoted by $A \otimes B$. A diagonal matrix with the diagonal entries given by $a_1, a_2, \cdots,a_n$ is denoted by $\text{diag}(a_1,a_2,\cdots,a_n)$. The notation $vec(A)$ denotes the vectorized version of the matrix $A$.

\section{System Model} \label{sec2}
 \begin{figure}[htbp]
\centering
\includegraphics[totalheight=3.1in,width=2.8in]{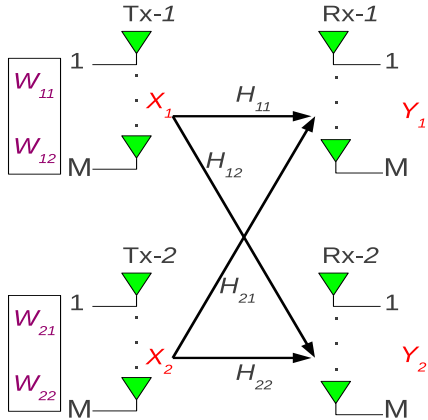}
\vspace{-1cm}
\caption{System Model.}
\label{fig_sys_model}
\end{figure}The $(2,2,M)-X$ Network is shown in Fig. \ref{fig_sys_model}. Each transmitter Tx-$i$ has an independent message $W_{ij}$ for each receiver Rx-$j$, where $i,j=1,2$. The message generated by Tx-$i$ for Rx-$j$ is denoted by $W_{ij}$. The input symbols and the output symbols over $T$ time slots are related as 
\begin{align}
\label{eqn-sys_model}
 Y_j=\sqrt{\frac{P}{M}}\sum_{i=1}^{2} H_{ij}X_i+N_j
\end{align}where, $Y_j \in \mathbb{C}^{M \times T}$ denotes the output matrix at Rx-$j$, $X_i \in \mathbb{C}^{M \times T}$ denotes the input matrix at Tx-$i$ such that $\mathbb{E}\left[\text{tr}\left(XX^H\right)\right]\leq TM$, $H_{ij} \in \mathbb{C}^{M \times M}$ denotes the channel matrix between Tx-$i$ and Rx-$j$, $N_j \in \mathbb{C}^{M \times T}$ denotes the noise matrix whose entries are i.i.d. distributed as ${\cal CN}(0,1)$. As in \cite{JaS}, we assume that the entries of all the channel matrices are independent and take values from arbitrary continuous probability distribution\footnote{We consider a complex random variable to have a continuous probability distribution if its real and imaginary parts are independent and distributed according to some continuous distribution.} so that they are almost surely full rank. Specifically, for the diversity gain evaluations, we assume that the channel matrix entries are distributed as  i.i.d. ${\cal CN}(0,1)$. The channel gains are assumed to be a constant over the transmitted codeword length. All the channel gains are assumed to be known to both the receivers (i.e., global CSIR), and this will not be specifically mentioned henceforth. The average power constraints at both the transmitters are assumed to be equal to $P$. The achievable rates and sum DoF of $(2,2,M)-X$ Network are defined in the conventional sense \cite{JaS}.

\section{Background - Jafar-Shamai Scheme and LJJ Scheme} \label{sec3}
In the first sub-section we shall briefly review the Jafar-Shamai scheme from \cite{JaS} and in the second sub-section we shall review the LJJ scheme from \cite{LJJ}.
\subsection{Review of Jafar-Shamai Scheme for $(2,2,4)-X$ Network} \label{subsec1}
 The Jafar-Shamai scheme for $(2,2,4)-X$ Network aligns the interference symbols by precoding over a $3$-symbol extension of the channel, i.e., $T=3$. Each transmitter transmits $4$ complex symbols to each receiver over $3$ channel uses so that a sum DoF of $\frac{16}{3}$ is achieved. The input-output relation over a $3$-symbol extension of the channel is given by
\begin{align} \label{eqn-JS}
 Y'_j= \sqrt{\frac{3P}{2}}\sum_{i=1}^{2}H'_{ij} \left(\sum_{k=1}^{2}\frac{V_{ik}}{\text{tr}\left(V_{ik}V^H_{ik}\right)} X_{ik}\right) + N'_j
\end{align}where, $Y'_j \in \mathbb C^{12 \times 1}$ denotes the received symbol vector at Rx-$j$ over $3$ channel uses, {\small$H'_{ij}=\begin{bmatrix}
                                                                                                                            H_{ij} & \mathbf{0} & \mathbf{0}\\
															    \mathbf{0} & H_{ij} &\mathbf{0}\\
															    \mathbf{0} & \mathbf{0} & H_{ij}
                                                                                                                           \end{bmatrix}$}
denotes the effective channel matrix between Tx-$i$ and Rx-$j$ over $3$ channel uses, $V_{ik} \in \mathbb{C}^{12 \times 4}$ denotes the precoding matrix, $X_{ik}\in \mathbb C^{4 \times 1}$ denotes the symbol vector generated by Tx-$i$ meant for Rx-$k$, and $N'_j \in \mathbb{C}^{12 \times 1}$ denotes the Gaussian noise vector whose entries are distributed as i.i.d. ${\cal CN} (0,1)$. The entries of $X_{ik}$ take values from a set such that {\small$\mathbb{E}\left[X_{ik}X^H_{ik}\right]=I_4$}. The precoders $V_{ik}$ are chosen as given below.
{
\begin{align*}
& V_{11}=E^{F'} {V^{F'}_1}, ~~ V_{12}=E^{F'} {V^{F'}_2}, \\
&V_{21}={H'}^{-1}_{22}H'_{12}V_{11}, ~ ~~V_{22} = {H'}^{-1}_{21} H'_{11} V_{12}
\end{align*}}where, {\small$E^{F'} \in \mathbb{C}^{12 \times 12}$} denotes a matrix whose columns are the eigen vectors of the matrix {\small$F'={H'}^{-1}_{11}H'_{21}{H'}^{-1}_{22}H'_{12}$}, \mbox{{\small$V^{F'}_1=I_4 \otimes[1 ~1 ~0]^T$}}, and {\small$V^{F'}_2=I_4 \otimes[1 ~0 ~1]^T$}. With the above choice of precoders, the interference symbols are aligned and (\ref{eqn-JS}) can be re-written as
\begin{align}
\nonumber
& Y'_1=\sqrt{\frac{3P}{2}}\left(H'_{11}V_{11}X_{11}+H'_{21}V_{21}X_{21}\right.\\
\nonumber
&\hspace{4cm}\left.+H_{11}V_{12}\left(X_{12}+X_{22}\right)\right)+N'_1\\
\label{eqn-JS_Aligned}
& Y'_2=\sqrt{\frac{3P}{2}}\left(H'_{12}V_{12}X_{12}+H'_{22}V_{22}X_{22}\right.\\
\nonumber
&\hspace{4cm}\left.+H_{12}V_{11}\left(X_{11}+X_{21}\right)\right)+N'_2.
\end{align}
It is proved in \cite{JaS} that the above scheme achieves a sum DoF of $\frac{16}{3}$ in the $(2,2,4)-X$ Network almost surely when the channel matrix entries take values from a continuous probability distribution.

\subsection{Review of LJJ Scheme} \label{subsec2}

In the LJJ transmission scheme for $(2,2,2)-X$ Network, every transmitter transmits two superposed Alamouti codes with appropriate precoding in three time slots, i.e., $T=3$. Each Alamouti code corresponds to the symbols meant for each receiver. The transmitted symbols are given by 

{\small
\begin{align*}
X_1&=\sqrt{\frac{3P}{4}}\left(V_{11}\underbrace{\begin{bmatrix}
                          x^1_{11} & -\overline{x^{2}_{11}} & 0\\
			  x^2_{11} &  ~\overline{x^{1}_{11}} & 0
                         \end{bmatrix}}_{X_{11}} + V_{12}\underbrace{\begin{bmatrix}
                          0 & x^1_{12} & -\overline{x^{2}_{12}}\\
			  0 & x^2_{12} &  ~\overline{x^{1}_{12}} 
                         \end{bmatrix}}_{X_{12}} \right)\\
X_2&=\sqrt{\frac{3P}{4}}\left(V_{22}\underbrace{\begin{bmatrix}
                          x^1_{21} & -\overline{x^{2}_{21}} & 0\\
			  x^2_{21} &  ~\overline{x^{1}_{21}} & 0
                         \end{bmatrix}}_{X_{21}} + V_{12}\underbrace{\begin{bmatrix}
                          0 & x^1_{22} & -\overline{x^{2}_{22}}\\
			  0 & x^2_{22} &  ~\overline{x^{1}_{22}} 
                         \end{bmatrix}}_{X_{22}} \right),
\end{align*}}where, $x^k_{ij}$ takes values from a set such that {\small$\mathbb{E}\left[\left|x^k_{ij}\right|^2\right]=1$}. The matrices $X_{ij}$, as defined above, correspond to the symbols generated by Tx-$i$ meant for Rx-$j$. The matrix entries $x^{k}_{ij}$ denote the $k^{\text{th}}$ symbol generated by Tx-$i$ for Rx-$j$. The precoders $V_{ij}$ are chosen as 

{\small
\begin{align}
\nonumber
&V_{11}=\frac{H^{-1}_{12}}{\sqrt{\text{tr}\left(H^{-1}_{12}H^{-H}_{12}\right)}}, ~ V_{12}=\frac{H^{-1}_{11}}{\sqrt{\text{tr}\left(H^{-1}_{11}H^{-H}_{11}\right)}}\\
\label{eqn-precoders}
&V_{21}=\frac{H^{-1}_{22}}{\sqrt{\text{tr}\left(H^{-1}_{22}H^{-H}_{22}\right)}}, ~ V_{22}=\frac{H^{-1}_{21}}{\sqrt{\text{tr}\left(H^{-1}_{21}H^{-H}_{21}\right)}}.
\end{align}}The coefficients in the square roots above make sure that the transmitters meet the average power constraint. Note that all the channel matrices and the precoders are $2 \times 2$ matrices. The above choice of precoders and the usage of Alamouti codes concatenated with all zero columns align the interference symbols while ensuring that the interference subspace is linearly independent of the signal subspace. We briefly describe how this happens at Rx-$1$. 
The output symbol matrix at Rx-$1$ is now given by

{\small
\begin{align*}
&Y_1=\sqrt{\frac{3P}{4}}H_{11}V_{11}X_{11} + \sqrt{\frac{3P}{4}}H_{21}V_{21}X_{21}  \\
&~~~~~~~~+\sqrt{\frac{3P}{4}}\begin{bmatrix}
                          0 & ax^1_{12} + bx^1_{22} & -a\overline{x^{2}_{12}}-b\overline{x^{2}_{22}}\\
			  0 & ax^2_{12} + bx^2_{22} &  ~a\overline{x^{1}_{12}}+b\overline{x^{1}_{22}} 
                         \end{bmatrix} + N_1
\end{align*}}where, $a=\frac{1}{\sqrt{\text{tr}\left(H^{-1}_{11}H^{-H}_{11}\right)}}$ and $b=\frac{1}{\sqrt{\text{tr}\left(H^{-1}_{22}H^{-H}_{22}\right)}}$. Let the effective channel matrices corresponding to the desired symbols from Tx-$1$ and Tx-$2$ to Rx-$1$ be denoted by $\hat{H}=H_{11}V_{11}$ and $\hat{G}=H_{21}V_{21}$ respectively. Define a $2 \times 3$ matrix $Y'$ whose first, second and third columns are given by

{\small
\begin{align}
Y'(:,1)=Y(:,1), ~Y'(:,2)=\overline{Y(:,1)}, ~Y'(:,3)=Y(:,3).
\end{align}} Similarly, define the matrix $N'_1$ obtained from $N_1$.  Denote the  $i^{\text{th}}$ rows of the $2 \times 3$ matrices $Y'_1$ and $N'_1$ by $Y'_1(i,:)$ and $N'_1(i,:)$ respectively, $i=1,2$. The processed output symbols at Rx-$1$ (i.e., $Y'_1$) can be written as

{\footnotesize
\begin{align}
\nonumber
\underbrace{\begin{bmatrix}
{Y'}_1^T(1,:)\\
{Y'}_1^T(2,:)
\end{bmatrix}}_{Y''_1}=\sqrt{\frac{3P}{4}}&\begin{bmatrix}
\hat{h}_{11} & \hat{h}_{12} & \hat{g}_{11} & \hat{g}_{12} & 0 & 0\\
\overline{\hat{h}_{12}} & -\overline{\hat{h}_{11}} & \overline{\hat{g}_{12}} & -\overline{\hat{h}_{11}} & 1 & 0\\
0 & 0 & 0 & 0 & 0 & -1\\
\hat{h}_{21} & \hat{h}_{22} & \hat{g}_{21} & \hat{g}_{22} & 0 & 0\\
\overline{\hat{h}_{22}} & -\overline{\hat{h}_{21}} & \overline{\hat{g}_{22}} & -\overline{\hat{h}_{21}} & 0 & 1\\
0 & 0 & 0 & 0 & 1 & 0
\end{bmatrix}\begin{bmatrix}
x^{1}_{11}\\
x^{2}_{11}\\
x^{1}_{21}\\
x^{2}_{21}\\
I_1\\
I_2
\end{bmatrix}\\
\label{eqn-Y''_1}
&\hspace{1cm}+\underbrace{\begin{bmatrix}
{N'}_1^T(1,:)\\
{N'}_1^T(2,:)
\end{bmatrix}}_{N''_1}
\end{align}}where, $I_1=a\overline{x^1_{12}}+b\overline{x^1_{22}}$ and $I_2=a\overline{x^2_{12}}+b\overline{x^2_{22}}$, and $\hat{h}_{ij}$ and $\hat{g}_{ij}$ denote the entries of the matrices $\hat{H}$ and $\hat{G}$ respectively. Note that, when $\hat{h}_{ij}$ and $\hat{g}_{ij}$ are non-zero, the interference symbols $I_1$ and $I_2$ are aligned in a subspace linearly independent of the signal subspace. So, pre-multiplying the matrix $Y''_1$ (defined in (\ref{eqn-Y''_1})) by the zero-forcing matrix given by
\begin{align}
F=\begin{bmatrix}
1 & 0 & 0 & 0 & 0 & 0\\
0 & 1 & 0 & 0 & 0 & -1\\
0 & 0 & 1 & 0 & 1 & 0\\
0 & 0 & 0 & 1 & 0 & 0
\end{bmatrix}
\end{align} yields

{\small
\begin{align} \label{eqn-eff_ch_mat}
 FY''_1=\sqrt{\frac{3P}{4}}\underbrace{\begin{bmatrix}
\hat{h}_{11} & \hat{h}_{12} & \hat{g}_{11} & \hat{g}_{12} \\
\overline{\hat{h}_{12}} & -\overline{\hat{h}_{11}} & \overline{\hat{g}_{12}} & -\overline{\hat{h}_{11}} \\
\hat{h}_{21} & \hat{h}_{22} & \hat{g}_{21} & \hat{g}_{22} \\
\overline{\hat{h}_{22}} & -\overline{\hat{h}_{21}} & \overline{\hat{g}_{22}} & -\overline{\hat{h}_{21}} \\
\end{bmatrix}}_{R}\begin{bmatrix}
x^{1}_{11}\\
x^{2}_{11}\\
x^{1}_{21}\\
x^{2}_{21}\\
\end{bmatrix} +FN''_1.  
\end{align}}
Now, note that decoding the symbols in (\ref{eqn-eff_ch_mat}) is similar to decoding symbols in a two user MAC with double antenna transmitters and a double antenna receiver. Hence, \cite{LJJ} makes use of the interference cancellation procedure for MAC \cite{NSC} to achieve low complexity symbol-by-symbol decoding. This procedure is described below.

Denote the sub-matrices of $R$, defined in (\ref{eqn-eff_ch_mat}), by

{\small
\begin{align}
\tilde{H}_1=\begin{bmatrix}\hat{h}_{11} & \hat{h}_{12} \\
\overline{\hat{h}_{12}} & -\overline{\hat{h}_{11}}
\end{bmatrix}, ~\tilde{G}_1=\begin{bmatrix}\hat{g}_{11} & \hat{g}_{12} \\
\overline{\hat{g}_{12}} & -\overline{\hat{g}_{11}}
\end{bmatrix}\\
\tilde{H}_2=\begin{bmatrix}\hat{h}_{21} & \hat{h}_{22} \\
\overline{\hat{h}_{22}} & -\overline{\hat{h}_{21}}
\end{bmatrix}, ~\tilde{G}_2=\begin{bmatrix}\hat{g}_{21} & \hat{g}_{22} \\
\overline{\hat{g}_{22}} & -\overline{\hat{g}_{21}}
\end{bmatrix}.
\end{align}}Denote the first two entries and the last two entries of the $4 \times 1$ vector $FY''_1$ by $\tilde{y}_1$ and $\tilde{y}_2$ respectively. Similarly, denote first two entries and the last two entries of the $4 \times 1$ vector $FN''_1$ by $\tilde{n}_1$ and $\tilde{n}_2$ respectively. Let 

{\footnotesize
\begin{align} \nonumber
 \tilde{y} &= \frac{\tilde{G}^H_1\tilde{y}_1}{\left|\left|\tilde{G}_1(1,:)\right|\right|^2}-\frac{\tilde{G}^H_2\tilde{y}_2}{\left|\left|\tilde{G}_2(1,:)\right|\right|^2}=\\
\label{eqn-define_ytilde}
&\sqrt{\frac{3P}{4}}\underbrace{\left[\frac{\tilde{G}^H_1\tilde{H}_1}{\left|\left|\tilde{G}_1(1,:)\right|\right|^2}-\frac{\tilde{G}^H_2\tilde{H}_2}{\left|\left|\tilde{G}_2(1,:)\right|\right|^2}\right]}_{\tilde{H}}\begin{bmatrix}
              x^1_{11}\\
              x^2_{11}
	   \end{bmatrix}\\
\nonumber
&\hspace{3cm}+\frac{\tilde{G}^H_1\tilde{n}_1}{\left|\left|\tilde{G}_1(1,:)\right|\right|^2}-\frac{\tilde{G}^H_2\tilde{n}_2}{\left|\left|\tilde{G}_2(1,:)\right|\right|^2}.
\end{align}}Note that the matrix $\tilde{H}$ also has an Alamouti structure and hence, $x^1_{11}$ and $x^2_{11}$ are symbol-by-symbol decodable. Similarly, $x^k_{21}$ is decoded at Rx-$1$, and $x^k_{12}$ and $x^k_{22}$ are symbol-by-symbol decodable at Rx-$2$, for $k=1,2$. The following theorem, given as Theorem $1$ in \cite{LJJ}, states the diversity gain achieved for each symbol.
\begin{theorem}\cite{LJJ}
A diversity gain of $2$ is achieved for $x^k_{ij}$, for all $i,j,k$. 
\end{theorem}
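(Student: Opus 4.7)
The plan is to reduce the analysis, symbol by symbol, to a scalar fading channel and then bound the resulting pairwise error probability (PEP) via a Chernoff-type argument. Because $\tilde{H}$ inherits the Alamouti structure of $\tilde{H}_1,\tilde{H}_2,\tilde{G}_1,\tilde{G}_2$, one has $\tilde{H}^H\tilde{H}=\|\tilde{H}(1,:)\|^2\,I_2$. Hence, after a further matched-filter step, the $2\times 1$ vector $\tilde{y}$ in (\ref{eqn-define_ytilde}) decouples into two parallel scalar channels $z_k=\sqrt{3P/4}\,\|\tilde{H}(1,:)\|\,x^{k}_{11}+w_k$, $k=1,2$, and the variance $\sigma_{\mathrm{eff}}^2$ of $w_k$ can be written in closed form as a function of $\|\tilde{G}_1(1,:)\|^2$ and $\|\tilde{G}_2(1,:)\|^2$ by re-using the Alamouti property of $\tilde{G}_1,\tilde{G}_2$ (which makes $\tilde{G}_i^H\tilde{G}_i$ a scaled identity and hence the cross-terms in $\mathbb{E}[\tilde n\tilde n^H]$ vanish).

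Conditioned on the channel, the standard Chernoff bound gives
\[
\Pr\!\big(x^{k}_{11}\!\to\!\hat{x}\,\big|\,H\big)\;\le\;\exp\!\Big(\!-\tfrac{3P}{16\sigma_{\mathrm{eff}}^2}\,\|\tilde{H}(1,:)\|^2\,d_{\min}^2\Big),
\]
with $d_{\min}$ the minimum distance of the finite input constellation. After averaging over $H$, the diversity gain equals the SNR exponent of $\mathbb{E}_H\!\big[\exp(-cP\,X)\big]$ with $X=\|\tilde{H}(1,:)\|^2/\sigma_{\mathrm{eff}}^2$, and by a standard Laplace/Tauberian argument this exponent equals $d$ iff $\Pr(X<\epsilon)\le c\,\epsilon^{d}$ as $\epsilon\to 0^+$. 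The proof therefore reduces to showing $\Pr(X<\epsilon)=O(\epsilon^2)$.

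To get the $\epsilon^2$ bound, I would substitute the precoders from (\ref{eqn-precoders}) to write $\hat{H}=\alpha\,H_{11}H_{12}^{-1}$ and $\hat{G}=\beta\,H_{21}H_{22}^{-1}$, with positive scalars $\alpha,\beta$ that are bounded away from $0$ except on an event of exponentially small probability. Expanding $\|\tilde{H}(1,:)\|^2$ via (\ref{eqn-define_ytilde}) and using the Alamouti identities $\tilde{G}_i^H\tilde{H}_i$ is itself Alamouti, one sees that $\tilde{H}$ is the difference of two \emph{independent} Alamouti-structured matrices, one generated entirely from $(H_{11},H_{12})$ and the other from $(H_{21},H_{22})$. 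Hence $\|\tilde H(1,:)\|^2$ is the squared magnitude of a $2$-dimensional vector whose real and imaginary parts are, after conditioning on $H_{12},H_{22}$, jointly Gaussian in the entries of $H_{11},H_{21}$ with positive definite covariance. A standard small-ball estimate for such a Gaussian quadratic form gives $\Pr(\|\tilde H(1,:)\|^2<\epsilon\mid H_{12},H_{22})=O(\epsilon^{2})$, and taking expectations in $H_{12},H_{22}$ preserves the rate provided $\alpha^{-1},\beta^{-1},\sigma_{\mathrm{eff}}^{-1}$ have enough finite moments; this is the desired inequality. The same computation, with the appropriate relabelling, yields the bound for every $x^{k}_{ij}$.

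The main obstacle is the last step: controlling the small-ball probability of the composite random variable $X$. The dependence of $X$ on four correlated $2\times 2$ Gaussian matrices through the matrix inverses $H_{12}^{-1},H_{22}^{-1}$ and the random normalizations $\alpha,\beta$ means that near-singular realizations of $H_{12}$ or $H_{22}$ could, in principle, conspire with small $\|\tilde H(1,:)\|$ to inflate $\Pr(X<\epsilon)$ above $\epsilon^2$. The delicate part will be verifying that these rare bad realizations contribute only to a higher order, so that the genuine diversity-$2$ behaviour of the Alamouti component in $\tilde H$ survives the averaging.
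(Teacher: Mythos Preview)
The paper does not contain its own proof of this statement. Theorem~1 is quoted verbatim from \cite{LJJ} (Li--Jafarkhani--Jafar) as background in the review subsection~\ref{subsec2}, and no proof or proof sketch is given in the present paper; the authors simply cite the result and move on to Theorem~\ref{thm2}. So there is nothing in the paper against which to compare your proposal.

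That said, your plan has a substantive error in its key independence claim. You assert that $\tilde{H}$ is ``the difference of two \emph{independent} Alamouti-structured matrices, one generated entirely from $(H_{11},H_{12})$ and the other from $(H_{21},H_{22})$.'' This is false: by (\ref{eqn-define_ytilde}), $\tilde{H}=\frac{\tilde{G}_1^H\tilde{H}_1}{\|\tilde{G}_1(1,:)\|^2}-\frac{\tilde{G}_2^H\tilde{H}_2}{\|\tilde{G}_2(1,:)\|^2}$, and \emph{each} of the two terms mixes $\hat{H}$ (which depends on $H_{11},H_{12}$) with $\hat{G}$ (which depends on $H_{21},H_{22}$). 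The correct conditional independence structure is different: after conditioning on $H_{12}$ and $H_{22}$, the first term depends on the \emph{first rows} of $H_{11}$ and $H_{21}$, while the second term depends on their \emph{second rows}. That still gives conditional independence of the two summands, but each summand is a bilinear expression in independent Gaussians (entries of $\hat{H}$ and of $\hat{G}$), not a Gaussian itself. Your subsequent claim that the real and imaginary parts of $\tilde{H}(1,:)$ are ``jointly Gaussian in the entries of $H_{11},H_{21}$ with positive definite covariance'' is therefore unjustified as written; the entries of $\tilde{H}$ are ratios of Gaussian bilinear forms to Gaussian quadratic forms, and a direct small-ball bound for such an object needs more care than a Gaussian quadratic-form estimate. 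The overall architecture (reduce to scalar channel via Alamouti, Chernoff bound, small-ball probability) is reasonable, but the dependence structure you invoke to close the argument is wrong, and you correctly flag the small-ball step as the unresolved obstacle.
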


A sum DoF of $\frac{8}{3}$ is achieved in the $(2,2,2)-X$ Network with probability one if the effective channel matrix $R$ in (\ref{eqn-eff_ch_mat}) and a similar effective channel matrix at Rx-$2$ are full rank almost surely. The following theorem, given as Theorem $2$ in \cite{LJJ}, claims that matrix $R$ is almost surely full rank.
\begin{theorem}\cite{LJJ}
\label{thm2}
 ~When the entries of $H_{ij}$ are i.i.d. distributed as ${\cal CN}(0,1)$, the matrix $R$ defined in (\ref{eqn-eff_ch_mat}) is almost surely full rank.
\end{theorem}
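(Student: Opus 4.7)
The plan is to exhibit $\det R$ as a non-identically-zero real-analytic function of the real and imaginary parts of the entries of $H_{11},H_{12},H_{21},H_{22}$, and then invoke the standard fact that the zero set of such a function has Lebesgue measure zero; since the $32$ real variables in question are jointly absolutely continuous, this set carries probability zero.

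First I would strip the positive scalar factors from the precoders. Writing $\alpha=1/\sqrt{\text{tr}(H_{12}^{-1}H_{12}^{-H})}$ and $\beta=1/\sqrt{\text{tr}(H_{22}^{-1}H_{22}^{-H})}$, both a.s.\ strictly positive and finite, one has $\hat H=\alpha\,H_{11}H_{12}^{-1}$ and $\hat G=\beta\,H_{21}H_{22}^{-1}$. Columns $1,2$ of $R$ are proportional to $\alpha$ and columns $3,4$ to $\beta$, so rescaling columns (which does not affect rank) reduces the claim to full-rankness of the matrix obtained by setting $\hat H=H_{11}H_{12}^{-1}$ and $\hat G=H_{21}H_{22}^{-1}$. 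After this reduction, $\det R$ becomes a rational function in the channel entries and their conjugates, with denominator a power of $\det H_{12}\det H_{22}$, which is a.s.\ nonzero.

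The heart of the argument is to exhibit a single channel realization at which $\det R\neq 0$, thereby certifying that the numerator of $\det R$ is not the zero polynomial in the $32$ real variables. A clean choice is $H_{11}=H_{12}=H_{22}=I_2$, making $\hat H=I_2$ while $\hat G=H_{21}$ stays free. With these substitutions the first and fourth rows of $R$ agree in their first two entries and the second and third rows differ only by a sign there, so the row operations $R_4\leftarrow R_4-R_1$ and $R_3\leftarrow R_3+R_2$ bring $R$ to upper block-triangular form with a $2\times 2$ block in the lower right that depends only on the $\hat g_{ij}$. Taking for instance $H_{21}=\begin{bmatrix} 1 & 2\\ 3 & 4\end{bmatrix}$ (all real) turns that $2\times 2$ block into $\begin{bmatrix} 5 & 3\\ 3 & -5\end{bmatrix}$, whose determinant is $-34\neq 0$, so $\det R\neq 0$ at this point.

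The main obstacle is essentially bookkeeping: because $R$ mixes entries and their conjugates, $\det R$ is not a holomorphic polynomial in the complex entries alone, and one must work with the $32$ real variables throughout in order to invoke the ``nonzero analytic function vanishes on a set of measure zero'' principle rather than the Schwartz--Zippel-style complex-polynomial version. Once the witness above establishes non-vanishing at a single point, that principle upgrades the conclusion to $\det R\neq 0$ almost surely, proving the theorem.
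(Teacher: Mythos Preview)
Your argument is correct: after column-rescaling to remove the positive normalization factors, $\det R$ is a rational function of the $32$ real parameters (entries and their conjugates), the denominator vanishes only on a Lebesgue-null set, and your witness $H_{11}=H_{12}=H_{22}=I_2$, $H_{21}=\begin{bmatrix}1&2\\3&4\end{bmatrix}$ indeed yields the lower $2\times 2$ block $\begin{bmatrix}5&3\\3&-5\end{bmatrix}$ with determinant $-34$, so the numerator is not the zero polynomial and the zero set has probability zero under any absolutely continuous law.

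This is, however, a genuinely different route from the paper's. The paper does not attempt to touch $\det R$ directly; instead it works through the interference-cancellation step: it forms the $2\times 2$ Alamouti matrices $\tilde H$ and $\tilde G$ obtained after cancelling one user, shows by a conditioning argument that each has at least one non-zero entry almost surely, and then argues by a mutual-information/DoF contradiction (Lemma~\ref{lem2}) that non-singularity of $\tilde H$ and $\tilde G$ forces $R$ to be full rank. Your approach is considerably more elementary and shorter, and it sidesteps the information-theoretic detour entirely; it also extends verbatim to the arbitrary-continuous-distribution setting of Theorem~\ref{thm3}. What the paper's approach buys is a structural explanation: it ties full rank of $R$ to the success of the symbol-by-symbol decoding procedure actually used in the scheme, and the auxiliary lemmas (non-vanishing of $\tilde H$, $\tilde G$) are reused later in the $(2,2,4)$ analysis. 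Your single-witness argument gives the result cleanly but does not illuminate that connection.
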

 The proof given in \cite{LJJ} for the above theorem goes as follows. 

``The equivalent channel vectors for $x^1_{i1}$ and $x^2_{i1}$ are orthogonal, i.e., the first two columns of $R$ are orthogonal to each other and so are the last two columns of $R$. Further, the equivalent channel vectors of $x^k_{11}$ (i.e., first two columns of $R$) depend on the matrices $H_{11}$ and $H_{12}$, while those of $x^k_{21}$ (i.e., the last two columns of $R$) depend on $H_{21}$ and $H_{22}$. Almost surely, the equivalent channel vectors of each data stream are linearly independent and separable at Rx-$1$ (i.e., the matrix $R$ is full rank almost surely).''

Note that the matrix $R$ is full rank iff the subspaces spanned by the first two and the last two columns of $R$ do not intersect. We find that it is not obvious from the  facts mentioned in the proof of Theorem \ref{thm2} in \cite{LJJ} that these subspaces do not intersect almost surely. This is because the random variables in the first two columns are dependent and so are the random variables in the last two columns. So, it is not clear what distribution the determinant of $R$ follows or specifically whether it is continuously distributed or not. Further, note that the Jafar-Shamai scheme assured a sum DoF of $\frac{8}{3}$ when the entries of the channel matrices are distributed i.i.d. according to some continuous distribution and not necessarily ${\cal CN}(0,1)$. We now re-state Theorem \ref{thm2} and also provide a complete proof.

\begin{theorem}
\label{thm3}
 When the entries of $H_{ij}$ are distributed i.i.d. according to some continuous distribution, the matrix $R$ defined in (\ref{eqn-eff_ch_mat}) is almost surely full rank.
\end{theorem}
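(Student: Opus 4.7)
My plan is to reduce the claim to a standard fact: a non-identically-zero real polynomial in finitely many real variables, evaluated at random inputs whose joint distribution is absolutely continuous with respect to Lebesgue measure, is non-zero almost surely. First I would observe that the normalizing scalars in $V_{11}$ and $V_{21}$ are positive real numbers that multiply the first two columns and the last two columns of $R$ uniformly, and so do not affect the rank. Hence it suffices to work with the unnormalized $\hat{H}=H_{11}H_{12}^{-1}$ and $\hat{G}=H_{21}H_{22}^{-1}$, which are well-defined almost surely since $H_{12}$ and $H_{22}$ are a.s.\ invertible.

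The entries of $\hat{H}$ and $\hat{G}$, together with their complex conjugates, are rational functions of the real and imaginary parts of the channel entries, with denominators that are powers of $\det(H_{12})$ and $\det(H_{22})$. Accordingly, for sufficiently large integers $a,b$, the quantity
\begin{equation*}
P := |\det(R)|^{2}\,|\det(H_{12})|^{4a}\,|\det(H_{22})|^{4b}
\end{equation*}
is a real polynomial in the real and imaginary parts of the entries of $H_{11},H_{12},H_{21},H_{22}$. If I can show that $P$ is not the zero polynomial, then $P\neq 0$ a.s., and since $\det(H_{12})$ and $\det(H_{22})$ are themselves non-trivial polynomials and hence a.s.\ non-zero, this will force $\det(R)\neq 0$ a.s.

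The key step is therefore to exhibit one explicit channel realization at which $\det(R)\neq 0$. The most symmetric attempt, $\hat{H}=\hat{G}=I_{2}$, unfortunately makes the first two columns of $R$ coincide with the last two, so the rank collapses and the symmetry must be broken. A convenient working choice is $H_{11}=H_{12}=H_{22}=I_{2}$ and $H_{21}=\mathrm{diag}(2,1)$, which gives $\hat{H}=I_{2}$ and $\hat{G}=\mathrm{diag}(2,1)$; a short row reduction on the resulting $4\times 4$ matrix then yields $|\det(R)|=1$.

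The main obstacle I anticipate is really pinning down such a test point: the Alamouti-type structure in $R$ creates several natural configurations in which the rank drops (e.g.\ any time $\hat{H}$ and $\hat{G}$ are proportional), so one has to pick an example that breaks all of them simultaneously. Everything after that is routine bookkeeping, combined with the standard polynomial-zero lemma, which itself is a one-line induction on the number of variables using Fubini and the finiteness of the zero set of a univariate polynomial.
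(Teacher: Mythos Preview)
Your argument is correct and takes a genuinely different route from the paper. The paper explicitly avoids computing or analysing $\det(R)$ directly, saying the expression is too complicated; instead it argues indirectly via mutual information. It applies the interference-cancellation step to get the $2\times 2$ Alamouti matrices $\tilde H$ and $\tilde G$, uses the data-processing inequality to show that if both $\tilde H$ and $\tilde G$ are non-zero then the DoF supported by $R$ must be $4$ (hence $R$ is full rank), and then spends the bulk of the proof on a case analysis showing that $\tilde H$ and $\tilde G$ are almost surely non-zero. Your approach sidesteps all of this: you observe that, after stripping the positive normalizing scalars and clearing the $\det(H_{12})$, $\det(H_{22})$ denominators, $|\det(R)|^{2}$ becomes a real polynomial in the real and imaginary parts of the channel entries, so the whole problem reduces to exhibiting a single channel realization with $\det(R)\neq 0$. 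Your test point $\hat H=I_{2}$, $\hat G=\mathrm{diag}(2,1)$ works, and the remaining ``polynomial non-identically-zero $\Rightarrow$ a.s.\ non-zero'' step is standard.

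What each approach buys: yours is shorter, fully elementary, and robust---it needs nothing about Alamouti structure beyond the fact that the entries of $R$ are rational functions (and their conjugates) of the channel coefficients. The paper's approach, while longer, has a by-product: it makes transparent \emph{why} the symbol-by-symbol interference-cancellation decoder of the LJJ scheme actually separates the streams, since the non-vanishing of $\tilde H$ and $\tilde G$ is exactly what that decoder needs. In other words, the paper proves a slightly stronger operational statement along the way, at the cost of a considerably more involved argument.
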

\begin{proof}
See Appendix \ref{appen_thm3}.
\end{proof}

We propose an extension of the LJJ scheme to $(2,2,4)-X$ Network in the next section.

\section{S-R STBC Based Transmission Scheme for $(2,2,4)-X$ Network} \label{sec4}
{\begin{figure*}
\begin{align} \label{eqn-SR_Code} 
    \begin{bmatrix}
        s^{1R}+js^{3I} & -s^{2R}+js^{4I} & e^{j\theta}\left(s^{5R}+js^{7I}\right) & e^{j\theta}\left(-s^{6R}+js^{8I}\right) \\
	s^{2R}+js^{4I} &  s^{1R}-js^{3I} & e^{j\theta}\left(s^{6R}+js^{8I}\right) & e^{j\theta}\left(s^{5R}-js^{7I}\right) \\
	e^{j\theta}\left(s^{7R}+js^{5I}\right) & e^{j\theta}\left(-s^{8R}+js^{6I}\right) & s^{3R}+js^{1I} & -s^{4R}+js^{2I} \\
	e^{j\theta}\left(s^{8R}+js^{6I}\right) & e^{j\theta}\left(s^{7R}-js^{5I}\right) & s^{4R}+js^{2I} &  s^{3R}-js^{1I}\\
     \end{bmatrix}
\end{align}
\hrule
\end{figure*}}
{\begin{figure*}
\begin{align} \label{eqn-SR_Code_for_X_Ch1} 
   X_{i1}= \begin{bmatrix}
        x^{1R}_{i1}+jx^{3I}_{i1} & -x^{2R}_{i1}+jx^{4I}_{i1} & 0 &e^{j\theta}\left(x^{5R}_{i1}+jx^{7I}_{i1}\right) & e^{j\theta}\left(-x^{6R}_{i1}+jx^{8I}_{i1}\right) & 0 \\
	x^{2R}_{i1}+jx^{4I}_{i1} &  x^{1R}_{i1}-jx^{3I}_{i1} & 0 & e^{j\theta}\left(x^{6R}_{i1}+jx^{8I}_{i1}\right) & e^{j\theta}\left(x^{5R}_{i1}-jx^{7I}_{i1}\right)& 0 \\
	e^{j\theta}\left(x^{7R}_{i1}+jx^{5I}_{i1}\right) & e^{j\theta}\left(-x^{8R}_{i1}+jx^{6I}_{i1}\right) & 0 & x^{3R}_{i1}+jx^{1I}_{i1} & -x^{4R}_{i1}+jx^{2I}_{i1}& 0 \\
	e^{j\theta}\left(x^{8R}_{i1}+jx^{6I}_{i1}\right) & e^{j\theta}\left(x^{7R}_{i1}-jx^{5I}_{i1}\right) & 0 & x^{4R}_{i1}+jx^{2I}_{i1} &  x^{3R}_{i1}-jx^{1I}_{i1}& 0 \\
     \end{bmatrix}
\end{align}
\hrule
\end{figure*}}
{\begin{figure*}
\begin{align} \label{eqn-SR_Code_for_X_Ch2} 
   X_{i2}= \begin{bmatrix}
        0 & x^{1R}_{i2}+jx^{3I}_{i2} & -x^{2R}_{i2}+jx^{4I}_{i2} & 0 & e^{j\theta}\left(x^{5R}_{i2}+jx^{7I}_{i2}\right) & e^{j\theta}\left(-x^{6R}_{i2}+jx^{8I}_{i2}\right)\\
	0 & x^{2R}_{i2}+jx^{4I}_{i2} &  x^{1R}_{i2}-jx^{3I}_{i2} & 0 & e^{j\theta}\left(x^{6R}_{i2}+jx^{8I}_{i2}\right) & e^{j\theta}\left(x^{5R}_{i2}-jx^{7I}_{i2}\right)\\
	0 & e^{j\theta}\left(x^{7R}_{i2}+jx^{5I}_{i2}\right) & e^{j\theta}\left(-x^{8R}_{i2}+jx^{6I}_{i2}\right)& 0 & x^{3R}_{i2}+jx^{1I}_{i2} & -x^{4R}_{i2}+jx^{2I}_{i2}\\
	0 & e^{j\theta}\left(x^{8R}_{i2}+jx^{6I}_{i2}\right) & e^{j\theta}\left(x^{7R}_{i2}-jx^{5I}_{i2}\right)& 0 & x^{4R}_{i2}+jx^{2I}_{i2} &  x^{3R}_{i2}-jx^{1I}_{i2}\\
     \end{bmatrix}
\end{align}
\hrule
\end{figure*}}

In this section, the LJJ scheme is extended to $(2,2,4)-X$ Network by exploiting a repetitive Alamouti structure (upto scaling by a constant) in the S-R STBC. This transmission scheme is proved to achieve the sum DoF of $(2,2,4)-X$ Network, and a diversity gain of at least four when fixed finite constellations are used at the transmitters. The S-R STBC proposed for $4 \times 2$ single user MIMO system in \cite{SrR1} is given by (\ref{eqn-SR_Code}) (at the top of the next page) where, $s^{i}$ denotes the $i^{th}$ complex symbol generated by the transmitter, and $\theta \in (0,2\pi)$. Note that $8$ complex symbols are transmitted in $4$ channel uses.

If $8$ complex symbols are transmitted from each transmitter to every receiver in $6$ channel uses in the $(2,2,4)-X$ Network then, a total of $\frac{16}{3}$ complex symbols per channel use is transmitted. This is done using the S-R STBC as follows. The transmitted symbols are given by

{\small
\begin{align*}
X_1&=\sqrt{\frac{3P}{4}}\left(V_{11}X_{11} + V_{12}X_{12} \right)\\
X_2&=\sqrt{\frac{3P}{4}}\left(V_{21}X_{21} + V_{22}X_{22} \right)
\end{align*}}where, the matrices $X_{i1}$ and $X_{i2}$ are given in (\ref{eqn-SR_Code_for_X_Ch1}) and (\ref{eqn-SR_Code_for_X_Ch2}) respectively, for $i=1,2$, and $x^k_{ij}$ take values from a set such that {$\mathbb{E}\left[\left|x^k_{ij}\right|^2\right]=1$}. The matrices $X_{ij}$ correspond to the symbols generated by Tx-$i$ meant for Rx-$j$. The matrix entries $x^{k}_{ij}$ denote the $k^{\text{th}}$ symbol generated by Tx-$i$ for Rx-$j$. The choice of precoders $V_{ij}$ is the same as in the LJJ scheme, i.e., given by (\ref{eqn-precoders}), where the channel matrices $H_{ij}$ are $4 \times 4$ matrices. The output symbol matrix at Rx-$1$ is given by

{\small
\begin{align*}
&Y_1=\sqrt{\frac{3P}{4}}\left(H_{11}V_{11}X_{11} + H_{21}V_{21}X_{21} \right) \\
&+\sqrt{\frac{3P}{4}}\left(\frac{1}{\sqrt{\text{tr}\left(H^{-1}_{11}H^{-H}_{11}\right)}}X_{12} + \frac{1}{\sqrt{\text{tr}\left(H^{-1}_{21}H^{-H}_{21}\right)}}X_{22} \right)+N_1
\end{align*}}where, $Y_1 \in \mathbb{C}^{4 \times 6}$. Note that the third and the sixth columns of {\small$V_{11}X_{11} + V_{21}X_{21}$} are zero. This shall be exploited for interference cancellation as follows. 

Define a matrix $Y'_1 \in \mathbb{C}^{4 \times 4}$ obtained by processing $Y_1$ as follows.

{\small
\begin{align}
\label{eqn-first_col}
& Y'_1(:,1)=Y_1(:,1),\\
\label{eqn-third_col}
& Y'_1(:,3)=Y_1(:,4),\\
\label{eqn-second_col1}
& Y'_1(1,2)=Y_1(1,2) - \overline{Y_1(2,3)},\\ 
\label{eqn-second_col2}
& Y'_1(2,2)=Y_1(2,2) + \overline{Y_1(1,3)},\\
\label{eqn-second_col3}
& Y'_1(3,2)=Y_1(3,2) - e^{j2\theta}\overline{Y_1(4,3)}, \\
\label{eqn-second_col4}
& Y'_1(4,2)=Y_1(4,2) + e^{j2\theta}\overline{Y_1(3,3)}\\
\label{eqn-fourth_col1}
& Y'_1(1,4)=Y_1(1,5) - e^{j2\theta}\overline{Y_1(2,6)},\\ 
\label{eqn-fourth_col2}
& Y'_1(2,4)=Y_1(2,5) + e^{j2\theta}\overline{Y_1(1,6)},\\
\label{eqn-fourth_col3}
& Y'_1(3,4)=Y_1(3,5) - \overline{Y_1(4,6)}, \\
\label{eqn-fourth_col4}
& Y'_1(4,4)=Y_1(4,5) + \overline{Y_1(3,6)}.
\end{align}}Note that, in (\ref{eqn-first_col}) and (\ref{eqn-third_col}), the first and the fourth columns of $Y_1$ are retained without further processing because they are interference free. These are interference free because the first and fourth columns of $X_{i2}$ are zero, for $i=1,2$. In (\ref{eqn-second_col1})-(\ref{eqn-second_col4}), the interference term associated with the second column of $Y_1$ is canceled using the third column of $Y_1$. Similarly, in (\ref{eqn-fourth_col1})-(\ref{eqn-fourth_col4}), the interference term associated with the fifth column of $Y_1$ is canceled using the sixth column of $Y_1$. Note that the conjugation and scaling of terms in the R.H.S. of (\ref{eqn-second_col1})-(\ref{eqn-fourth_col4}) involve only the third and sixth columns of $Y_1$. This interference cancellation procedure does not affect the desired symbols because the third and sixth columns of {\small$V_{11}X_{11} + V_{21}X_{21}$} are zero. Note that the LJJ scheme for $(2,2,2)-X$ Network also involves similar interference cancellation procedure though it was explained through zero-forcing of aligned interference in Section \ref{subsec2}.

Now, the matrix $Y'_1$ can be re-written as
{
\begin{align} \label{eqn-Y'_1}
 Y'_1= H_{11}V_{11}X'_{11} + H_{21}V_{21}X'_{21} + N'_1
\end{align}
}where, $X'_{i1}$ is given by (\ref{eqn-SR_Code_1}) (at the top of the next page), for $i=1,2$, and $N'_1 \in \mathbb{C}^{4 \times 4}$ is a Gaussian noise matrix whose first and third column entries are distributed as i.i.d. ${\cal CN}(0,1)$ while the second and fourth column entries are distributed as i.i.d. ${\cal CN}(0,2)$.
{\begin{figure*}
\begin{align} \label{eqn-SR_Code_1} 
   X'_{i1}= \begin{bmatrix}
        x^{1R}_{i1}+jx^{3I}_{i1} & -x^{2R}_{i1}+jx^{4I}_{i1} & e^{j\theta}\left(x^{5R}_{i1}+jx^{7I}_{i1}\right) & e^{j\theta}\left(-x^{6R}_{i1}+jx^{8I}_{i1}\right) \\
	x^{2R}_{i1}+jx^{4I}_{i1} &  x^{1R}_{i1}-jx^{3I}_{i1} &  e^{j\theta}\left(x^{6R}_{i1}+jx^{8I}_{i1}\right) & e^{j\theta}\left(x^{5R}_{i1}-jx^{7I}_{i1}\right) \\
	e^{j\theta}\left(x^{7R}_{i1}+jx^{5I}_{i1}\right) & e^{j\theta}\left(-x^{8R}_{i1}+jx^{6I}_{i1}\right) & x^{3R}_{i1}+jx^{1I}_{i1} & -x^{4R}_{i1}+jx^{2I}_{i1} \\
	e^{j\theta}\left(x^{8R}_{i1}+jx^{6I}_{i1}\right) & e^{j\theta}\left(x^{7R}_{i1}-jx^{5I}_{i1}\right)  & x^{4R}_{i1}+jx^{2I}_{i1} &  x^{3R}_{i1}-jx^{1I}_{i1} \\
     \end{bmatrix}
\end{align}
\hrule
\end{figure*}}The matrices $X'_{i2}$ is defined in a similar way as $X'_{i1}$, for $i=1,2$.

We now proceed to evaluate the diversity gain achieved by the above scheme when fixed finite constellation inputs are used at the transmitters. Towards that end, we have the following definition from \cite{ZaR}.

\begin{definition}\cite{ZaR}
  The Coordinate Product Distance (CPD) between any two signal points $u=u^R+ju^I$ and $v=v^R+jv^I$, for $u \neq v$, in a finite constellation ${\cal S}$ is defined as 
\begin{align*}
 CPD(u,v)=\left|u^R-v^R\right|\left|u^I-v^I\right|
\end{align*}and the minimum of this value among all possible pairs is defined as the CPD of ${\cal S}$.
\end{definition}

We assume that each symbol $x^{k}_{ij}$ takes values from a finite constellation whose CPD is non-zero, for all $i,j,k$. As observed in \cite{ZaR}, if a finite constellation has a zero CPD, it can always be rotated appropriately so that the resulting constellation has a non-zero CPD. Now, define the difference matrix $\triangle{X'_{ij}}^{k_1,k_2}$ by
\begin{align*}
 \triangle {X'_{ij}}^{k_1,k_2} = {X'_{ij}}^{k_1}-{X'_{ij}}^{k_2}
\end{align*}where, ${X'_{ij}}^{k_1}$ and ${X'_{ij}}^{k_2}$ denote two different realizations (i.e., $k_1 \neq k_2$) of the matrix $X'_{ij}$.

The following lemma shall be useful in establishing the diversity gain of the proposed scheme.

\begin{lemma} \label{lem3}
 There exists $\theta$ such that the difference matrix $\triangle {X'_{ij}}^{k_1,k_2}$ is full rank for all $k_1 \neq k_2$ and for all $i,j$.
\end{lemma}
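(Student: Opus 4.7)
The plan is to reduce full-rankness of $\triangle X'^{k_1,k_2}_{ij}$ to the non-vanishing of a non-trivial polynomial in $e^{j\theta}$, exploiting the block-Alamouti structure that $X'_{ij}$ inherits from the S-R STBC in (\ref{eqn-SR_Code}) (with $x^k_{ij}$ in place of $s^k$). First I would write
\begin{equation*}
\triangle X'^{k_1,k_2}_{ij} = \begin{bmatrix} \Delta A & e^{j\theta}\Delta B \\ e^{j\theta}\Delta C & \Delta D \end{bmatrix},
\end{equation*}
where each $2\times 2$ block is Alamouti in the real/imaginary parts of the symbol differences $\Delta x^{k}_{ij}$; the diagonal blocks $\Delta A,\Delta D$ involve only $x^1,\dots,x^4$ while $\Delta B,\Delta C$ involve only $x^5,\dots,x^8$. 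Two facts do the heavy lifting: (i) any Alamouti matrix $M$ satisfies $MM^H=(\det M)I_2$, hence $M^{-1}=M^H/\det M$; (ii) Alamouti matrices are closed under multiplication.

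Next, applying the Schur-complement formula in the generic case $\det(\Delta D)\neq 0$,
\begin{equation*}
\det\!\left(\triangle X'^{k_1,k_2}_{ij}\right) = \det(\Delta D)\,\det\!\left(\Delta A - e^{j2\theta}\Delta B\,\Delta D^{-1}\Delta C\right).
\end{equation*}
Since $\Delta B\,\Delta D^{-1}\Delta C$ is itself Alamouti, a short $2\times 2$ expansion collapses the right-hand side to a polynomial with only even powers of $e^{j\theta}$,
\begin{equation*}
\det\!\left(\triangle X'^{k_1,k_2}_{ij}\right) = a_0 + a_2\,e^{j2\theta} + a_4\,e^{j4\theta},
\end{equation*}
where $a_0=\det(\Delta A)\det(\Delta D)$, $a_4=\det(\Delta B)\det(\Delta C)$, and $a_2\in\mathbb{R}$ is a cross term. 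The degenerate case $\Delta D=0$ (which is equivalent to $\det(\Delta D)=0$ because each Alamouti-block determinant is a sum of squares) forces $\Delta A=0$ as well and is handled by a direct computation yielding $a_0=a_2=0$ and $\det=\pm e^{j4\theta}a_4$.

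The crux is then to check that this polynomial is not identically zero whenever $k_1\neq k_2$. Because $\Delta A$ is Alamouti, $\det(\Delta A)=(\Delta x_{ij}^{1R})^2+(\Delta x_{ij}^{3I})^2+(\Delta x_{ij}^{2R})^2+(\Delta x_{ij}^{4I})^2$, and analogously for the other three block determinants. The non-zero CPD hypothesis — whenever $x^k_{ij}$ changes, both $\Delta x^{kR}_{ij}\neq 0$ and $\Delta x^{kI}_{ij}\neq 0$ — then forces: $\det(\Delta A)=0$ iff $x^1_{ij},\dots,x^4_{ij}$ are all unchanged (and likewise for $\det(\Delta D)$), and $\det(\Delta B)=0$ iff $x^5_{ij},\dots,x^8_{ij}$ are all unchanged (and likewise for $\det(\Delta C)$). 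Consequently $a_0=0$ iff symbols $1$--$4$ agree and $a_4=0$ iff symbols $5$--$8$ agree. Since $k_1\neq k_2$ forces at least one symbol to differ, at least one of $a_0,a_4$ is non-zero, so the determinant is a non-trivial polynomial of degree at most four in $e^{j\theta}$ with at most four zeros on the unit circle.

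Taking the union over the finitely many pairs $(k_1,k_2)$ from the finite constellation and the finitely many indices $(i,j)$ gives a finite set of forbidden $\theta$; any $\theta\in(0,2\pi)$ outside it makes every $\triangle X'^{k_1,k_2}_{ij}$ full rank. The main obstacle I anticipate is the Schur-complement bookkeeping in the second step — verifying that $\Delta B\,\Delta D^{-1}\Delta C$ retains Alamouti form and that the cross-term $a_2$ is real. It is precisely this reality, tracing back to the symmetric pairing of real and imaginary parts of $x^1,\dots,x^8$ across the diagonal and off-diagonal blocks of the S-R STBC, that makes $\det(\triangle X'^{k_1,k_2}_{ij})$ a genuine polynomial in $e^{j2\theta}$ rather than an intractable trigonometric expression in $e^{\pm j\theta}$.
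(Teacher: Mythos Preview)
Your proof is correct and follows essentially the same route as the paper: both use a Schur-complement expansion to write $\det(\triangle X'^{k_1,k_2}_{ij})$ as a quadratic in $e^{j2\theta}$, invoke the non-zero CPD assumption to guarantee the polynomial is non-trivial, and then avoid the finitely many roots over the finite set of difference matrices. The only cosmetic differences are that the paper pivots on $\triangle A$ rather than $\triangle D$ and handles the zero-block cases by a separate case split, whereas you absorb them cleanly into your identification of the extreme coefficients $a_0=\det(\triangle A)\det(\triangle D)$ and $a_4=\det(\triangle B)\det(\triangle C)$; note, incidentally, that the reality of $a_2$ is a pleasant bonus but not actually needed---the determinant is automatically a polynomial in $e^{j\theta}$ (with only even powers) simply because the off-diagonal blocks each carry one factor of $e^{j\theta}$.
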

\begin{proof}
 See Appendix \ref{appen_lem3}.
\end{proof}

Henceforth, we shall assume that $\theta$ is chosen so that the difference matrix {\small$\triangle {X'_{ij}}^{k_1,k_2}$} is full rank for all $k_1 \neq k_2$ and for all $i,j$.  We shall assume that ML Decoding of {\small$X'_{11}$} and {\small$X'_{21}$} is done from (\ref{eqn-Y'_1}) and ML Decoding of {\small$X'_{12}$} and {\small$X'_{22}$} is done from a similar processed received symbol matrix at Rx-$2$. The diversity gain of the proposed scheme can be obtained from the following theorem.

\begin{theorem} \label{thm4}
The average pair-wise error probability $P_e$ for the pairs of codewords {\small$\left({X'_{11}}^{k_1},{X'_{21}}^{k_2}\right)$} and {\small$\left({X'_{11}}^{k'_1},{X'_{21}}^{k'_2}\right)$} is upper bounded as

{\small\begin{align*}
 P_e\left({\left({X'_{11}}^{k_1},{X'_{21}}^{k_2}\right)\rightarrow\left({X'_{11}}^{k'_1},{X'_{21}}^{k'_2}\right)}\right) \leq c P^{-4}.
\end{align*}}for some constant $c>0$.
\end{theorem}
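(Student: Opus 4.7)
The plan is to apply the Chernoff bound to the pairwise error probability, condition on the ``precoder'' channels $(H_{12},H_{22})$, and exploit the full-rank guarantee of Lemma~\ref{lem3} to obtain an almost-sure lower bound on a single trace. Set $\Delta_i := {X'_{i1}}^{k_i}-{X'_{i1}}^{k'_i}$ for $i=1,2$, at least one of which is nonzero. Because the processed noise matrix $N'_1$ in~\eqref{eqn-Y'_1} has columns $1,3$ of variance $1$ and columns $2,4$ of variance $2$, noise whitening and the Chernoff bound yield
\[
P_e \mid \{H_{ij}\} \;\le\; \exp\!\bigl(-\kappa P\,\|\hat H\Delta_1+\hat G\Delta_2\|_F^2\bigr),
\]
with $\hat H=H_{11}V_{11}$, $\hat G=H_{21}V_{21}$, and a positive constant $\kappa$ independent of $P$ and of the codewords.

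Next, I will substitute $V_{i1}=\alpha_i H_{i2}^{-1}$ (with $\alpha_i:=1/\|H_{i2}^{-1}\|_F$) and condition on $(H_{12},H_{22})$; the only remaining randomness then sits in the independent i.i.d.\ $\mathcal{CN}(0,1)$ matrices $H_{11},H_{21}$. The rows of
\[
Z \;:=\; \alpha_1 H_{11}(H_{12}^{-1}\Delta_1)+\alpha_2 H_{21}(H_{22}^{-1}\Delta_2)
\]
are i.i.d.\ complex Gaussian with covariance $\Sigma=\alpha_1^2\,\Delta_1^H H_{12}^{-H}H_{12}^{-1}\Delta_1+\alpha_2^2\,\Delta_2^H H_{22}^{-H}H_{22}^{-1}\Delta_2$. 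Diagonalising $\Sigma$ and summing over the four rows of $Z$ gives $\|Z\|_F^2=\sum_{j=1}^{4}\sigma_j Y_j$ with $\sigma_j$ the eigenvalues of $\Sigma$ and the $Y_j$ i.i.d.\ $\mathrm{Gamma}(4,1)$; the moment generating function then gives
\[
\mathbb{E}\!\left[e^{-\kappa P\|Z\|_F^2}\mid H_{12},H_{22}\right]\;=\;\prod_{j=1}^{4}(1+\kappa P\sigma_j)^{-4}.
\]

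The final step combines the elementary inequality $\prod_j(1+\kappa P\sigma_j)\ge 1+\kappa P\,\mathrm{tr}(\Sigma)$ with an a.s.\ lower bound on $\mathrm{tr}(\Sigma)$. Applying the standard bound $\mathrm{tr}(CH_{i2}^{-H}H_{i2}^{-1})\ge\lambda_{\min}(C)\|H_{i2}^{-1}\|_F^2$ (valid for any PSD $C$) to $C=\Delta_i\Delta_i^H$, together with $\alpha_i^2=1/\|H_{i2}^{-1}\|_F^2$, yields $\alpha_i^2\|H_{i2}^{-1}\Delta_i\|_F^2\ge\lambda_{\min}(\Delta_i\Delta_i^H)$. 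By Lemma~\ref{lem3} every nonzero $\Delta_i$ is full rank, so $\lambda_{\min}(\Delta_i\Delta_i^H)>0$; hence $\mathrm{tr}(\Sigma)\ge\mu$ almost surely for a positive constant $\mu$ depending only on the codeword pair. The outer expectation over $(H_{12},H_{22})$ is therefore trivial and delivers $P_e\le(1+\kappa\mu P)^{-4}\le cP^{-4}$, as required.

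The main obstacle I expect is \emph{choosing the right bound} in this last step. The tempting shortcut of replacing $\prod_j(1+\kappa P\sigma_j)^{-4}$ by $(\kappa P)^{-16}\det(\Sigma)^{-4}$ would extract all four eigenvalues of $\Sigma$ at once and force one to control $\mathbb{E}[\det(\Sigma)^{-4}]$; this integral diverges because the precoder channels $H_{i2}$ can be arbitrarily ill-conditioned, so $\det(\Sigma)$ vanishes without a heavy enough density to compensate. Routing the argument instead through the \emph{sum} of eigenvalues $\mathrm{tr}(\Sigma)$ rather than the product forfeits any diversity beyond four but makes the expectation integrable and yields exactly the $P^{-4}$ bound claimed in the theorem.
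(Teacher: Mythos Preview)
Your proof is correct and follows essentially the same high-level route as the paper: condition on the precoder channels $(H_{12},H_{22})$, use the i.i.d.\ Gaussian structure of $H_{11},H_{21}$ to obtain the moment-generating-function bound $\prod_j(1+\kappa P\sigma_j)^{-4}$, and then lower-bound a single eigenvalue (or their sum) of the conditional covariance by a constant independent of $(H_{12},H_{22})$ via the full-rank property of $\Delta_i$ from Lemma~\ref{lem3}.

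The execution of the last step differs. The paper first drops from $K'=K'_1+K'_2$ to $K'_1$ via Weyl's inequalities, then pulls out $\lambda_4(\triangle X_{11})$ by eigendecomposing $\triangle X_{11}\triangle X_{11}^H$, removes the eigenvector matrix using the unitary invariance of the Gaussian distribution of $H_{12}$, and finally uses the deterministic bound $\lambda_1(V_{11}V_{11}^H)\ge 1/4$ (the largest eigenvalue of a trace-normalized $4\times 4$ PSD matrix is at least $1/4$). Your route is more elementary: the single inequality $\text{tr}(AB)\ge\lambda_{\min}(A)\,\text{tr}(B)$ for PSD $A,B$, combined with the normalization $\alpha_i^2\|H_{i2}^{-1}\|_F^2=1$, delivers $\text{tr}(\Sigma)\ge\lambda_{\min}(\Delta_i\Delta_i^H)$ directly, avoiding Weyl's inequalities and the unitary-invariance detour. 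Both arrive at exactly the same $P^{-4}$ bound with comparable constants; your version is shorter, while the paper's makes the role of the individual eigenvalues of $V_{11}$ more explicit (which is relevant to their observation that only one of the four precoder eigenvalues contributes, limiting the guaranteed diversity to four).
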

\begin{proof}
 See Appendix \ref{appen_thm4}.
\end{proof}

Hence, using the union bound on the average probability of error given that a particular symbol is transmitted and using Theorem \ref{thm4}, we obtain that ML decoding of {\small$X'_{11}$} and {\small$X'_{21}$} from (\ref{eqn-Y'_1}) gives a diversity gain of four. 

We shall now evaluate the DoF achievable using the proposed scheme. For the DoF evaluation we do not assume any restriction on the value of $\theta$. 

\begin{theorem} \label{thm5}
 The proposed scheme can achieve a node to node DoF of $\frac{4}{3}$ and hence, a sum DoF of $\frac{16}{3}$ with symbol-by-symbol decoding.
\end{theorem}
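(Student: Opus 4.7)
The strategy is to mirror the LJJ analysis of Section \ref{subsec2}, exploiting the block-Alamouti structure of the S-R STBC to reduce (\ref{eqn-Y'_1}) to parallel Alamouti-MAC sub-systems that admit symbol-by-symbol decoding, and then to establish almost-sure full-rank of the corresponding effective channels by an argument analogous to that of Theorem \ref{thm3}. Since we are only chasing DoF (not diversity), $\theta$ may be treated as arbitrary, and Lemma \ref{lem3} plays no role.

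Writing $X'_{i1}$ from (\ref{eqn-SR_Code_1}) in $2 \times 2$ block form as $X'_{i1}=\begin{bmatrix} A_{i1} & e^{j\theta}B_{i1}\\ e^{j\theta}C_{i1} & D_{i1}\end{bmatrix}$, each $2 \times 2$ block is Alamouti-type in four real unknowns drawn from $\{x^{k}_{i1}\}_{k=1}^{8}$. Partitioning the columns of $Y'_1$ in (\ref{eqn-Y'_1}) into the pairs $(1,2)$ and $(3,4)$ and applying to each pair the conjugation-and-stacking trick that produced $Y''_1$ in (\ref{eqn-Y''_1}) would yield two effective linear systems, each in exactly the Alamouti-concatenation form of the matrix $R$ in (\ref{eqn-eff_ch_mat}), with effective $2\times 2$ channel blocks drawn from $H_{11}V_{11}$ and $H_{21}V_{21}$. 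On each such system I would run the MAC interference-cancellation step (\ref{eqn-define_ytilde}) to peel off one of the two Alamouti codewords; the residual Alamouti structure of the matrix $\tilde{H}$ then decouples the remaining pair into scalar symbol-by-symbol decoding problems, exactly as in the LJJ scheme.

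The principal obstacle is verifying that each of the resulting effective matrices is almost-surely full rank. The $2\times 2$ block entries are sub-matrices of $H_{11}V_{11}$ and $H_{21}V_{21}$, and these in turn depend on the same underlying $4 \times 4$ matrices $H_{11}, H_{12}, H_{21}, H_{22}$, so Theorem \ref{thm3} cannot be quoted verbatim: the random entries of the various sub-blocks are correlated. I would handle this by adapting the polynomial argument of Appendix \ref{appen_thm3}, observing that each determinant is a rational function of the $H_{ij}$ entries with denominator almost-surely nonzero, and exhibiting one deterministic realization (for instance, setting $H_{11}=H_{22}=I_4$ together with a suitably generic choice of $H_{12}$ and $H_{21}$) that makes each numerator polynomial nonzero. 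Continuity of the joint distribution of the channel entries would then give almost-sure invertibility.

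With almost-sure full rank in hand, each of the $8$ complex symbols $x^{k}_{i1}$ transmitted from Tx-$i$ to Rx-$1$ over the $6$ channel uses is decoded symbol-by-symbol with noise variance remaining $O(1)$ while signal power scales as $P$, yielding DoF $1$ per symbol. The analogous argument at Rx-$2$ handles $x^{k}_{i2}$. This gives a node-to-node DoF of $8/6=4/3$ and hence sum DoF $4\cdot 4/3=16/3$, matching the outer bound $4M/3$ from \cite{JaS}.
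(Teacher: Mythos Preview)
Your proposal contains a structural misidentification that makes the single MAC-cancellation step insufficient. When you partition the columns of $Y'_1$ into the pairs $(1,2)$ and $(3,4)$, the first pair carries the first two columns of $X'_{i1}$, which are $\begin{bmatrix} A_{i1}\\ e^{j\theta}C_{i1}\end{bmatrix}$. Because the effective channel $\hat H=H_{11}V_{11}$ is a full $4\times 4$ matrix (there is no reason for $H_{11}H_{12}^{-1}$ to be block-diagonal), every receive antenna sees a mixture of both $A_{i1}$ and $C_{i1}$. After the conjugation-and-stacking trick you therefore obtain an $8\times 8$ system containing \emph{four} Alamouti codewords (two from each transmitter), not the $4\times 4$ two-codeword system of the LJJ matrix $R$. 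This is exactly what the paper writes out in (\ref{eqn-mod_sys_model}). Running the cancellation step (\ref{eqn-define_ytilde}) once peels off only one Alamouti pair and leaves three, so symbol-by-symbol decoding is not yet possible; the paper has to iterate the cancellation \emph{three} times (Steps~1--3 in Appendix~\ref{appen_thm5}) to isolate a single Alamouti pair.

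This cascaded cancellation also makes the almost-sure full-rank verification substantially harder than a direct analogue of Theorem~\ref{thm3}. The quantity that must be shown nonzero is the $(1,1)$ entry of the Alamouti matrix $\frac{H''^H_3 H''_1}{\|H''_3(1,:)\|^2}-\frac{H''^H_4 H''_2}{\|H''_4(1,:)\|^2}$, where the $H''_k$ are themselves built from two previous layers of cancellation involving both $\hat H$ and $\hat G$; see (\ref{eqn-define_H''_i}) and (\ref{eqn-rewrite_H''}). Your plan of exhibiting a single generic realization is the right spirit, but you would need to track this through the three nested steps and identify which channel entries to isolate (the paper conditions on everything except $h_{11_{31}}$ and extracts a nonzero coefficient of $h_{11_{31}}^{R^2}$, via Lemmas~\ref{lem4} and~\ref{lem5}). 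Simply setting $H_{11}=H_{22}=I_4$ and checking a determinant of an $R$-type $4\times 4$ matrix does not address the correct object.
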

\begin{proof}
 See Appendix \ref{appen_thm5}.
\end{proof}

Thus, the proposed scheme achieves the sum DoF of $(2,2,4)-X$ Network using local CSIT while the Jafar-Shamai scheme requires global CSIT. 

In the following section, we shall present some simulation results comparing the probability of error performance of the proposed scheme with other schemes using finite constellation inputs.

\section{Simulation Results} \label{sec5}
In this section, we present some simulation results that include comparing the error performance of the proposed scheme for $(2,2,4)-X$ Network with that of a TDMA scheme, and the Jafar-Shamai scheme. In the TDMA scheme, the channel is used half the time by one transmitter while the other switches off. When Tx-$i$ is switched on, half the time is allocated to transmit to each of the receivers. To ensure a fair comparison, we assume TDMA with CSIT, and the symbol vectors meant to be transmitted are precoded using the full diversity precoders proposed in \cite{SrR2} for single user MIMO system with square QAM constellation inputs.

We shall briefly review the precoding technique proposed in \cite{SrR2} for single user MIMO system. We shall call the precoder as S-R Precoder. Consider a single user MIMO system with $M$ transmit and $M$ receive antennas. Full CSIT and CSIR are assumed. The channel is assumed to be quasi-static and all the channel gains are distributed as i.i.d. ${\cal CN}(0,1)$. The channel model is given by
\begin{align} \label{eqn-single_user}
 Y=\sqrt{\frac{SNR}{M}}HQX+N
\end{align}where, $Y \in \mathbb{C}^{M \times 1}$ denotes the output symbol vector, $H \in \mathbb {C}^{M \times M}$ denotes the channel matrix, $Q \in \mathbb {C}^{M \times M}$ denotes the precoder matrix, $X \in \mathbb{C}^{M \times 1}$ denotes the transmitted symbol vector, and $N \in \mathbb{C}^{M \times 1}$ denotes the Gaussian noise vector with the entries distributed as i.i.d. ${\cal CN}(0,1)$. The signal to noise ratio at each receive antenna is denoted by $SNR$ and $\mathbb E\left[X^H X\right]=M$. The transmitted symbol vector is given by $X=\left[x_1 ~x_2 \cdots x_M\right]^T$ where the symbols $x_i$ take values from a square QAM whose average power is taken to be equal to one, for $i=1,2, \cdots, M$. Let the singular value decomposition of $H$ be given by $H=U D V^H$ where, $U$ and $V$ are unitary matrices of size $M \times M$, and {\small$D= \text{diag}\left(\lambda_1(H), \lambda_2(H), \cdots, \lambda_M(H)\right)$} with {\small$\lambda_1(H) \geq \lambda_2(H) \geq \cdots \geq \lambda_M(H)$}.

The precoding matrix $Q$ is given by $Q=VP$ where, $P \in \mathbb C^{M \times M}$. Multiplying the received vector $Y$ by $U^H$ we have,
\begin{align*}
 Y'=U^H Y = \sqrt{\frac{SNR}{M}} D P X + N'
\end{align*}where, $N'=U^H N$ has the same distribution as $N$. The matrix $P$ for $M=4$ is given by 

{\footnotesize
 \begin{align*} \begin{bmatrix}
  P_1(1,1) & 0 & 0 & P_1(1,2)\\
 0 & P_2(1,1) & P_2(1,2) & 0 \\
0 & P_2(2,1) & P_2(2,2) & 0 \\
P_1(2,1) & 0 & 0 & P_1(2,2) \end{bmatrix}
 \end{align*}}where, $P_i(j.k)$ denotes the $j^{\text{th}}$ row, $k^{\text{th}}$ column element of the matrix $P_i$ given by

{\small\begin{align*}
P_i = \sqrt{2\tau_i^2} \begin{bmatrix}
                       \text{cos} ~\psi_i ~\text{cos} ~\theta_i & -\text{cos} ~\psi_i ~\text{sin} ~\theta_i\\
			\text{sin} ~\psi_i ~\text{sin} ~\theta_i & \text{sin} ~\psi_i ~\text{cos} ~\theta_i
                      \end{bmatrix} \text{, for $i=1,2$}.
\end{align*}}The values of $\tau_i$, $\psi_i$, and $\theta_i$ are selected based on the matrix $D$. The selection of values of these variables is involved and hence, the readers are referred to \cite{SrR2} for details. Similarly, for $M=2$, the matrix $P$ is given by 

{\small\begin{align*}
P   = \sqrt{2\tau_3^2} \begin{bmatrix}
                       \text{cos} ~\psi_3 ~\text{cos} \theta_3 & -\text{cos} ~\psi_3 ~\text{sin} \theta_3\\
			\text{sin} ~\psi_3 ~\text{sin} \theta_3 & \text{sin} ~\psi_3 ~\text{cos} \theta_3
                      \end{bmatrix}.
\end{align*}}Among the class of precoders having a real matrix $P$, the above choice of $P$ was shown to be approximately optimal in minimizing the ML metric given by
\begin{align} \label{eqn-single_user_dmin}
 \min_{X}\left|\left|Y'-\sqrt{\frac{SNR}{M}} D P X\right|\right|^2.
\end{align}Further, the precoders were proven to achieve full diversity.

We first compare the error probability performance of the LJJ scheme with the TDMA scheme using S-R Precoder in the $(2,2,2)-X$ Network. Such a comparison was not done in \cite{LJJ}. The value of $SNR$ in the S-R precoder is set as $2P$ to account for time sharing. In the LJJ scheme we perform ML decoding of the symbols directly from the processed receive symbol vector $F Y''_1$ given in (\ref{eqn-eff_ch_mat}) rather than symbol-by-symbol decoding as described in Section \ref{subsec2}. The transmitted symbols in the LJJ scheme are decoded using the sphere decoder \cite{ViB}. Since each transmitter achieves a rate of $\frac{4}{3}$ cspcu and $1$ cspcu in the LJJ scheme and the TDMA scheme respectively, we use $8$-QAM constellation\footnote{Here, we take $8$-QAM constellation input to be the Cartesian product of a $4$-PAM constellation that constitutes the real part and a $2$-PAM constellation that constitutes the imaginary part.} input for the LJJ scheme and $16$-QAM constellation input for the TDMA scheme using S-R Precoder so that the spectral efficiency achieved is $4$ bits/sec/Hz per transmitter. Fig. \ref{fig2} compares the Word Error Probability (WEP) of the LJJ scheme with $8$-QAM input with that of the TDMA scheme using S-R Precoder with $16$-QAM input. The TDMA scheme using S-R Precoder clearly outperforms the LJJ scheme inspite of the higher constellation size because the former has a diversity gain of $4$ while the latter has a diversity gain that is strictly greater than $2$ but lesser than $3$. Thus, the sum DoF optimality of the LJJ scheme does not translate to a better WEP performance compared to the TDMA scheme with finite constellation inputs, even at low values of $P$. 

{\begin{figure}[htbp]
\centering
\includegraphics[totalheight=3.1in,width=3.8in]{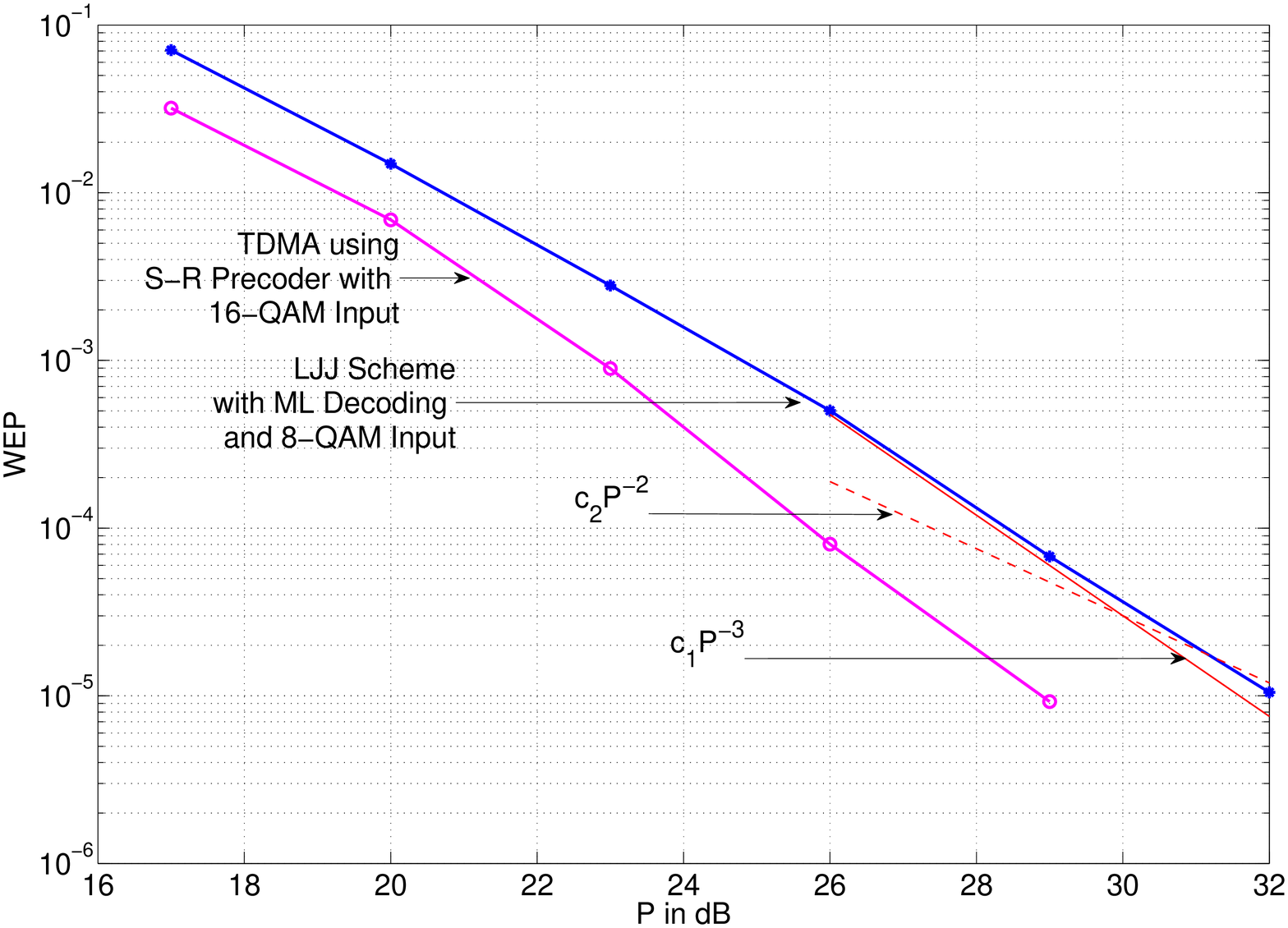}
\caption{WEP of LJJ scheme with $8$-QAM input versus WEP of TDMA using S-R Precoder with $16$-QAM input at a spectral efficiency of $4$ bits/sec/Hz per transmitter.}
 \label{fig2}
\end{figure}}

A similar result is observed with the proposed scheme for $(2,2,4)-X$ Network which we term as the modified S-R STBC scheme. Here, the TDMA scheme achieves a rate of $2$ cspcu per transmitter. Sphere decoder is used to decode the transmitted symbols from (\ref{eqn-Y'_1}) in the modified S-R STBC scheme. We simulate the TDMA scheme using S-R Precoder with $16$-QAM input and the modified S-R STBC scheme with $8$-QAM input so that the achieved spectral efficiency is $8$ bits/sec/Hz per transmitter. We have set $\theta=\frac{\pi}{4}$ in the modified S-R STBC scheme, and the constellations are rotated by an angle $\phi=\frac{\text{tan}^{-1}(2)}{2}$ to ensure a non-zero CPD \cite{ZaR}. It was shown in \cite{SrR1} that the difference matrices of the S-R STBC are full rank with  $\theta=\frac{\pi}{4}$ and $\phi=\frac{\text{tan}^{-1}(2)}{2}$ when $16$-QAM inputs are used. Since, the $8$-QAM constellation is a subset of the $16$-QAM constellation, $\triangle {X'_{ij}}^{k_1,k_2}$ is full rank for all $k_1,k_2$ and for all $i,j$. Hence, by Theorem \ref{thm4}, a diversity of four is assured for the modified S-R STBC scheme. It can be observed from Fig. \ref{fig3} that the TDMA scheme using S-R Precoder with $16$-QAM input outperforms the modified S-R STBC scheme with $8$-QAM input. Hence, like in the LJJ scheme, the sum DoF superiority of the modified S-R STBC scheme for $(2,2,4)-X$ Network over the TDMA scheme doesn't translate to superiority in terms of WEP when finite constellation inputs are used, even at low values of $P$. Note that the diversity gain offered by the TDMA scheme using S-R Precoder is $16$ whereas the modified S-R STBC scheme has an assured diversity gain of only $4$. Fig. \ref{fig3} however shows that the diversity gain offered by the modified S-R STBC scheme is strictly greater than $4$.
{\begin{figure}[htbp]
\centering
\includegraphics[totalheight=3.1in,width=3.6in]{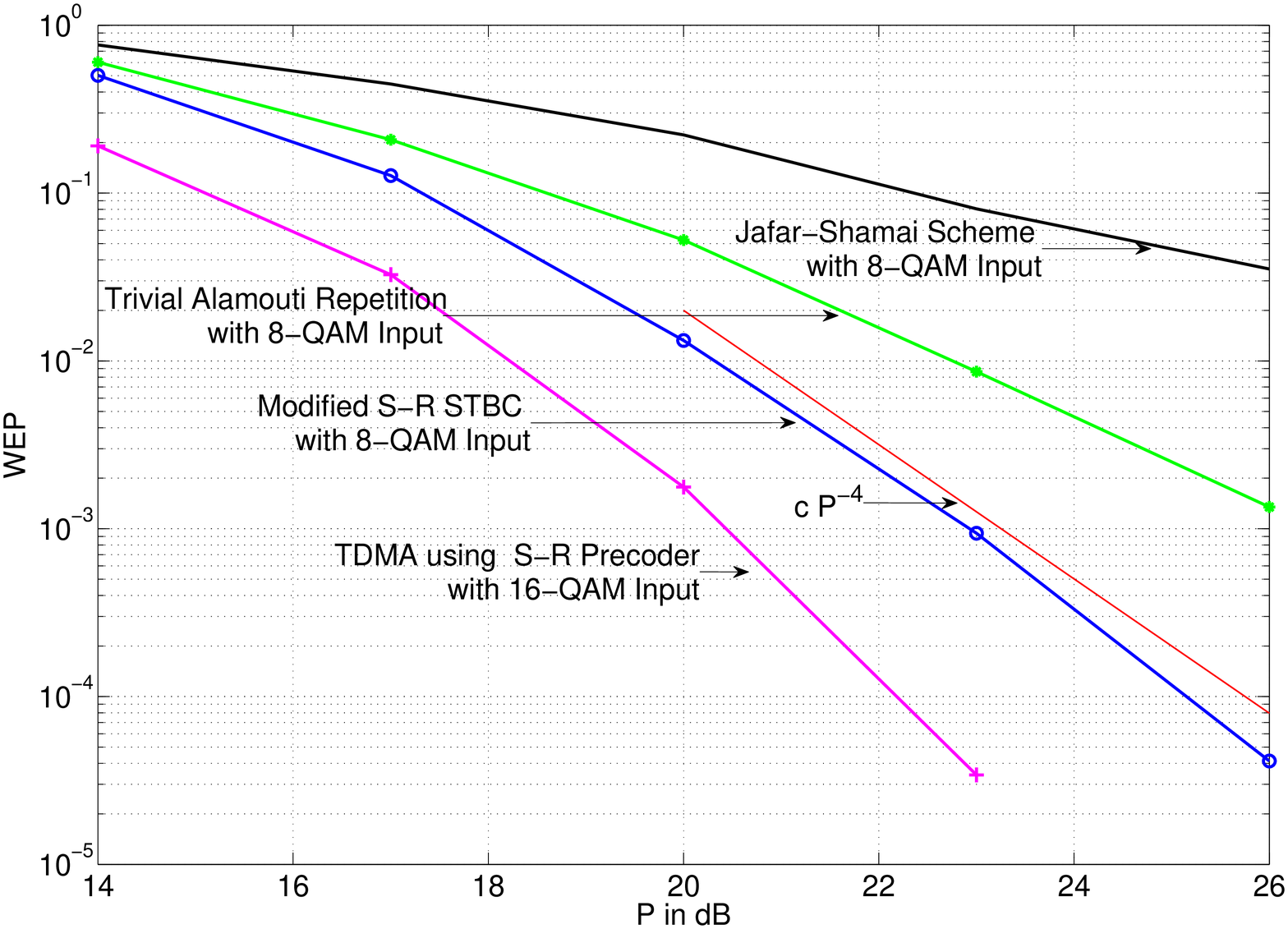}
\caption{WEP of modified S-R STBC scheme with $8$-QAM input versus WEP of TDMA using S-R Precoder with $16$-QAM at a spectral efficiency of $8$ bits/sec/Hz per transmitter.}
\label{fig3}
\end{figure}}

The precoding technique in \cite{SrR2} however applies only to square QAM constellations which can be written as a Cartesian product of two PAM constellations. Also, optimizing the precoder to minimize (\ref{eqn-single_user_dmin}) for a single user MIMO system while assuring a particular diversity gain for arbitrary constellations is an open problem. In such a scenario, there is no guarantee that TDMA with some precoding would surely outperform the LJJ scheme for $(2,2,2)-X$ Network or the modified S-R STBC scheme for $(2,2,4)-X$ Network at all values of $P$. Moreover, the TDMA scheme achieves integer rates of $1$ cspcu and $2$ cspcu per transmitter in the $(2,2,2)-X$ Network and the $(2,2,4)-X$ Network respectively whereas the LJJ scheme and the modified S-R STBC scheme achieve fractional rates of $\frac{4}{3}$ cspcu and $\frac{8}{3}$ cspcu per transmitter respectively. So, equating the spectral efficiencies for WEP comparison requires the use of higher QAM sizes than what are used in Fig. \ref{fig2} and Fig. \ref{fig3}.  Further, the decoding complexity, even with sphere decoding, is enormous for higher constellation sizes for the LJJ scheme and the modified S-R STBC scheme. Hence, it is not feasible to compare the WEP performance of the LJJ scheme and the modified S-R STBC scheme with the TDMA scheme using S-R Precoding with higher QAM sizes.

We now compare the WEP performance of the modified S-R STBC scheme with the Jafar-Shamai scheme. We shall also observe the importance of selection of $\theta$ so that  $\triangle {X'_{ij}}^{k_1,k_2}$ is full rank for all $k_1,k_2$ and for all $i,j$. Let us call the scheme that uses $\theta=0$ and $\phi=\frac{\text{tan}^{-1}(2)}{2}$ as the trivial Alamouti repetition scheme. It is easy to observe that, with the same constellation used for all the symbols and when $\theta=0$, $\triangle {X'_{ij}}^{k_1,k_2}$ is not full rank for some $k_1,k_2$, for all $i,j$. Thus, Theorem \ref{thm4} is not applicable for this case. For convenience, the scheme that uses $\theta=\frac{\pi}{4}$ and $\phi=\frac{\text{tan}^{-1}(2)}{2}$ is termed as the modified S-R STBC scheme. In the Jafar-Shamai scheme, MAP decoding of the desired symbols from (\ref{eqn-JS_Aligned}) reduces to ML decoding of all the symbols at high values of $P$ \cite{RaG}, i.e.,

{\small\begin{align*}
&(\hat{X}_{11},\hat{X}_{21}) = \arg \hspace{-0.7cm}\min_{X_{11},X_{21},X_{12}+X_{22}}\left|\left|Y'_1-\sqrt{\frac{3P}{2}}\left(H'_{11}V_{11}X_{11}\right.\right.\right.\\
&\hspace{3.2cm}\left.\left.\left.+H'_{21}V_{21}X_{21}\right)+H'_{11}V_{12}\left(X_{12}+X_{22}\right)\right|\right|^2\\
&(\hat{X}_{12},\hat{X}_{22}) = \arg \hspace{-0.7cm}\min_{X_{12},X_{22},X_{11}+X_{21}}\left|\left|Y'_2-\sqrt{\frac{3P}{2}}\left(H'_{12}V_{12}X_{12}\right.\right.\right.\\
&\hspace{3.2cm}\left.\left.\left.+H'_{22}V_{22}X_{22}\right)+H'_{12}V_{11}\left(X_{11}+X_{21}\right)\right|\right|^2.
\end{align*}}Hence, as noted in \cite{RaG} sphere decoder can be used when QAM constellations are employed. Fig. \ref{fig3} and Fig. \ref{fig4} compare the WEP of the modified S-R STBC scheme with that of the trivial Alamouti repetition scheme and the Jafar-Shamai scheme, using $8$-QAM inputs and $4$-QAM inputs respectively. It can observed from Fig. \ref{fig3} and Fig. \ref{fig4} that the modified S-R STBC scheme clearly outperforms the trivial Alamouti repetition scheme and the Jafar-Shamai scheme. 
{\begin{figure}[htbp]
\centering
\includegraphics[totalheight=3.1in,width=3.6in]{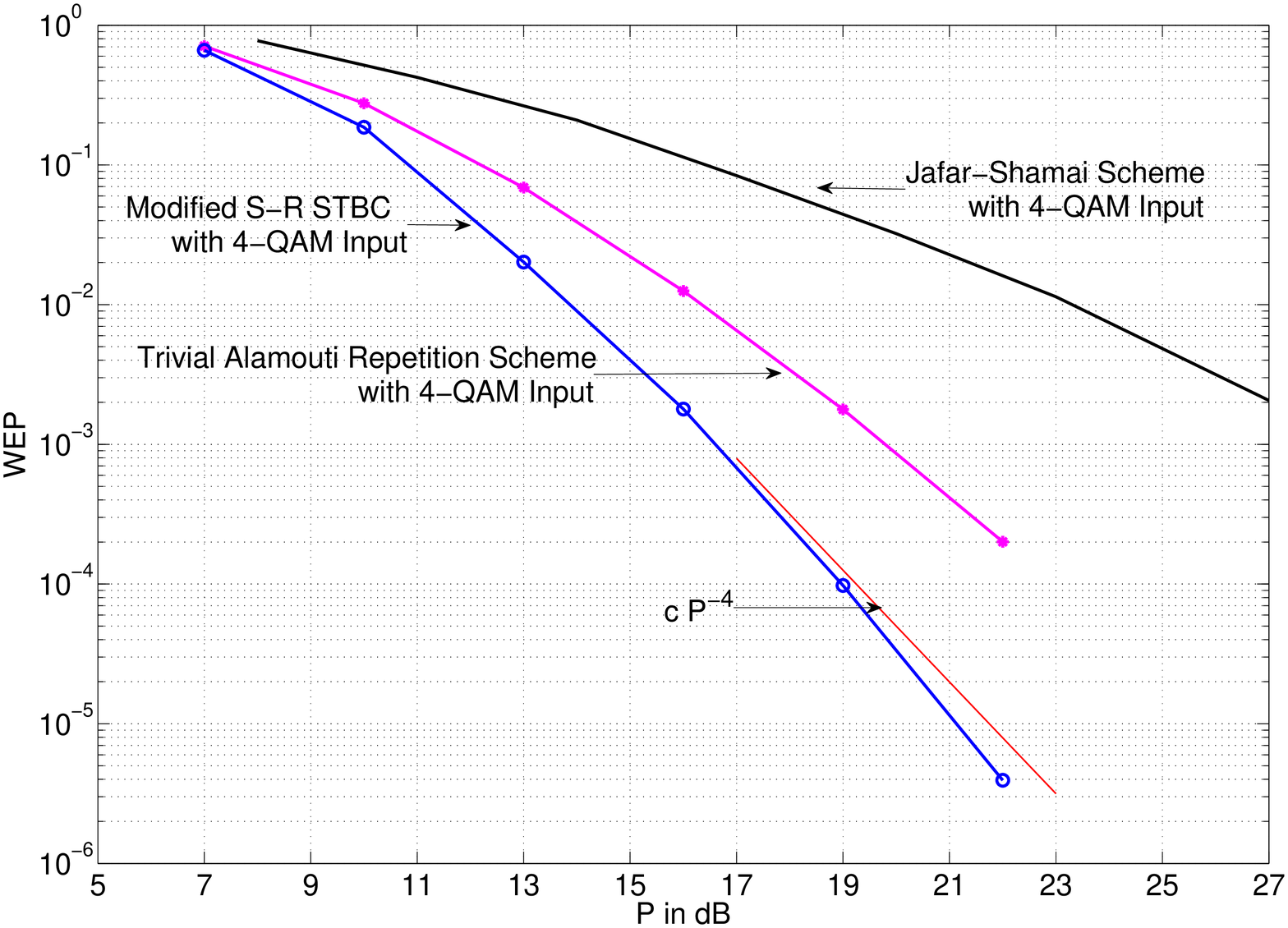}
\caption{WEP of modified S-R STBC scheme versus Trivial Alamouti Repetition and Jafar-Shamai scheme with $4$-QAM input at a spectral efficiency of $\frac{16}{3}$ bits/sec/Hz per transmitter.}
\label{fig4}
\end{figure}}

{\begin{figure}[htbp]
\centering
\includegraphics[totalheight=3.1in,width=3.6in]{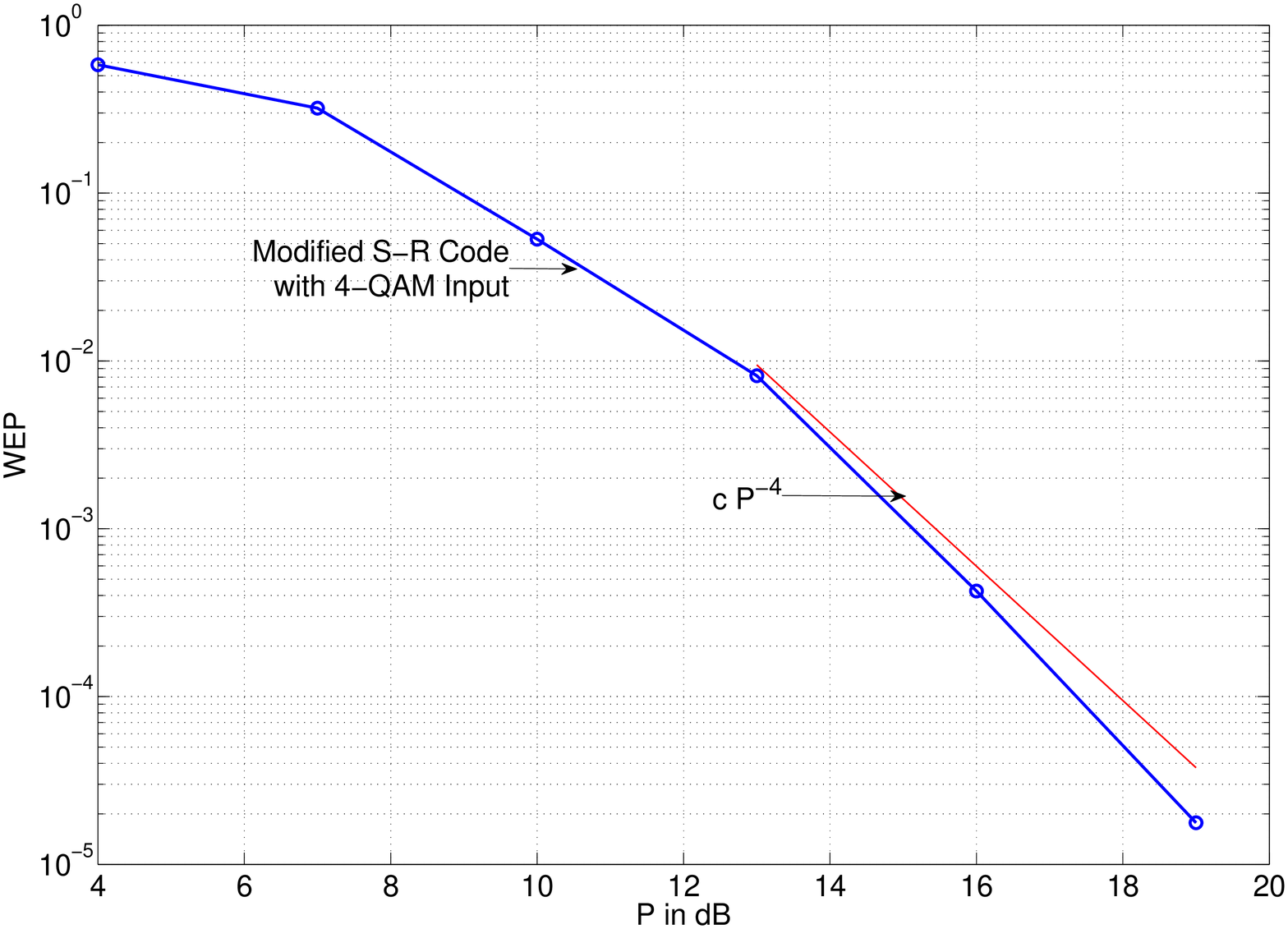}
\caption{WEP of modified S-R STBC with BPSK input  at a spectral efficiency of $\frac{8}{3}$ bits/sec/Hz per transmitter.}
\label{fig5}
\end{figure}}

In all the figures, the modified S-R scheme is found to offer a diversity gain that is strictly greater than $4$. For additional clarity, the modified S-R scheme is plotted with BPSK inputs in Fig. \ref{fig5} which also shows that the diversity gain is strictly greater than $4$. Intuitively, the modified S-R scheme achieves full receive diversity while the transmit diversity is affected because of precoding.

\section{Conclusion} \label{sec6}
A new transmission scheme based on the S-R STBC was proposed for the $(2,2,4)-X$ Network as an extension of the LJJ scheme for the $(2,2,2)-X$ Network. The proposed transmission scheme was proven to achieve the sum DoF of the $(2,2,4)-X$ Network which is equal to $\frac{16}{3}$. In comparison with the Jafar-Shamai scheme, the proposed scheme has reduced CSIT requirements. Moreover, the proposed scheme was proven to achieve a diversity gain of four when finite constellation inputs are used. Simulation results confirmed that the proposed scheme performs better in terms of error probability when compared with the Jafar Shamai scheme.

An interesting question that remains to be addressed is - what is the maximum diversity gain achievable at a sum rate of $\frac{8}{3}$ cspcu and  $\frac{16}{3}$ cspcu in the $(2,2,2)-X$ Network and $(2,2,4)-X$ Network respectively? Another interesting direction of research is to identify similar schemes for other values of $M$ so that the sum DoF of $(2,2,M)-X$ Network can be achieved with lesser CSIT requirement compared to the Jafar-Shamai scheme along with full receive diversity gain when finite constellation inputs are used.

\appendices
 \section{Proof of Theorem \ref{thm3}}
\label{appen_thm3}
\begin{IEEEproof}
We do not attempt a direct proof for showing that the matrix $R$ is full rank as the determinant expression is complicated. Instead, we shall prove it using some information theoretic inequalities and exploit the interference cancellation procedure given in (\ref{eqn-define_ytilde}). First, note that the entries of the noise vector $FN''_1$ in (\ref{eqn-eff_ch_mat}) are i.i.d. with the first and last entries being distributed as ${\cal CN}(0,1)$, and the second and third entries being distributed as ${\cal CN}(0,2)$. We now consider a modified system model where, a Gaussian noise vector $N'''_1$ is added to (\ref{eqn-eff_ch_mat}) so that the entries of the effective noise vector in (\ref{eqn-eff_ch_mat}) shall be distributed as i.i.d. ${\cal CN}(0,2)$. Henceforth in this proof, (\ref{eqn-eff_ch_mat}) is considered to be an equation with this extra noise $N'''_1$ added. The vector $\tilde{y}$ in (\ref{eqn-define_ytilde}) is also assumed to be derived from the vector in (\ref{eqn-eff_ch_mat}) with the noise $N'''_1$ added.
Define the vector $\tilde{z}$, similar to $\tilde{y}$ in (\ref{eqn-define_ytilde}), as 

 {\footnotesize
\begin{align}
\nonumber
 \tilde{z} &= \frac{\tilde{H}^H_1\tilde{y}_1}{\left|\left|\tilde{H}_1(1,:)\right|\right|^2}-\frac{\tilde{H}^H_2\tilde{y}_2}{\left|\left|\tilde{H}_2(1,:)\right|\right|^2}\\
\label{eqn-define_ztilde}
&=\sqrt{\frac{3P}{4}}\underbrace{\left[\frac{\tilde{H}^H_1\tilde{G}_1}{\left|\left|\tilde{H}_1(1,:)\right|\right|^2}-\frac{\tilde{H}^H_2\tilde{G}_2}{\left|\left|\tilde{H}_2(1,:)\right|\right|^2}\right]}_{\tilde{G}}\begin{bmatrix}
              x^1_{21}\\
              x^2_{21}
	   \end{bmatrix}\\
\nonumber
&\hspace{3cm}+\frac{\tilde{H}^H_1\tilde{n}_1}{\left|\left|\tilde{H}_1(1,:)\right|\right|^2}-\frac{\tilde{H}^H_2\tilde{n}_2}{\left|\left|\tilde{H}_2(1,:)\right|\right|^2}
\end{align}} We now have the following useful lemmas.
\begin{lemma} \label{lem1}
 The vector norms {\small$\left|\left|\tilde{G}_1(1,:)\right|\right|$} and {\small$\left|\left|\tilde{H}_1(1,:)\right|\right|$} are almost surely non-zero.
\end{lemma}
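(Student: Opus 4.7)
My plan is to unpack the definitions and reduce both non-vanishing claims to the almost-sure non-vanishing of a single row of a matrix with continuously distributed entries.

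First I would observe that by (\ref{eqn-precoders}),
\[
\hat{H} = H_{11}V_{11} = \frac{1}{\sqrt{\text{tr}\left(H^{-1}_{12}H^{-H}_{12}\right)}}\,H_{11}H^{-1}_{12},
\]
and similarly $\hat{G} = H_{21}V_{21}$ is a positive scalar multiple of $H_{21}H^{-1}_{22}$. Since $H_{12}$ and $H_{22}$ are drawn from a continuous distribution, each is invertible almost surely, so the scaling constants in the denominators are finite and strictly positive almost surely; consequently $\hat H$ and $\hat G$ are well-defined with probability one. From the definitions of $\tilde H_1$ and $\tilde G_1$ we read off
\[
\bigl\|\tilde H_1(1,:)\bigr\|^2 = |\hat h_{11}|^2+|\hat h_{12}|^2 = \bigl\|\hat H(1,:)\bigr\|^2,
\qquad
\bigl\|\tilde G_1(1,:)\bigr\|^2 = \bigl\|\hat G(1,:)\bigr\|^2,
\]
so it suffices to show $\hat H(1,:)\neq 0$ and $\hat G(1,:)\neq 0$ almost surely.

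Next, writing $\hat H(1,:) = c_1\, H_{11}(1,:)\,H^{-1}_{12}$ with $c_1>0$ a.s., and noting that right-multiplication by the a.s.-invertible matrix $H^{-1}_{12}$ preserves non-zeroness of a row vector, we get the equivalence
\[
\hat H(1,:) = 0 \iff H_{11}(1,:) = 0.
\]
But $H_{11}(1,:)=(h_{11,11},h_{11,12})\in\mathbb{C}^2$ has independent, continuously distributed real and imaginary parts, so the event that all four real coordinates are simultaneously zero has Lebesgue measure zero and hence probability zero. The identical argument, with $(H_{11},H_{12})$ replaced by $(H_{21},H_{22})$, shows $\hat G(1,:)\neq 0$ almost surely.

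There is no real obstacle here; the only thing to be careful about is handling the two almost-sure events (invertibility of $H_{12}$ and $H_{22}$, and non-zeroness of the first rows of $H_{11}$ and $H_{21}$) simultaneously, which is immediate because the intersection of finitely many probability-one events still has probability one. Combining both conclusions yields the lemma.
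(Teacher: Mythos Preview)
Your argument is correct, and it is genuinely different from the paper's. The paper proves the slightly stronger fact that a single entry, $\hat g_{11}$, is almost surely non-zero: it writes $\hat g_{11}=h_{21_{11}}v_{21_{11}}+h_{21_{12}}v_{21_{21}}$, conditions on $V_{21}$ and $h_{21_{12}}$, and argues that if $v_{21_{11}}\neq 0$ then the continuously distributed, independent variable $h_{21_{11}}$ makes the sum non-zero almost surely; it then has to separately argue that $v_{21_{11}}\neq 0$ almost surely by tracing through what $v_{21_{11}}=0$ would force on the entries of $H_{22}$. Your route sidesteps all of this: you observe that $\hat H(1,:)$ is the first row of $H_{11}$ right-multiplied by an almost-surely invertible matrix, so it vanishes only when $H_{11}(1,:)$ itself vanishes, which is a measure-zero event for continuous entries. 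This is cleaner and more robust (it does not rely on any particular entry of the precoder being non-zero), at the cost of yielding only the row-nonvanishing statement rather than the entrywise one---but that is exactly what the lemma asks for.
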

\begin{proof}
 We shall prove the statement only for {\small$\left|\left|\tilde{G}_1(1,:)\right|\right|$} and the proof for {\small$\left|\left|\tilde{H}_1(1,:)\right|\right|$} is similar. To prove this, it is sufficient to prove that $\hat{g}_{11}$ is non-zero almost surely. Note that $\hat{g}_{11}$ is given by
\begin{align*}
 \hat{g}_{11}=h_{21_{11}}v_{11_{11}}+h_{21_{12}}v_{11_{21}}.
\end{align*}Conditioned on the random matrix $V_{11}$ and the random variable $h_{21_{12}}$, if $v_{11_{11}}$ is non-zero then, $\hat{g}_{11}$ is non-zero almost surely. This is because the continuously distributed random variable $h_{21_{11}}$ is independent of $V_{11}$ and $h_{21_{12}}$, and $v_{11_{11}}$ just scales $h_{21_{11}}$ while $h_{21_{12}}v_{11_{21}}$ shifts the mean. Thus, if $v_{11_{11}}$ is almost surely non-zero then, $\hat{g}_{11}$ is also non-zero almost surely. This is explained as follows. Suppose that $v_{11_{11}}$ is zero with some non-zero probability, and consider such events. Since {\small$V_{11}=\frac{H^{-1}_{12}}{\sqrt{\text{tr}\left(H^{-1}_{12}H^{-H}_{12}\right)}}$}, we have

{\footnotesize
\begin{align} \label{eqn-g11_non_zero}
 \sqrt{\text{tr}\left(H^{-1}_{12}H^{-H}_{12}\right)} \begin{bmatrix}
                                                       h_{12_{11}} & h_{12_{12}}\\
						       h_{12_{21}} & h_{12_{22}}
                                                      \end{bmatrix}\begin{bmatrix}
                                                       0  & h^{(-1)}_{12_{12}}\\
						       h^{(-1)}_{12_{21}} & h^{(-1)}_{12_{22}}
                                                      \end{bmatrix}=\begin{bmatrix}
                                                       1  & 0\\
						       0 & 1
                                                      \end{bmatrix}
\end{align}}where, $h^{(-1)}_{12_{ij}}$ denotes the $ij^{\text{th}}$ element of $H^{-1}_{12}$. Clearly, {\small$\text{tr}\left(H^{-1}_{12}H^{-H}_{12}\right)$} is non-zero almost surely because {\small$\text{tr}\left(H^{-1}_{12}H^{-H}_{12}\right)=0$} would require all the entries of $H^{-1}_{12}$ to be equal to zero. From (\ref{eqn-g11_non_zero}) we have, 

{\small
\begin{align*}
 h_{12_{12}}h^{(-1)}_{12_{21}}=\frac{1}{\sqrt{\text{tr}\left(H^{-1}_{12}H^{-H}_{12}\right)}}, \text{ and} ~h_{12_{22}}h^{(-1)}_{12_{21}}=0.
\end{align*}}This necessitates that {\small$h_{12_{22}}=0$} as $h_{12_{12}} \neq 0$ almost surely. However, {\small$h_{12_{22}}\neq 0$} almost surely. Thus, $v_{11_{11}}$ cannot be equal to zero with non-zero probability. Hence, $\hat{g}_{11}$ is also non-zero almost surely.
\end{proof}

\begin{lemma}
\label{lem2}
 If at least one of the entries in both the matrices $\tilde{H}$ (defined in (\ref{eqn-define_ytilde})) and $\tilde{G}$ (defined in (\ref{eqn-define_ztilde})) are non-zero then, the matrix $R$ is full rank.
\end{lemma}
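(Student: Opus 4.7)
The plan is to exploit the two interference-cancellation combinations already constructed in (\ref{eqn-define_ytilde}) and (\ref{eqn-define_ztilde}): each of them is a linear operator that, when applied to the noise-free part of $R\mathbf{x}$, annihilates two of the four columns of $R$ and leaves a $2\times 2$ Alamouti residue acting on either $(x^1_{11},x^2_{11})^T$ or $(x^1_{21},x^2_{21})^T$. My strategy is to suppose $R\mathbf{x}=0$ for $\mathbf{x}=(x^1_{11},x^2_{11},x^1_{21},x^2_{21})^T$ and use these two projections to force both halves of $\mathbf{x}$ to vanish.

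First, apply the linear map defining $\tilde y$ in (\ref{eqn-define_ytilde}) to $R\mathbf{x}$. By construction this map cancels the $\hat g$-columns of $R$, so the result equals $\sqrt{3P/4}\,\tilde{H}(x^1_{11},x^2_{11})^T$, and the equation $R\mathbf{x}=0$ therefore gives $\tilde{H}(x^1_{11},x^2_{11})^T=0$. The four ingredients $\tilde{H}_1,\tilde{H}_2,\tilde{G}_1,\tilde{G}_2$ are Alamouti-structured by definition, and the set of $2\times 2$ Alamouti matrices is closed under hermitian conjugation, product, and real linear combination; hence $\tilde{H}$ is itself Alamouti, that is, of the form $\bigl(\begin{smallmatrix} a & b \\ \bar b & -\bar a\end{smallmatrix}\bigr)$. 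Such a matrix has determinant $-(|a|^2+|b|^2)$, so the hypothesis that some entry of $\tilde{H}$ is non-zero implies $\tilde{H}$ is invertible, giving $(x^1_{11},x^2_{11})^T=0$. Exactly the same argument with (\ref{eqn-define_ztilde}) in place of (\ref{eqn-define_ytilde}) and $\tilde{G}$ in place of $\tilde{H}$ then yields $(x^1_{21},x^2_{21})^T=0$, whence $\mathbf{x}=0$ and $R$ has full rank.

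The main obstacle is purely algebraic bookkeeping: one has to verify that the class of Alamouti matrices really is closed under the operations that build $\tilde{H}$ and $\tilde{G}$, so that the determinant-as-sum-of-squares formula applies, and one has to check implicitly that the scalar normalizations $\|\tilde G_1(1,:)\|$, $\|\tilde G_2(1,:)\|$, $\|\tilde H_1(1,:)\|$, $\|\tilde H_2(1,:)\|$ are non-zero so that the combinations in (\ref{eqn-define_ytilde}) and (\ref{eqn-define_ztilde}) are well defined; these norms are non-zero as soon as the blocks $\tilde G_i, \tilde H_i$ are non-zero, which is guaranteed by the same kind of continuous-distribution argument used in Lemma \ref{lem1}. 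Once these housekeeping points are in place, invertibility of $R$ reduces to the two scalar non-vanishing hypotheses stated in the lemma, and the rest is a one-line linear-algebra argument on the kernel of $R$.
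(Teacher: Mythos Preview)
Your argument is correct, and it takes a genuinely different route from the paper's proof. The paper does not argue directly on the kernel of $R$; instead it runs an information-theoretic contradiction. Assuming the symbols are i.i.d.\ ${\cal CN}(0,1)$, it bounds
\[
I\!\left[x^1_{11},x^2_{11},x^1_{21},x^2_{21};FY''_1\right]\ \ge\ I\!\left[x^1_{11},x^2_{11};\tilde y\right]+I\!\left[x^1_{21},x^2_{21};\tilde z\right]
\]
via the chain rule and the data-processing inequality, shows each summand grows as $2\log P+o(\log P)$ once $\tilde H$ and $\tilde G$ are full-rank Alamouti matrices, and then observes that if $R$ had rank $d<4$ the left side would scale only as $d\log P$, which is impossible. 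Your approach replaces all of this machinery by a clean linear-algebra observation: the very interference-cancellation maps that produce $\tilde y$ and $\tilde z$ are linear operators $L_1,L_2:\mathbb C^4\to\mathbb C^2$ with $L_1R=[\,\tilde H\ \ 0\,]$ and $L_2R=[\,0\ \ \tilde G\,]$, and the Alamouti structure makes ``non-zero entry'' equivalent to ``invertible'', so $R\mathbf{x}=0$ forces $\mathbf{x}=0$. This is strictly more elementary and also slightly sharper, since it does not need to introduce Gaussian inputs or the extra whitening noise the paper adds for the mutual-information computation. One small clean-up: you do not need to invoke Lemma~\ref{lem1} or any continuous-distribution argument here, because the hypothesis of the lemma already presupposes that $\tilde H$ and $\tilde G$ are well defined, hence that the four normalizations $\|\tilde G_i(1,:)\|$ and $\|\tilde H_i(1,:)\|$ are non-zero; the statement is purely deterministic.
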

\begin{proof}
Note that $\tilde{H}$ and $\tilde{G}$ are Alamouti matrices. If at least one of the entries in both these matrices are non-zero then, both the matrices are full rank. Using chain rule for mutual information and data processing inequality, for any fixed value of channel matrices, we have

{\footnotesize
\begin{align}
\nonumber
& I\left[x^1_{11},x^2_{11},x^1_{21},x^2_{21};FY''_1\right]\\
\nonumber
& = I\left[x^1_{11},x^2_{11};FY''_1\right] + I\left[x^1_{21},x^2_{21};FY''_1\left|\right.x^1_{11},x^2_{11}\right]\\
\label{eqn-data_proc_ineq}
& \geq I\left[x^1_{11},x^2_{11};\tilde{y}\right] + I\left[x^1_{21},x^2_{21};\tilde{z}\left|\right.x^1_{11},x^2_{11}\right]\\
\nonumber
& = I\left[x^1_{11},x^2_{11};\tilde{y}\right] + I\left[x^1_{21},x^2_{21};\tilde{z}\right].
\end{align}}Assume that the symbols $x^1_{11},x^2_{11},x^1_{21}$, and $x^2_{21}$ are distributed as i.i.d. ${\cal CN}(0,1)$. Note that the covariance matrix of the noise vectors {\small$\frac{\tilde{G}^H_1\tilde{n}_1}{\left|\left|\tilde{G}_1(1,:)\right|\right|^2}-\frac{\tilde{G}^H_2\tilde{n}_2}{\left|\left|\tilde{G}_2(1,:)\right|\right|^2}$} and {\small$\frac{\tilde{H}^H_1\tilde{n}_1}{\left|\left|\tilde{H}_1(1,:)\right|\right|^2}-\frac{\tilde{H}^H_2\tilde{n}_2}{\left|\left|\tilde{H}_2(1,:)\right|\right|^2}$} are given by {\small$2\left(\frac{1}{\left|\left|\tilde{G}_1(1,:)\right|\right|^2}+\frac{1}{\left|\left|\tilde{G}_2(1,:)\right|\right|^2}\right)I_2$} and {\small$2\left(\frac{1}{\left|\left|\tilde{H}_1(1,:)\right|\right|^2}+\frac{1}{\left|\left|\tilde{H}_2(1,:)\right|\right|^2}\right)I_2$} respectively. From Lemma \ref{lem1}, these covariance matrices are well defined, invertible and hence, can be whitened. Now, if $\tilde{H}$ and $\tilde{G}$ are full rank then, following exactly the same steps in Section $3.2$ of \cite{Tel} we have\footnote{The effective channel matrices used while following the steps in Section $3.2$ of \cite{Tel} should be $\Sigma^{-\frac{1}{2}}_1\tilde{H}$ and $\Sigma^{-\frac{1}{2}}_2\tilde{G}$, where $\Sigma_1$ and $\Sigma_2$ are the covariance matrices of the noise vectors associated with $\tilde{H}$ and $\tilde{G}$ respectively.},
\begin{align}
\nonumber
& I\left[x^1_{11},x^2_{11};\tilde{y}\right] = 2~log(P) + o(log(P)),\text{ and} \\
\label{eqn-2logP}
&I\left[x^1_{21},x^2_{21};\tilde{z}\right]= 2~log(P)+ o(log(P)).
\end{align}
Suppose that the matrix $R$ is not full rank. Then, following the same steps in Section $3.2$ of \cite{Tel} we have, 
\begin{align}
\label{eqn-<4logP}
 I\left[x^1_{11},x^2_{11},x^1_{21},x^2_{21};FY''_1\right] =d~log(P)+ o(log(P))
\end{align}where, $d=\text{rank}(R)$ is strictly less than $4$. However, from (\ref{eqn-data_proc_ineq}) and (\ref{eqn-2logP}) we have, {\small$I\left[x^1_{11},x^2_{11},x^1_{21},x^2_{21};FY''_1\right] \geq 4~log(P)+o(log(P))$}. This contradicts (\ref{eqn-<4logP}) which states that {\small$I\left[x^1_{11},x^2_{11},x^1_{21},x^2_{21};FY''_1\right]$} grows as $d~log(P)$, where $d<4$. Hence, the matrix $R$ is full rank.
\end{proof}

Lemma \ref{lem2} states that, in order to prove Theorem \ref{thm3}, it is sufficient to show that both the matrices $\tilde{H}$ and $\tilde{G}$ contain at least one non-zero entry almost surely. We shall prove this statement only for $\tilde{H}$ and the proof for $\tilde{G}$ is similar. 

Since $\tilde{G}_1$  is an Alamouti matrix, its columns form a basis for the two dimensional vector space $\mathbb{C}^2$ over the field of complex numbers. Hence, the first column of $\tilde{H}_1$ can be written as a linear combination of the columns of $\tilde{G}_1$. The entries of the first column of $\tilde{G}^H_1 \tilde{H}_1$ are equal to the dot product of the two columns of $\tilde{G}_1$ with the first column of $\tilde{H}_1$. Hence, the first column of $\tilde{G}^H_1 \tilde{H}_1$ is a non-zero vector iff $\tilde{G}_1$ and $\tilde{H}_1$ are both non-zero matrices. From Lemma \ref{lem1}, this is true almost surely. Let \footnote{Note that the set of Alamouti matrices are closed with respect to matrix multiplication \cite{NSC}.}
\begin{align*}
 \tilde{G}^H_1 \tilde{H}_1 = \begin{bmatrix}
                              a & b\\
			      \overline{b} & -\overline{a}.
                             \end{bmatrix}
\end{align*}where, {\small$a=\overline{\hat{g}_{11}}\hat{h}_{11}+{\hat{g}_{12}}\overline{\hat{h}_{12}}$}, and {\small$b=\overline{\hat{g}_{11}}\hat{h}_{12}-{\hat{g}_{12}}\overline{\hat{h}_{11}}$}. Since the first column of $\tilde{G}^H_1 \tilde{H}_1$ is a non-zero vector almost surely, one of the following must be true almost surely: $(1)$ $a \neq 0, b=0$, $(2)$ $a = 0, b \neq 0$, or $(3)$ $a \neq 0, b \neq 0$. We now consider the case  $a \neq 0, b=0$ to prove that $\tilde{H}$ contains at least one non-zero entry almost surely.

Since $\hat{H}=H_{11}V_{11}$, we have

{\small \begin{align}
	  \nonumber
         &a=\overline{\hat{g}_{11}}\left(h_{11_{11}}v_{11_{11}}+h_{11_{12}}v_{11_{21}}\right) + \hat{g}_{12}\left(\overline{h_{11_{11}}}\overline{v_{11_{12}}}+\overline{h_{11_{12}}}\overline{v_{11_{22}}}\right)\\
	   \nonumber
	&= h^R_{11_{11}}\left(\overline{\hat{g}_{11}}v_{11_{11}}+\hat{g}_{12}\overline{v_{11_{12}}}\right) + jh^I_{11_{11}}\left(\overline{\hat{g}_{11}}v_{11_{11}}-\hat{g}_{12}\overline{v_{11_{12}}}\right) \\
	  \label{eqn-a}
        & + h^R_{11_{12}}\left(\overline{\hat{g}_{11}}v_{11_{21}}+\hat{g}_{12}\overline{v_{11_{22}}}\right) + jh^I_{11_{12}}\left(\overline{\hat{g}_{11}}v_{11_{21}}-\hat{g}_{12}\overline{v_{11_{22}}}\right).
        \end{align}}Clearly, if $a \neq 0$ then, at least one among the coefficients of {\small$h^R_{11_{11}}, h^I_{11_{11}}, h^R_{11_{12}}, h^I_{11_{12}}$} in (\ref{eqn-a}) is non-zero. Without loss of generality, consider the coefficient of $h^R_{11_{11}}$ to be non-zero. Now, let 
\begin{align*}
 \tilde{G}^H_2 \tilde{H}_2 = \begin{bmatrix}
                              c & d\\
			      \overline{d} & -\overline{c}
                             \end{bmatrix}
\end{align*}where, {\small$c=\overline{\hat{g}_{21}}\hat{h}_{21}+{\hat{g}_{22}}\overline{\hat{h}_{22}}$}, and {\small$d=\overline{\hat{g}_{21}}\hat{h}_{22}-{\hat{g}_{22}}\overline{\hat{h}_{21}}$}. Substituting for $\hat{h}_{21}$ and $\hat{h}_{22}$, $c$ can be written as

{\small\begin{align}\nonumber
        &c= h^R_{11_{21}}\left(\overline{\hat{g}_{21}}v_{11_{11}}+\hat{g}_{22}\overline{v_{11_{12}}}\right) + jh^I_{11_{21}}\left(\overline{\hat{g}_{21}}v_{11_{11}}-\hat{g}_{22}\overline{v_{11_{12}}}\right) \\
	  \label{eqn-c}
        & + h^R_{11_{22}}\left(\overline{\hat{g}_{21}}v_{11_{21}}+\hat{g}_{22}\overline{v_{11_{22}}}\right) + jh^I_{11_{22}}\left(\overline{\hat{g}_{21}}v_{11_{21}}-\hat{g}_{22}\overline{v_{11_{22}}}\right).
       \end{align}} The first row, first column entry of $\tilde{H}$ is given by {\small$\frac{a}{\left|\left|\tilde{G}_1(1,:)\right|\right|^2}-\frac{c}{\left|\left|\tilde{G}_2(1,:)\right|\right|^2}$}. Note that $a$ depends on the random variable {\small$h^R_{11_{11}}$} while $c$ depends on another independent set of random variables {\small$h^R_{11_{21}}, h^I_{11_{21}}, h^R_{11_{22}}$}, and {\small$h^I_{11_{22}}$}. Since {\small$h^R_{11_{11}}$} is continuously distributed and independent of other random variables involved in (\ref{eqn-a}) and (\ref{eqn-c}), {\small$\frac{a}{\left|\left|\tilde{G}_1(1,:)\right|\right|^2}-\frac{c}{\left|\left|\tilde{G}_2(1,:)\right|\right|^2}$} is non-zero almost surely. Hence, the first row, first column entry of $\tilde{H}$ is non-zero almost surely conditioned on the fact that $a \neq 0$. Similarly it can be proved for the other cases, i.e., $a = 0, b \neq 0$, and $a \neq 0, b \neq 0$, that at least one entry of $\tilde{H}$ is non-zero almost surely. The proof that at least one entry of $\tilde{G}$ is non-zero almost surely is similar to that for $\tilde{H}$. Thus, at least one entry of the matrices $\tilde{H}$ and $\tilde{G}$ are non-zero almost surely.  Hence, from Lemma \ref{lem2}, the matrix $R$ is also full rank.
\end{IEEEproof}

\section{Proof of Lemma \ref{lem3}}
\label{appen_lem3}
\begin{IEEEproof}
{\begin{figure*}
\begin{align} \label{eqn-prod_delta_mats}
\triangle C \triangle A^{-1} \triangle B=  e^{j2\theta} \frac{1}{|a_1|^2+|a_2|^2}\begin{bmatrix}\overline{a_1}a_3 a_5 - a_2 a_4 a_5 - a_1 a_4 \overline{a_6} - \overline{a_2} a_3 \overline{a_6}  &  -\overline{a_1} \overline{a_4} a_5 - a_2 \overline{a_3} a_5 - a_1 \overline{a_3} \overline{a_6} + \overline{a_2} \overline{a_4} \overline{a_6}\\
a_1 a_4 \overline{a_5} + \overline{a_2} a_3 \overline{a_5} + \overline{a_1} a_3 a_6 - a_2 a_4 a_6  &  a_1 \overline{a_3} \overline{a_5} - \overline{a_2} \overline{a_4} \overline{a_5} - \overline{a_1} \overline{a_4} a_6 - a_2 \overline{a_3} a_6
\end{bmatrix}
\end{align}
\hrule
\end{figure*}
}
 We shall prove the statement for $\triangle X'_{11}$\footnote{We have suppressed the superscript $k_1,k_2$ for convenience.} (i.e., $i=j=1$) and the proof for other $\triangle X_{ij}$ are similar. Define the sub-matrices of $X'_{11}$ by 

{\small\begin{align*}
&A= \begin{bmatrix}
        x^{1R}_{i1}+jx^{3I}_{i1} & -x^{2R}_{i1}+jx^{4I}_{i1}\\
	x^{2R}_{i1}+jx^{4I}_{i1} &  x^{1R}_{i1}-jx^{3I}_{i1}
	\end{bmatrix}\\
&B=\begin{bmatrix}
        e^{j\theta}\left(x^{5R}_{i1}+jx^{7I}_{i1}\right) & e^{j\theta}\left(-x^{6R}_{i1}+jx^{8I}_{i1}\right) \\
	e^{j\theta}\left(x^{6R}_{i1}+jx^{8I}_{i1}\right) & e^{j\theta}\left(x^{5R}_{i1}-jx^{7I}_{i1}\right) \\
	\end{bmatrix}\\
&C=\begin{bmatrix}
       e^{j\theta}\left(x^{7R}_{i1}+jx^{5I}_{i1}\right) & e^{j\theta}\left(-x^{8R}_{i1}+jx^{6I}_{i1}\right)\\
	e^{j\theta}\left(x^{8R}_{i1}+jx^{6I}_{i1}\right) & e^{j\theta}\left(x^{7R}_{i1}-jx^{5I}_{i1}\right)\\
	\end{bmatrix}\\
&D=\begin{bmatrix}
       x^{3R}_{i1}+jx^{1I}_{i1} & -x^{4R}_{i1}+jx^{2I}_{i1}\\
	x^{4R}_{i1}+jx^{2I}_{i1} &  x^{3R}_{i1}-jx^{1I}_{i1} \\
	\end{bmatrix}
\end{align*}}so that {\small$X'_{11}=\begin{bmatrix}
                                  A & B\\
				  C & D
                                 \end{bmatrix}$}. Now, consider the difference matrices $\triangle {X'_{11}}$ such that $\triangle A \neq \mathbf{0}$, $\triangle B \neq \mathbf{0}$, $\triangle C \neq \mathbf{0}$, and $\triangle D \neq \mathbf{0}$. The determinant of $\triangle {X'_{11}}$ can be written as

{\small \begin{align} \label{eqn-det_delta_x}
         &\left|\triangle {X'_{11}}\right|=\left|\triangle A\right| \left|\triangle D-\triangle C \triangle A^{-1} \triangle B \right|
        \end{align}}Denote the entries of $\triangle A$, $\triangle B$, $\triangle C$, and  $\triangle D$ by 
{\small \begin{align*}
         &\triangle A = \begin{bmatrix}
                         a_1 & -\overline{a_2}\\
			 a_2  & \overline{a_1}
                        \end{bmatrix},~~\triangle B = e^{j\theta}\begin{bmatrix}
                         a_3 & -\overline{a_4}\\
			 a_4  & \overline{a_3}
                        \end{bmatrix}\\
	 &\triangle C =  e^{j\theta}\begin{bmatrix}
                         a_5 & -\overline{a_6}\\
			 a_6  & \overline{a_5}
                        \end{bmatrix},~~
	 \triangle D = \begin{bmatrix}
                         a_7 & -\overline{a_8}\\
			 a_8  & \overline{a_7}
                        \end{bmatrix}.
        \end{align*}}Now, we have {\small$\triangle A^{-1} = \frac{1}{|a_1|^2+|a_2|^2}\begin{bmatrix}
                                                 \overline{a_1} & -a_2\\
						 \overline{a_2} & a_1
                                                \end{bmatrix}$}, and the product matrix {\small$\triangle C \triangle A^{-1} \triangle B$} is given by (\ref{eqn-prod_delta_mats}) (at the top of the next page). Note that the product matrix {\small$\triangle C \triangle A^{-1} \triangle B$} cannot be a zero matrix because each matrix in the product is an Alamouti matrix. 

Clearly, {\small$\left|\triangle A\right| \neq 0$}. From (\ref{eqn-det_delta_x}), for {\small$\left|\triangle {X'_{11}}\right|$} to be non-zero, there must exist $\theta$ such that {\small$\left|\triangle D-\triangle C \triangle A^{-1} \triangle B \right|$} is non-zero. We now prove the existence of such a $\theta$. Denote the elements of the product matrix {\small$\triangle C \triangle A^{-1} \triangle B$} by {\small$e^{j2\theta}\begin{bmatrix}
a & -\overline{b}\\
b & \overline{a}                                                                                                                                                                                                                                                                                                                                                                                                                                                                                                                                                                                                                                                                                                      \end{bmatrix}
$}. We now have

{\footnotesize \begin{align*}
         &\left|\triangle D-\triangle C \triangle A^{-1} \triangle B \right| = \left|\begin{bmatrix}
                                                                               a_7-e^{j2\theta}a & -\overline{a_8}+e^{j2\theta}\overline{b}\\
									       a_8-e^{j2\theta}b & \overline{a_7}-e^{j2\theta}\overline{a}
                                                                              \end{bmatrix}\right|\\
        &=|a_7|^2+ |a_8|^2-e^{j2\theta}\left(a\overline{a_7}+\overline{a}a_7+b\overline{a_8}+\overline{b}a_8\right)+ e^{j4\theta}\left(|a|^2+|b|^2\right).
        \end{align*}}The above equation is quadratic in $e^{2j\theta}$ since {\small $\triangle C \triangle A^{-1} \triangle B \neq \mathbf{0}$}. Therefore, {\small$\left|\triangle {X'_{11}}\right|$} can be equal to zero for at most two distinct values of $e^{2j\theta}$. Since there are infinite possible choices for $e^{2j\theta}$ while there are only a finite number of difference matrices, there always exists $\theta$ such that {\small$\left|\triangle {X'_{11}}^{k_1,k_2}\right| \neq 0$}, for all $k_1,k_2$.

Now,  consider the difference matrices $\triangle {X'_{11}}^{k_1,k_2}$ such that at least one among the difference sub-matrices $\triangle A$, $\triangle B$, $\triangle C$, and $\triangle D$ is a zero matrix, for $k_1 \neq k_2$. Since we assumed that each symbol $x^{k}_{11}$ takes values from finite constellations whose CPD is non-zero, $\triangle A=0$ iff $\triangle D=0$, and $\triangle B=0$ iff $\triangle C=0$  \cite{ZaR}. If $\triangle A=\triangle D=0$ then, $\triangle {X'_{11}}^{k_1,k_2}$ is full rank as $k_1 \neq k_2$ implies that $\triangle B \neq 0$, and $\triangle C \neq0$. Similarly $\triangle {X'_{11}}^{k_1,k_2}$ is full rank when $\triangle B=\triangle C=0$, for $k_1 \neq k_2$.
\end{IEEEproof}

\section{Proof of Theorem \ref{thm4}}
\label{appen_thm4}
\begin{IEEEproof}
Consider a modified system where a Gaussian noise matrix is added to (\ref{eqn-Y'_1}) so that the entries of the effective noise matrix in (\ref{eqn-Y'_1}) are distributed as i.i.d. ${\cal CN}(0,2)$. The average pair-wise error probability for this modified system is given by

{\small\begin{align} 
\nonumber
& P_e\left({\left({X'_{11}}^{k_1},{X'_{21}}^{k_2}\right)\rightarrow\left({X'_{11}}^{k'_1},{X'_{21}}^{k'_2}\right)}\right)= \\
\label{eqn-PEP1}
& \mathbb{E}\left[Q\left(P'\sqrt{\left|\left|H_{11}V_{11}\triangle X_{11} + H_{21}V_{21}\triangle X_{21}\right|\right|^2/2}\right)\right]
\end{align}}where, {\small$\triangle X_{11}={X'_{11}}^{k_1}-{X'_{11}}^{k'_1}$}, {\small$\triangle X_{21}={X'_{21}}^{k_2}-{X'_{21}}^{k'_2}$}, and $P'=\frac{3P}{4}$. Note that either {\small$\triangle X_{11} \neq 0, \triangle X_{21}=0$} or {\small$\triangle X_{11} = 0, \triangle X_{21}\neq 0$} or {\small$\triangle X_{11} \neq 0, \triangle X_{21}\neq 0$}. We shall prove the statement of the theorem only for the case {\small$\triangle X_{11} \neq 0$}, and the proof for the rest of the cases are similar. The Frobenius norm in (\ref{eqn-PEP1}) can be re-written as

{\small \begin{align}
\nonumber
&  \left|\left|H_{11}V_{11}\triangle X_{11} + H_{21}V_{21}\triangle X_{21}\right|\right|^2 =\\
\nonumber
& \left[ \underbrace{\left(\triangle X_{11}^T V_{11}^T \otimes I_4 \right)vec(H_{11})+\left(\triangle X_{21}^T V_{21}^T \otimes I_4 \right)vec(H_{21})}_{H'}\right]^H \times\\
\label{eqn-H'}
& \left[ \left(\triangle X_{11}^T V_{11}^T \otimes I_4 \right)vec(H_{11})+\left(\triangle X_{21}^T V_{21}^T \otimes I_4 \right)vec(H_{21})\right].
\end{align}}Note that, conditioned on $H_{12}$ and $H_{22}$, the vector {\small$H'$} defined in (\ref{eqn-H'}) is a Gaussian vector with mean zero and covariance matrix $K$ given by 

{\small \begin{align}  
\label{eqn-K'}
&K=\\
\nonumber
&\left(\underbrace{\left(\triangle X_{11}^T V_{11}^T\right)\left(\triangle X_{11}^T V_{11}^T\right)^H\hspace{-0.2cm}+\hspace{-0.1cm}\left(\triangle X_{21}^T V_{21}^T\right)\left(\triangle X_{21}^T V_{21}^T\right)^H}_{K'}\right) \otimes I_4.
\end{align}}{\begin{figure*}
 \footnotesize\begin{align}
\label{eqn-div1}
&\mathbb{E}\left[Q\left(\sqrt{P'\left|\left|H_{11}V_{11}\triangle X_{11} + H_{21}V_{21}\triangle X_{21}\right|\right|^2/2}\right)\right]=\mathbb{E}_{H_{12},H_{22}}\left[\mathbb{E}_{H_{11},H_{21}|{H_{12},H_{22}}}\left[Q\left(\sqrt{P'\left|\left|H_{11}V_{11}\triangle X_{11} + H_{21}V_{21}\triangle X_{21}\right|\right|^2/2}\right)\right]\right]\\
\label{eqn-div2}
&=\mathbb{E}_{H_{12},H_{22}}\left[\mathbb{E}_{H''|{H_{12},H_{22}}}\left[Q\left(\sqrt{P'\frac{H''^H H''}{2}}\right)\right]\right]=\mathbb{E}_{H_{12},H_{22}}\left[\mathbb{E}_{H_1,H_2,H_3,H_4|{H_{12},H_{22}}}\left[Q\left(\sqrt{P'\frac{\sum_{i=1}^{4}\left|\left|K'^{\frac{1}{2}}H_i\right|\right|^2}{2}}\right)\right]\right]\\
\label{eqn-div4}
&=\mathbb{E}_{H_{12},H_{22}}\left[\mathbb{E}_{H_1,H_2,H_3,H_4|{H_{12},H_{22}}}\left[Q\left(\sqrt{P'\frac{\sum_{i=1}^{4}\text{tr}\left(H_i^H {K'^{\frac{1}{2}}}^H K'^{\frac{1}{2}} H_i \right)}{2}}\right)\right]\right]\\
\label{eqn-div5}
&=\mathbb{E}_{H_{12},H_{22}}\left[\mathbb{E}_{H_1,H_2,H_3,H_4|{H_{12},H_{22}}}\left[Q\left(\sqrt{P'\frac{\sum_{i=1}^{4}\text{tr}\left(H_i^H \Lambda H_i \right)}{2}}\right)\right]\right]\\
\label{eqn-div8}
&=\mathbb{E}_{H_{12},H_{22}}\left[\mathbb{E}_{H'_1,H'_2,H'_3,H'_4|{H_{12},H_{22}}}\left[Q\left(\sqrt{P'\frac{\sum_{i=1}^{4}\sum_{j=1}^{4} \lambda_j(K') |H_i(j)|^2 }{2}}\right)\right]\right]\\
\label{eqn-div9}
&\leq\mathbb{E}_{H_{12},H_{22}}\left[\mathbb{E}_{H'_1,H'_2,H'_3,H'_4|{H_{12},H_{22}}}\left[Q\left(\sqrt{P'\frac{\sum_{i=1}^{4}\sum_{j=1}^{4} \lambda_j(K'_1) |H_i(j)|^2 }{2}}\right)\right]\right]\\
\label{eqn-div10}
&=\mathbb{E}_{H_{12}}\left[\mathbb{E}_{H_1,H_2,H_3,H_4|{H_{12}}}\left[Q\left(\sqrt{P'\frac{\sum_{i=1}^{4}\text{tr}\left(H_i^H \Lambda_1 H_i \right)}{2}}\right)\right]\right]\\
\label{eqn-div11}
&=\mathbb{E}_{H_{12}}\left[\mathbb{E}_{H_1,H_2,H_3,H_4|{H_{12}}}\left[Q\left(\sqrt{P'\frac{\sum_{i=1}^{4}\text{tr}\left(\left(V_1^HH_i\right)^H {\Lambda_1} \left(V_1^H H_i\right) \right)}{2}}\right)\right]\right]\\
\label{eqn-div12}
&=\mathbb{E}_{H_{12}}\left[\mathbb{E}_{H_1,H_2,H_3,H_4|{H_{12}}}\left[Q\left(\sqrt{P'\frac{\sum_{i=1}^{4}\text{tr}\left(H_i^H \left(V_1 {\Lambda_1}^{\frac{1}{2}}U_1^H\right) ~\left(U_1 {\Lambda_1}^{\frac{1}{2}} V_1^H\right) H_i \right)}{2}}\right)\right]\right]\\
\label{eqn-div13}
&=\mathbb{E}_{H_{12}}\left[\mathbb{E}_{H_1,H_2,H_3,H_4|{H_{12}}}\left[Q\left(\sqrt{P'\frac{\sum_{i=1}^{4}\left|\left|{K'_1}^{\frac{1}{2}}H_i\right|\right|^2}{2}}\right)\right]\right]\\
\label{eqn-div14}
&=\mathbb{E}_{H_{12}}\left[\mathbb{E}_{H_1,H_2,H_3,H_4|{H_{12}}}\left[Q\left(\sqrt{P'\frac{\sum_{i=1}^{4} H_i^H{V^T_{11}}^H \left(\triangle X_{11} \triangle X_{11}^H\right)^TV^T_{11}H_i}{2}}\right)\right]\right]\\
\label{eqn-div15}
&=\mathbb{E}_{H_{12}}\left[\mathbb{E}_{H_1,H_2,H_3,H_4|{H_{12}}}\left[Q\left(\sqrt{P'\frac{\sum_{i=1}^{4} H_i^H\left(\left(V_{11}U_{\triangle X_{11}}\right)^T\right)^H \Lambda_{\triangle X_{11}} \left(V_{11}U_{\triangle X_{11}}\right)^T H_i}{2}}\right)\right]\right]\\
\label{eqn-div16}
&\leq \mathbb{E}_{H_{12}}\left[\mathbb{E}_{H_1,H_2,H_3,H_4|{H_{12}}}\left[Q\left(\sqrt{P' \lambda_4\left(\triangle X_{11}\right)\frac{\sum_{i=1}^{4} H_i^H\left(\left(V_{11}U_{\triangle X_{11}}\right)^T\right)^H \left(V_{11}U_{\triangle X_{11}}\right)^T H_i}{2}}\right)\right]\right]\\
\label{eqn-div17}
&= \mathbb{E}_{H_{12}}\left[\mathbb{E}_{H_1,H_2,H_3,H_4|{H_{12}}}\left[Q\left(\sqrt{P' \lambda_4\left(\triangle X_{11}\right)\frac{\sum_{i=1}^{4} H_i^H\left(V^T_{11}\right)^H V^T_{11} H_i}{2}}\right)\right]\right]\\
\label{eqn-div18}
&= \mathbb{E}_{H_{12}}\left[\mathbb{E}_{H_1,H_2,H_3,H_4|{H_{12}}}\left[Q\left(\sqrt{P' \lambda_4\left(\triangle X_{11}\right)\frac{\sum_{i=1}^{4} H_i^H U_{V_{11}} \Lambda_{V_{11}} U^H_{V_{11}} H_i}{2}}\right)\right]\right]\\
\label{eqn-div19}
&= \mathbb{E}_{H_{12}}\left[\mathbb{E}_{H_1,H_2,H_3,H_4|{H_{12}}}\left[Q\left(\sqrt{P' \lambda_4\left(\triangle X_{11}\right)\frac{\sum_{i=1}^{4} H_i^H U_{V_{11}} \Lambda_{V_{11}} U^H_{V_{11}} H_i}{2}}\right)\right]\right]\\
\label{eqn-div20}
&= \mathbb{E}_{H_{12}}\left[\mathbb{E}_{H_1,H_2,H_3,H_4|{H_{12}}}\left[Q\left(\sqrt{P' \lambda_4\left(\triangle X_{11}\right)\frac{\sum_{i=1}^{4} \left(U_{V_{11}}^H H_i\right)^H  \Lambda_{V_{11}} U^H_{V_{11}} H_i}{2}}\right)\right]\right]\\
\label{eqn-div21}
&\stackrel{(a)}{\leq} \mathbb{E}_{H_{12}}\left[\frac{1}{\prod_{j=1}^4\left(1+\frac{3P\lambda_4(\triangle X_{11})\lambda_j(V_{11})}{8}\right)^4}\right] \stackrel{(b)}{<} \frac{1}{\left(1+\frac{3P\lambda_4(\triangle X_{11})}{32}\right)^4} \stackrel{(c)}{\approx} cP^{-4}
\end{align}
\hrule
\end{figure*}}In other words, when the successive elements of {\small$H'$} are grouped in blocks of four entries each, the blocks are distributed i.i.d. as Gaussian matrix with zero mean and covariance matrix given by $K'$ which is defined in the R.H.S of (\ref{eqn-K'}). Since $K'$ is a positive semi-definite Hermitian matrix, let the eigen decomposition of the matrix $K'$ be given by $K'=U \Lambda U^H$ where, $U$ is a $4 \times 4$ unitary matrix formed by the eigen vectors of $K'$, and {\small$\Lambda=\text{diag}\left(\lambda_1(K'),\lambda_2(K'),\lambda_3(K'),\lambda_4(K')\right)$} denotes the matrix whose diagonal entries are ordered eigen values of $K'$ with {\small$\lambda_1(K')\geq\lambda_2(K')\geq\lambda_3(K')\geq\lambda_4(K')\geq 0$}. Denote a square-root matrix of {\small$K'$} by {\small$K'^{\frac{1}{2}}$}, i.e., {\small$K'=K'^{\frac{1}{2}}{K'^{\frac{1}{2}}}^H$} where, {\small$K'^{\frac{1}{2}}=U \Lambda^{\frac{1}{2}}$}. The vector $H'$ is now statistically equivalent to the following vector

{\small\begin{align*}
 H'' = \begin{bmatrix}
       K'^{\frac{1}{2}}H_1\\
        K'^{\frac{1}{2}}H_2\\
	 K'^{\frac{1}{2}}H_3\\
	   K'^{\frac{1}{2}}H_4\\
      \end{bmatrix}
\end{align*}}where, $H_i \in \mathbb{C}^{4 \times 1}$, $i=1,2,3,4$, are Gaussian vectors whose entries are distributed as i.i.d. ${\cal CN}(0,1)$. Now, (\ref{eqn-PEP1}) can be successively re-written as in (\ref{eqn-div1})-(\ref{eqn-div10}) (given at the top of the next page) where, (\ref{eqn-div2}) follows from the statistical equivalence between $H'$ and $H''$, (\ref{eqn-div4}) follows from the fact that $||A||^2= \text{tr}(A^HA)$, and (\ref{eqn-div5}) follows from the definition of $K'^{\frac{1}{2}}$. Now, define {\small$K'_1=\left(\triangle X_{11}^T V_{11}^T\right)\left(\triangle X_{11}^T V_{11}^T\right)^H$} and {\small$K'_2=\left(\triangle X_{21}^T V_{21}^T\right)\left(\triangle X_{21}^T V_{21}^T\right)^H$} so that {\small$K'=K'_1+K'_2$}. Let $\lambda_j(K'_1)$ denote the eigen values of $K'_1$ in non-increasing order from $j=1$ to $j=4$ . Using Weyl's inequalities \footnote{Weyl's inequalities relate the eigen values of sum two of Hermitian matrices with the eigen values of the individual matrices.} (see Section III.2, pp. $62$ of \cite{Bha}), we have {\small$\lambda_j(K'_1) \leq \lambda_j(K')$}, $j=1,2,3,4$. Thus, we have the inequality (\ref{eqn-div9}) from (\ref{eqn-div8}) where, $H_i(j)$ denotes the $j^{\text{th}}$ entry of the vector $H_i$. Let $K'_1=U_1 \Lambda_1 U_1^H$ denote the eigen decomposition of $K'_1$ where, {\small$\Lambda_1=\text{diag}(\lambda_1(K'_1),\lambda_1(K'_2),\lambda_1(K'_3),\lambda_1(K'_4))$}, and $U_1$ is a unitary matrix composed of eigen vectors of $K'_1$. Equation (\ref{eqn-div10}) follows from the fact that the argument inside the Q-function in (\ref{eqn-div9}) is independent of $H_{22}$. Let the singular value decomposition of $\triangle X^T_{11} V^T_{11}$ be given by $\triangle X^T_{11} V^T_{11}=U_1 \Lambda_1^{\frac{1}{2}} V^H_1$. Note that $\triangle X^T_{11} V^T_{11}$ is a square root matrix of $K'_1$ and hence, we shall denote this by ${K'_1}^{\frac{1}{2}}$. Now, (\ref{eqn-div11}) follows from the fact that the distribution of $H'_i$ is invariant to multiplication by the unitary matrix $V_1$, and using straight-forward simplifications we obtain (\ref{eqn-div14}). Now, let the eigen decomposition of {\small$\triangle X_{11}\triangle X^H_{11}$} be given by {\small$\triangle X_{11}\triangle X^H_{11}=U_{\triangle X_{11}} \Lambda_{\triangle X_{11}} U^H_{\triangle X_{11}}$} where, {\small$\Lambda_{\triangle X_{11}}$} denotes the eigen value matrix whose eigen values in non-increasing order are given by $\lambda_j\left(\triangle X_{11}\right)$, $j=1,2,3,4$. Note that $\lambda_4\left(\triangle X_{11}\right)>0$ as $\theta$ was chosen such that $\triangle X_{11}$ is full rank. Now, substituting this eigen decomposition in (\ref{eqn-div14}) we have (\ref{eqn-div15}). The inequality (\ref{eqn-div16}) follows from the fact that $\lambda_4\left(\triangle X_{11}\right)$ is the minimum eigen value of $\triangle X_{11}$, and (\ref{eqn-div17}) follows from {\small$V_{11}$} being equal to {\small$\frac{H^{-1}_{12}}{\sqrt{\text{tr}\left(H^{-1}_{12}H^{-H}_{12}\right)}}$} and the fact that the distribution of $V_{11}$ is invariant to multiplication by the unitary matrix $U_{\triangle X_{11}}$ (because $H_{12}$ is Gaussian distributed). Using the eigen decomposition of $\left(V^T_{11}\right)^H V_{11}=U_{V_{11}} \Lambda_{V_{11}} U_{V_{11}}$ and some straight-forward techniques involved in evaluating diversity as in \cite{TSC}, we obtain (\ref{eqn-div21})$(a)$. Now, note that the eigen values of $V_{11}$ are given by 
\begin{align*}
\lambda_j\left(V_{11}\right)=\frac{\frac{1}{\lambda_{5-j}\left(H_{11}\right)}}{\sum_{j=1}^{4}\frac{1}{\lambda_j\left(H_{11}\right)}} 
\end{align*}where, $\lambda_j\left(H_{11}\right)$ denote the eigen values of $H_{11}$ in non-increasing order from $j=1$ to $j=4$. Thus, $\lambda_j\left(V_{11}\right)$ can be lower bounded as
\begin{align*}
 &\lambda_j\left(V_{11}\right) \geq \frac{\frac{1}{\lambda_{5-j}\left(H_{11}\right)}}{\sum_{j=1}^{4}\frac{1}{\lambda_4\left(H_{11}\right)}} = \frac{\lambda_4\left(H_{11}\right)}{4\lambda_{5-j}\left(H_{11}\right)}.
\end{align*}For $j=1$, the above lowerbound is equal to $\frac{1}{4}$, and for $j=2,3,4$ the above lowerbound is in turn trivially lowerbounded by $0$. Hence, we obtain the inequality in (\ref{eqn-div21})$(b)$, and the approximation in (\ref{eqn-div21})$(c)$ holds good at high values of $P$, where the constant $c=\frac{32^4}{3^4\lambda^4_4\left(\triangle X_{11}\right)}$.
\end{IEEEproof}

\section{Proof of Theorem \ref{thm5}}
\label{appen_thm5}
\begin{IEEEproof}
 We shall employ an interference cancellation procedure similar to that used in the LJJ scheme in Section \ref{subsec2} to achieve symbol-by-symbol decoding. The symbols $x^k_{ij}$ are assumed to be distributed as i.i.d. ${\cal CN}(0,1)$. We now need to decode $X'_{11}$ and $X'_{21}$ from (\ref{eqn-Y'_1}) with symbol-by-symbol decoding.
We shall decode the first two and the last two columns of $X'_{i1}$ independently. 

Consider a modified system where a Gaussian noise matrix $N''_1$ is added to (\ref{eqn-Y'_1}) so that the entries of the effective noise matrix in (\ref{eqn-Y'_1}) are distributed as i.i.d. ${\cal CN}(0,2)$. The matrix $Y'_1$ defined in (\ref{eqn-Y'_1}) is now taken to be a matrix with the noise $N''_1$ added. Denote the effective channel matrices from Tx-$1$ and Tx-$2$ to Rx-$1$ by $\hat{H}=H_{11}V_{11}$ and $\hat{G}=H_{21}V_{21}$ respectively. Define the matrices $\tilde{H}$ and $\tilde{G}$ by
{\begin{align} \nonumber
 &\tilde{H} = \sqrt{\text{tr}\left(H^{-1}_{12} H^{-H}_{12}\right)}~\hat{H} = H_{11}H^{-1}_{12}\\
\label{eqn-define-H_tilde}
&\tilde{G} = \sqrt{\text{tr}\left(H^{-1}_{22} H^{-H}_{22}\right)}~\hat{G} = H_{21}H^{-1}_{22}.
\end{align}}Define a processed received symbol matrix $Y''_1 \in \mathbb{C}^{4 \times 4}$ by
{
\begin{align*} 
&Y''_1(:,1) = Y'_1(:,1), ~Y''_1(:,2) = \overline{Y'_1(:,2)}\\
&Y''_1(:,3) = Y'_1(:,3), ~Y''_1(:,4) = Y'_1(:,4).
\end{align*}}Now, the first two columns of $Y''_1$ can be re-written as

{\small\begin{align} 
\label{eqn-mod_sys_model}
\begin{bmatrix}
y''_{1_{11}}\\
y''_{1_{12}}\\ 
y''_{1_{21}}\\
y''_{1_{22}}\\
y''_{1_{31}}\\
y''_{1_{32}}\\
y''_{1_{41}}\\
y''_{1_{42}}
\end{bmatrix}
= \begin{bmatrix}        
        H_1 & H_5 & G_1 & G_5\\
	H_2 & H_6 & G_2 & G_6\\
	H_3 & H_7 & G_3 & G_7\\
	H_4 & H_8 & G_4 & G_8
       \end{bmatrix}
\begin{bmatrix}
x'^{1}_{11}\\
x'^{2}_{11}\\
x'^{7}_{11}\\
x'^{8}_{11}\\
x'^{1}_{21}\\
x'^{2}_{21}\\
x'^{7}_{21}\\
x'^{8}_{21}\\
\end{bmatrix}+N'''_1
\end{align}}where, $H_i$ and $G_i$ are defined in (\ref{eqn-define_H_i}) (at the top of the next page), for $i=1,2,\cdots,8$, and $N'''_1\in \mathbb{C}^{8 \times 1}$ is a Gaussian vector whose entries are distributed as i.i.d. ${\cal CN}(0,2)$.
{\begin{figure*} \small
\begin{align}
\nonumber
& H_1=\begin{bmatrix}
     \hat{h}_{11} & \hat{h}_{12}\\
    \overline{\hat{h}_{12}} & -\overline{\hat{h}_{11}}
    \end{bmatrix}, ~ H_2=\begin{bmatrix}
     \hat{h}_{21} & \hat{h}_{22}\\
    \overline{\hat{h}_{22}} & -\overline{\hat{h}_{21}}
    \end{bmatrix}, ~H_3=\begin{bmatrix}
     \hat{h}_{31} & \hat{h}_{32}\\
    \overline{\hat{h}_{32}} & -\overline{\hat{h}_{31}}
    \end{bmatrix}, ~H_4=\begin{bmatrix}
     \hat{h}_{41} & \hat{h}_{42}\\
    \overline{\hat{h}_{42}} & -\overline{\hat{h}_{41}}
    \end{bmatrix},\\
\label{eqn-define_H_i}
& H_5=\begin{bmatrix}
     e^{j\theta}\hat{h}_{13} & e^{j\theta}\hat{h}_{14}\\
    e^{-j\theta}\overline{\hat{h}_{14}} & -e^{-j\theta}\overline{\hat{h}_{13}}
    \end{bmatrix}, ~H_6=\begin{bmatrix}
     e^{j\theta}\hat{h}_{23} & e^{j\theta}\hat{h}_{24}\\
     e^{-j\theta}\overline{\hat{h}_{24}} & -e^{-j\theta}\overline{\hat{h}_{23}}
    \end{bmatrix}, ~H_7=\begin{bmatrix}
     e^{j\theta}\hat{h}_{33} & e^{j\theta}\hat{h}_{34}\\
     e^{-j\theta}\overline{\hat{h}_{34}} & -e^{-j\theta}\overline{\hat{h}_{33}}
    \end{bmatrix}, ~H_8=\begin{bmatrix}
     e^{j\theta}\hat{h}_{43} & e^{j\theta}\hat{h}_{44}\\
     e^{-j\theta}\overline{\hat{h}_{44}} & -e^{-j\theta}\overline{\hat{h}_{43}}
    \end{bmatrix},\\
\nonumber
& G_1=\begin{bmatrix}
     \hat{g}_{11} & \hat{g}_{12}\\
    \overline{\hat{g}_{12}} & -\overline{\hat{g}_{11}}
    \end{bmatrix}, ~ G_2=\begin{bmatrix}
     \hat{g}_{21} & \hat{g}_{22}\\
    \overline{\hat{g}_{22}} & -\overline{\hat{g}_{21}}
    \end{bmatrix}, ~G_3=\begin{bmatrix}
     \hat{g}_{31} & \hat{g}_{32}\\
    \overline{\hat{g}_{32}} & -\overline{\hat{g}_{31}}
    \end{bmatrix}, ~G_4=\begin{bmatrix}
     \hat{g}_{41} & \hat{g}_{42}\\
    \overline{\hat{g}_{42}} & -\overline{\hat{g}_{41}}
    \end{bmatrix},\\
\nonumber
& G_5=\begin{bmatrix}
     e^{j\theta}\hat{g}_{13} & e^{j\theta}\hat{g}_{14}\\
     e^{-j\theta}\overline{\hat{g}_{14}} & -e^{-j\theta}\overline{\hat{g}_{13}}
    \end{bmatrix}, ~G_6=\begin{bmatrix}
     e^{j\theta}\hat{g}_{23} & e^{j\theta}\hat{g}_{24}\\
     e^{-j\theta}\overline{\hat{g}_{24}} & -e^{-j\theta}\overline{\hat{g}_{23}}
    \end{bmatrix}, ~G_7=\begin{bmatrix}
     e^{j\theta}\hat{g}_{33} & e^{j\theta}\hat{g}_{34}\\
     e^{-j\theta}\overline{\hat{g}_{34}} & -e^{-j\theta}\overline{\hat{g}_{33}}
    \end{bmatrix}, ~G_8=\begin{bmatrix}
     e^{j\theta}\hat{g}_{43} & e^{j\theta}\hat{g}_{44}\\
     e^{-j\theta}\overline{\hat{g}_{44}} & -e^{-j\theta}\overline{\hat{g}_{43}}
    \end{bmatrix}.
\end{align}
\hrule
\end{figure*}}The symbols $x'^{k}_{ij}$ are defined in (\ref{eqn-define-x'}).
{\begin{figure*}
\small\begin{align} \nonumber
x'^{1}_{ij}=x^{1R}_{ij}+jx^{3I}_{ij}, ~~x'^{2}_{ij}=x^{2R}_{ij}+jx^{4I}_{ij}, ~~x'^{3}_{ij}=  x^{3R}_{ij}+jx^{1I}_{ij}, ~~x'^{4}_{ij} = -x^{4R}_{ij}+jx^{2I}_{ij}\\
\label{eqn-define-x'}
x'^{5}_{ij}= x^{7R}_{ij}+jx^{5I}_{ij}, ~~x'^{6}_{ij}=-x^{8R}_{ij}+jx^{6I}_{ij}, ~~x'^{7}_{ij}=x^{5R}_{ij}+jx^{7I}_{ij}, ~~x'^{8}_{ij}=-x^{6R}_{ij}+jx^{8I}_{ij}.
\end{align}
\hrule
\end{figure*}}Considering the last two columns of $Y''_1$, an equation similar to (\ref{eqn-mod_sys_model}) involving the symbols $x'^{k}_{i1}$ can be written, for $k=3,4,5,6$ and $i=1,2$. We however avoid it for the sake of brevity. We now proceed to prove that $x'^{1}_{i1}, x'^{2}_{i1}, x'^{7}_{i1}$, and $x'^{8}_{i1}$ can be recovered using interference cancellation as follows.

Let {\small$z_i=\begin{bmatrix}
          y''_{1_{i1}}\\
	  y''_{1_{i2}}
         \end{bmatrix}$}, for $i=1,2,3,4$. The interference cancellation is performed in three steps.

\textit{Step $1$:} Define the symbols obtained by eliminating the symbols $x'^{1}_{21}$ and $x'^{2}_{21}$ from (\ref{eqn-mod_sys_model}) by

{\small\begin{align}
\nonumber
&z'_1=\frac{G^H_2 z_2}{||G_2(1,:)||^2}-\frac{G^H_1 z_1}{||G_1(1,:)||^2} \\
\label{eqn-define_z'}
&z'_2=\frac{G^H_3 z_3}{||G_3(1,:)||^2}-\frac{G^H_1 z_1}{||G_1(1,:)||^2}\\
\nonumber
&z'_3=\frac{G^H_4 z_4}{||G_4(1,:)||^2}-\frac{G^H_1 z_1}{||G_1(1,:)||^2}.
\end{align}}The symbols $z'_1$, $z'_2$, and $z'_3$ can be written as

{\small\begin{align} 
\label{eqn-mod_sys_model_z'}
\begin{bmatrix}
z'_1\\
z'_2\\
z'_3
\end{bmatrix}
= \begin{bmatrix}        
        H'_1 & H'_4 & G'_1\\
	H'_2 & H'_5 & G'_2\\
	H'_3 & H_6 & G'_3
  \end{bmatrix}
\begin{bmatrix}
x'^{1}_{11}\\
x'^{2}_{11}\\
x'^{7}_{11}\\
x'^{8}_{11}\\
x'^{7}_{21}\\
x'^{8}_{21}\\
\end{bmatrix}+W'_1
\end{align}}where, the Alamouti matrices $H'_i\in \mathbb{C}^{2 \times 2}$, for $i=1,2,\cdots 6$, $G'_i\in \mathbb{C}^{2 \times 2}$, for $i=1,2,3$, are defined in (\ref{eqn-define_H'_i_and_G'_i}), and $W'_1\in \mathbb{C}^{6 \times 1}$ denotes the relevant Gaussian noise matrix.
{\begin{figure*} \small
\begin{align}  
\nonumber
&H'_1=\frac{G^H_2 H_2}{||G_2(1,:)||^2}-\frac{G^H_1 H_1}{||G_1(1,:)||^2}, ~H'_2=\frac{G^H_3 H_3}{||G_3(1,:)||^2}-\frac{G^H_1 H_1}{||G_1(1,:)||^2}, ~ H'_3=\frac{G^H_4 H_4}{||G_4(1,:)||^2}-\frac{G^H_1 H_1}{||G_1(1,:)||^2}\\
\label{eqn-define_H'_i_and_G'_i}
&H'_4=\frac{G^H_2 H_6}{||G_2(1,:)||^2}-\frac{G^H_1 H_5}{||G_1(1,:)||^2}, ~H'_5=\frac{G^H_3 H_7}{||G_3(1,:)||^2}-\frac{G^H_1 H_5}{||G_1(1,:)||^2}, ~ H'_6=\frac{G^H_4 H_8}{||G_4(1,:)||^2}-\frac{G^H_1 H_5}{||G_1(1,:)||^2}\\
\nonumber
&G'_1=\frac{G^H_2 G_6}{||G_2(1,:)||^2}-\frac{G^H_1 G_5}{||G_1(1,:)||^2}, ~G'_2=\frac{G^H_3 G_7}{||G_3(1,:)||^2}-\frac{G^H_1 G_5}{||G_1(1,:)||^2}, ~ G'_3=\frac{G^H_4 G_8}{||G_4(1,:)||^2}-\frac{G^H_1 G_5}{||G_1(1,:)||^2}.
\end{align} \hrule
\end{figure*}}

\textit{Step $2$:} Define the signals obtained by eliminating the symbols $x'^{7}_{21}$ and $x'^{8}_{21}$ from $z'_i$ (defined in (\ref{eqn-define_z'})) by

{\small\begin{align}
\nonumber
&z''_1=\frac{G'^H_2 z'_2}{||G'_2(1,:)||^2}-\frac{G'^H_1 z'_1}{||G'_1(1,:)||^2} \\
\label{eqn-define_z''}
&z''_2=\frac{G'^H_3 z'_3}{||G'_3(1,:)||^2}-\frac{G'^H_1 z'_1}{||G'_1(1,:)||^2}.
\end{align}}The symbols $z''_1$, $z''_2$, and $z''_3$ can be written as

{\small\begin{align} 
\label{eqn-mod_sys_model_z''}
\begin{bmatrix}
z''_1\\
z''_2
\end{bmatrix}
= \begin{bmatrix}        
        H''_1 & H''_3\\
	H''_2 & H''_4
  \end{bmatrix}
\begin{bmatrix}
x'^{1}_{11}\\
x'^{2}_{11}\\
x'^{7}_{11}\\
x'^{8}_{11}
\end{bmatrix}+W''_1
\end{align}}where, the Alamouti matrices $H''_i$, for $i=1,2,3,4$, are defined in (\ref{eqn-define_H''_i}), and $W''_1 \in \mathbb{C}^{4 \times 1}$ denotes the relevant Gaussian noise matrix.
{\begin{figure*} \footnotesize
\begin{align}  
\label{eqn-define_H''_i}
&H''_1=\frac{G'^H_2 H'_2}{||G'_2(1,:)||^2}-\frac{G'^H_1 H'_1}{||G'_1(1,:)||^2}, ~H''_2=\frac{G'^H_3 H'_3}{||G'_3(1,:)||^2}-\frac{G'^H_1 H'_1}{||G'_1(1,:)||^2}, ~ H''_3=\frac{G'^H_2 H'_5}{||G'_2(1,:)||^2}-\frac{G'^H_1 H'_4}{||G'_1(1,:)||^2}, ~H''_4=\frac{G'^H_3 H'_6}{||G'_3(1,:)||^2}-\frac{G'^H_1 H'_4}{||G'_1(1,:)||^2}.
\end{align} \hrule
\end{figure*}}

\textit{Step $3$:} Finally, define the signals obtained by eliminating the symbols $x'^{7}_{11}$ and $x'^{8}_{11}$ from $z''_i$ (defined in (\ref{eqn-define_z''})) by

{\small\begin{align}
\nonumber
&z'''_1=\frac{H''^H_3 z'_1}{||H''_3(1,:)||^2}-\frac{H''^H_4 z''_2}{||H''_4(1,:)||^2}\\
\label{eqn-define_z'''}
&=\left[\frac{H''^H_3 H'_1}{||H''_3(1,:)||^2}-\frac{H''^H_4 H''_2}{||H''_4(1,:)||^2}\right]\begin{bmatrix}
x'^{1}_{11}\\
x'^{2}_{11}\end{bmatrix}+W'''_1
\end{align}}where, $W'''_1\in \mathbb{C}^{2 \times 1}$ denotes the relevant Gaussian noise matrix. 

A similar interference cancellation algorithm involving the symbols $x^k_{11}$ and $x^k_{21}$, for $k=3,4,5,6$, can be written starting from the last two columns of $Y''_1$. The proof for decoding these symbols with vanishing probability of error (with respect to the codeword length)  is similar to that for $x^k_{11}$ and $x^k_{21}$, for $k=1,2,7,8$, and hence, we avoid the details. To prove that the proposed scheme achieves a node-to-node DoF of $\frac{4}{3}$ almost surely, it is sufficient to prove that at least one of the first column entries of the Alamouti matrix {\small$\left[\frac{H''^H_3 H''_1}{||H''_3(1,:)||^2}-\frac{H''^H_4 H''_2}{||H''_4(1,:)||^2}\right]$} is non-zero almost surely. This is because if {\small$\left[\frac{H''^H_3 H''_1}{||H''_3(1,:)||^2}-\frac{H''^H_4 H''_2}{||H''_4(1,:)||^2}\right]$} is a non-zero Alamouti matrix then, at least one among the matrices $H''_{4}$ or $H''_{3}$ is a non-zero Alamouti matrix. Hence, if {\small$\begin{bmatrix}
x'^{1}_{11}\\
x'^{2}_{11}                                                                                                                                                                                                                                                                                                                                                                                                                                                                                                                                                                                                                                                                                 \end{bmatrix}$} can be decoded with vanishing probability of error then clearly, from (\ref{eqn-mod_sys_model_z''}), {\small$\begin{bmatrix}x'^{7}_{11}\\
x'^{8}_{11}\end{bmatrix}$} can also be decoded with vanishing probability of error. We shall now prove that the first row, first column entry of {\small$\left[\frac{H''^H_3 H''_1}{||H''_3(1,:)||^2}-\frac{H''^H_4 H''_2}{||H''_4(1,:)||^2}\right]$} is non-zero almost surely.

Substituting for $H'_i$ in (\ref{eqn-define_H''_i}), the matrices $H''_i$ can be written as in (\ref{eqn-rewrite_H''}).
{\begin{figure*}\small
\begin{align}
\nonumber
&H''_3 = \underbrace{\frac{G'^H_2}{||G'_2(1,:)||^2}\frac{G^H_3}{||G_3(1,:)||^2}}_{E_1}H_7 - \underbrace{\frac{G'^H_1}{||G'_1(1,:)||^2}\frac{G^H_2}{||G_2(1,:)||^2}}_{E_2}H_6 + \underbrace{\left(\frac{G'^H_1}{||G'_1(1,:)||^2}- \frac{G'^H_2}{||G'_2(1,:)||^2}\right) \frac{G^H_1}{||G_1(1,:)||^2}}_{E_3} H_5,\\
\nonumber
&H''_1 = \frac{G'^H_2}{||G'_2(1,:)||^2}\frac{G^H_3}{||G_3(1,:)||^2}H_3 - \frac{G'^H_1}{||G'_1(1,:)||^2}\frac{G^H_2}{||G_2(1,:)||^2}H_2 + \left(\frac{G'^H_1}{||G'_1(1,:)||^2}- \frac{G'^H_2}{||G'_2(1,:)||^2}\right) \frac{G^H_1}{||G_1(1,:)||^2} H_1,\\
\label{eqn-rewrite_H''}
&H''_4 = \underbrace{\frac{G'^H_3}{||G'_3(1,:)||^2}\frac{G^H_4}{||G_4(1,:)||^2}}_{F_1}H_8 - \underbrace{\frac{G'^H_1}{||G'_1(1,:)||^2}\frac{G^H_2}{||G_2(1,:)||^2}}_{F_2}H_6 + \underbrace{\left(\frac{G'^H_1}{||G'_1(1,:)||^2}- \frac{G'^H_3}{||G'_3(1,:)||^2}\right) \frac{G^H_1}{||G_1(1,:)||^2}}_{F_3} H_5,\\
\nonumber
&H''_2 = \frac{G'^H_3}{||G'_3(1,:)||^2}\frac{G^H_4}{||G_4(1,:)||^2}H_4 - \frac{G'^H_1}{||G'_1(1,:)||^2}\frac{G^H_2}{||G_2(1,:)||^2}H_2 + \left(\frac{G'^H_1}{||G'_1(1,:)||^2}- \frac{G'^H_3}{||G'_3(1,:)||^2}\right) \frac{G^H_1}{||G_1(1,:)||^2} H_1.
\end{align}
\hrule
\end{figure*}}Define the matrices $E_i \in \mathbb{C}^{2 \times 2}$ and $F_i  \in \mathbb{C}^{2 \times 2}$ as in (\ref{eqn-rewrite_H''}). Denote the entries of the matrices $E_i$ by

{\small\begin{align*} 
 E_1=\begin{bmatrix}
      e_1 & e_2\\
      \overline{e_2} &-\overline{e_1}
     \end{bmatrix}, ~E_2=\begin{bmatrix}
      e_3 & e_4\\
      \overline{e_4} &-\overline{e_3}
     \end{bmatrix}E_3=\begin{bmatrix}
      e_5 & e_6\\
      \overline{e_6} &-\overline{e_5}
     \end{bmatrix}.
\end{align*}}Similarly, define the entries of the matrices $F_i$, $i=1,2,3$. 
Note that the matrices $H''_3$ and $H''_1$ depend on $\hat{h}_{3j}$ through the matrices $H_3$ and $H_7$ whereas $H''_4$ and $H''_2$ do not depend on $\hat{h}_{3j}$, for $j=1,2,3,4$. This crucial observation shall be exploited to show that the first row, first column entry of the matrix {\small$\left[\frac{H''^H_3 H''_1}{||H''_3(1,:)||^2}-\frac{H''^H_4 H''_2}{||H''_4(1,:)||^2}\right]$} is non-zero. The first row, first column entries of {\small$H''^H_3 H''_1$} and {\small$H''^H_4 H''_3$} are given in (\ref{eqn-1row_1col_IC1}) and (\ref{eqn-1row_1col_IC2}) respectively.
{\begin{figure*} \scriptsize
\begin{align} \nonumber
&\left[H''^H_3 H''_1\right]_{11}= \left(e_1e^{-j\theta}\overline{\hat{h}_{33}}+\overline{e_2}e^{j\theta}\hat{h}_{34}+e_3e^{-j\theta}\overline{\hat{h}_{23}}+\overline{e_4}e^{j\theta}\hat{h}_{24}+e_5 e^{-j\theta}\overline{\hat{h}_{13}}+\overline{e_6}e^{j\theta}\hat{h}_{14}\right)\left(\overline{e_1}\hat{h}_{31}+e_2 \overline{\hat{h}_{32}}+\overline{e_3}\hat{h}_{21}+e_{4}\overline{\hat{h}_{22}}+\overline{e_5} \hat{h}_{11}+e_6 \overline{\hat{h}_{12}}\right)\\
\label{eqn-1row_1col_IC1} 
&\hspace{1.5cm}+\left(e_2e^{-j\theta}\overline{\hat{h}_{33}}-\overline{e_1}e^{j\theta}\hat{h}_{34}+e_4e^{-j\theta}\overline{\hat{h}_{23}}-\overline{e_3}e^{j\theta}\hat{h}_{24}+e_6 e^{-j\theta}\overline{\hat{h}_{13}}-\overline{e_5}e^{j\theta}\hat{h}_{14}\right)\left(\overline{e_2}\hat{h}_{31}-e_1 \overline{\hat{h}_{32}}+\overline{e_4}\hat{h}_{21}-e_{3}\overline{\hat{h}_{22}}+\overline{e_6} \hat{h}_{11}-e_5 \overline{\hat{h}_{12}}\right)\\
\nonumber
&\left[H''^H_4 H''_2\right]_{11}= \left(f_1e^{-j\theta}\overline{\hat{h}_{43}}+\overline{f_2}e^{j\theta}\hat{h}_{44}+f_3e^{-j\theta}\overline{\hat{h}_{23}}+\overline{f_4}e^{j\theta}\hat{h}_{24}+f_5 e^{-j\theta}\overline{\hat{h}_{13}}+\overline{f_6}e^{j\theta}\hat{h}_{14}\right)\left(\overline{f_1}\hat{h}_{41}+f_2 \overline{\hat{h}_{42}}+\overline{f_3}\hat{h}_{21}+f_{4}\overline{\hat{h}_{22}}+\overline{f_5} \hat{h}_{11}+f_6 \overline{\hat{h}_{12}}\right)\\
\label{eqn-1row_1col_IC2} 
&\hspace{1.5cm}+\left(f_2e^{-j\theta}\overline{\hat{h}_{43}}-\overline{f_1}e^{j\theta}\hat{h}_{44}+f_4 e^{-j\theta}\overline{\hat{h}_{23}}-\overline{f_3}e^{j\theta}\hat{h}_{24}+f_6 e^{-j\theta}\overline{\hat{h}_{13}}-\overline{f_5}e^{j\theta}\hat{h}_{14}\right)\left(\overline{f_2}\hat{h}_{41}-f_1 \overline{\hat{h}_{42}}+\overline{f_4}\hat{h}_{21}-f_{3}\overline{\hat{h}_{22}}+\overline{f_6} \hat{h}_{11}-f_5 \overline{\hat{h}_{12}}\right).
\end{align} \hrule
\end{figure*}}Since {\small$\hat{H}=H_{11}V_{11}$}, the entries of $\hat{H}$ are given by {\small$\hat{h}_{ij}=\sum_{k=1}^{4}h_{11_{ik}}v_{11_{kj}}$}, for $i,j=1,2,3,4$. Conditioning on all the random variables except $h_{11_{31}}$ and substituting for $\hat{h}_{ij}$ in (\ref{eqn-1row_1col_IC1}) we have (\ref{eqn-final1}) which is re-written as (\ref{eqn-final2}), where $c_i$ are functions of the conditioned random variables.
{\begin{figure*}\scriptsize
\begin{align} \nonumber
& \left[H''^H_3 H''_1\right]_{11}=\left(e_1e^{-j\theta}\overline{v_{11_{13}}}\overline{h_{11_{31}}}+\overline{e_2}e^{j\theta}v_{11_{14}}h_{11_{31}}+c_1\right)\left(\overline{e_1}v_{11_{11}}h_{11_{31}}+e_2 \overline{v_{11_{12}}}\overline{h_{11_{31}}}+c_2\right)\\
\label{eqn-final1}
&\hspace{2cm}+\left(e_2 e^{-j\theta}\overline{v_{11_{13}}}\overline{h_{11_{31}}}-\overline{e_1}e^{j\theta}v_{11_{14}}h_{11_{31}}+c_3\right)\left(\overline{e_2}v_{11_{11}}h_{11_{31}}-e_1 \overline{v_{11_{12}}}\overline{h_{11_{31}}}+c_4\right)\\
\nonumber
&=\left( h^R_{11_{31}}\left[e_1e^{-j\theta}\overline{v_{11_{13}}}+\overline{e_2}e^{j\theta}v_{11_{14}}\right]+jh^I_{11_{31}}\left[-e_1 e^{-j\theta}\overline{v_{11_{13}}}+\overline{e_2}e^{j\theta}\overline{v_{11_{14}}}\right]+c_1\right)\left(h^R_{11_{31}}\left[\overline{e_1}v_{11_{11}}+e_2 \overline{v_{11_{12}}}\right]+jh^I_{11_{31}}\left[\overline{e_1}v_{11_{11}}-e_2 \overline{v_{11_{12}}}\right]+c_2\right)\\
\label{eqn-final2}
&+\left(h^R_{11_{31}}\left[e_2 e^{-j\theta}\overline{v_{11_{13}}}-\overline{e_1}e^{j\theta}v_{11_{14}}\right]+jh^I_{11_{31}}\left[-e_2 e^{-j\theta}\overline{v_{11_{13}}}-\overline{e_1}e^{j\theta}v_{11_{14}}\right]+c_3\right)\left(h^R_{11_{31}}\left[\overline{e_2}v_{11_{11}}-e_1 \overline{v_{11_{12}}}\right]+jh^I_{11_{31}}\left[\overline{e_2}v_{11_{11}}+e_1 \overline{v_{11_{12}}}\right]+c_4\right)
\end{align}\hrule
\end{figure*}}Note that the expression of {\small$\left[{H''^H_4 H''_2}\right]_{11}$} in (\ref{eqn-1row_1col_IC2}) and {\small$||H''_4(1,:)||^2$} are independent of $h_{11_{3j}}$, for all $j$. Now, the coefficients of $h^{ {\scriptscriptstyle R}^2}_{11_{31}}$ and $h^{ {\scriptscriptstyle I}^2}_{11_{31}}$ in (\ref{eqn-final2}) are given by $p$ and $-p$ respectively where,
\begin{align} \label{eqn-define_p}
& p=e^{i\theta}\left[\left(|e_1|^2+|e_2|^2\right)\overline{v_{11_{12}}}v_{11_{14}}\right]\\
\nonumber
&\hspace{2cm}+e^{-i\theta}\left[\left(|e_1|^2+|e_2|^2\right)v_{11_{11}}\overline{v_{11_{13}}}\right].
\end{align}If $p$ is non-zero then, clearly {\small$H''_3$} is a non-zero Alamouti matrix and hence, {\small$||H''_3(1,:)||^2$} is also non-zero. We now have the following useful lemmas.
\begin{lemma} \label{lem4}
At least one among $e_1$ and $e_2$ (considered now as random variables) are non-zero almost surely.
\end{lemma}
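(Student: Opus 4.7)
My approach is to exploit the division-algebra structure of $2 \times 2$ Alamouti matrices: such matrices are closed under products and Hermitian conjugation, and a product of Alamouti matrices vanishes iff one of the factors does. Since $E_1$ is itself an Alamouti matrix with first row $(e_1,e_2)$, the assertion ``at least one of $e_1,e_2$ is a.s.\ non-zero'' is equivalent to ``$E_1 \neq 0$ a.s.'' By the definition $E_1 = \frac{G'^H_2}{\|G'_2(1,:)\|^2}\frac{G^H_3}{\|G_3(1,:)\|^2}$, it therefore suffices to establish (i) $\|G_3(1,:)\|^2>0$ and $\|G'_2(1,:)\|^2>0$ almost surely, and (ii) both $G_3 \neq 0$ and $G'_2 \neq 0$ almost surely.

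The first reduction, $G_3 \neq 0$ a.s., I would handle exactly as in Lemma \ref{lem1}. Writing $\hat{G}=H_{21}V_{21}$ and noting that $V_{21}$ depends only on $H_{22}$ and is therefore independent of $H_{21}$, the entry $\hat{g}_{31} = \sum_{k=1}^{4} h_{21_{3k}}v_{21_{k1}}$ becomes, after conditioning on $V_{21}$, a non-trivial affine combination of the independent continuous variables $h_{21_{3k}}$. At least one coefficient $v_{21_{k1}}$ is a.s.\ non-zero, since otherwise the first column of $H_{22}^{-1}$ would vanish --- a probability-zero event. Hence $\hat{g}_{31}\neq 0$ a.s., so $G_3 \neq 0$ a.s.\ and $\|G_3(1,:)\|^2>0$ a.s.; the analogous argument also covers $\|G_1(1,:)\|^2$.

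The second reduction, $G'_2 \neq 0$ a.s., is the crux. Writing $G'_2 = \frac{G_3^H G_7}{\|G_3(1,:)\|^2}-\frac{G_1^H G_5}{\|G_1(1,:)\|^2}$ and using Alamouti closure, $G'_2$ is Alamouti, so it suffices to show its $(1,1)$ entry is a.s.\ non-zero. The key observation is that $G_1,G_5$ depend only on the first row of $\hat{G}$ (hence on the first row of $H_{21}$), whereas $G_3,G_7$ depend only on the third row of $\hat{G}$ (hence on the third row of $H_{21}$); these two rows of $H_{21}$ are mutually independent and jointly independent of $V_{21}$. I would condition on $V_{21}$, on the entire first row of $H_{21}$, and on every entry of the third row of $H_{21}$ except one real scalar, say $h_{21_{31}}^{R}$. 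Under this conditioning the second summand of $(G'_2)_{11}$ is a constant, while the first summand, after clearing the denominator $\|G_3(1,:)\|^2$, is a polynomial of degree at most two in $h_{21_{31}}^{R}$. Following the case-analysis template used in the proof of Theorem \ref{thm3}, the top-degree coefficient of this polynomial is an algebraic expression in the remaining independent continuous variables that can be shown to be a.s.\ non-zero. Hence the numerator vanishes for only finitely many values of $h_{21_{31}}^{R}$ for almost all realizations of the conditioning variables, and the continuity of $h_{21_{31}}^{R}$ forces $(G'_2)_{11}\neq 0$ a.s.

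The main obstacle is precisely this last step: verifying that the isolated polynomial in $h_{21_{31}}^{R}$ is not identically zero with probability one. This amounts to certifying the a.s.\ non-vanishing of an explicit algebraic expression in the entries of $V_{21}$ and the already-conditioned channel gains, and is handled by iterating the conditioning-and-isolation technique used in Lemma \ref{lem1} and Theorem \ref{thm3}. Combining the three steps, $E_1\neq 0$ almost surely, so at least one of $e_1,e_2$ is a.s.\ non-zero.
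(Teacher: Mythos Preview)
Your proposal is correct and follows essentially the same route as the paper: reduce via the Alamouti division-algebra property to showing $G_3\neq 0$ and $G'_2\neq 0$ almost surely, dispatch $G_3$ by the Lemma~\ref{lem1} argument, and for $G'_2$ clear the denominator and isolate a quadratic polynomial in one real channel coefficient from the third row of $H_{21}$. The paper carries the final ``iterated conditioning'' step through explicitly---a second isolation in $|h_{21_{12}}|^2$ followed by evaluation at two concrete choices of $H_{22}$ to certify that the resulting rational function in the entries of $H_{22}$ is non-constant---so your sketch would be completed in exactly the manner you anticipate.
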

\begin{proof}
 It is easy to prove that $\frac{G^H_3}{||G_3(1,:)||^2}$ is a non-zero Alamouti matrix almost surely\footnote{The proof for this is on the same lines as that of Lemma \ref{lem1} given in Appendix \ref{appen_thm3}.}. Since {\small$E=\frac{G'^H_2}{||G'_2(1,:)||^2}\frac{G^H_3}{||G_3(1,:)||^2}$} is a product of Alamouti matrices, it is now sufficient to prove that {\small$G'_2$} is a non-zero matrix almost surely. Substituting for $G_1$, $G_3$, $G_5$, and $G_7$ from (\ref{eqn-define_H_i}) in the definition of $G'_2$, we have

{\small
\begin{align} \nonumber
&g'_{2_{11}}=\frac{1}{\left(|\hat{g}_{31}|^2+|\hat{g}_{32}|^2\right)\left(|\hat{g}_{11}|^2+|\hat{g}_{12}|^2\right)}\times\\\label{eqn-final3}&	\left(\left(|\hat{g}_{11}|^2+|\hat{g}_{12}|^2\right)		\begin{bmatrix}
                                                                                                 e^{j\theta}\overline{\hat{g}_{31}}\hat{g}_{33}+e^{-j\theta}\hat{g}_{32}\overline{\hat{g}_{34}} 
                                                                                                \end{bmatrix}\right.\\\nonumber&\hspace{1.5cm}\left.-\left(|\hat{g}_{31}|^2+|\hat{g}_{32}|^2\right)\begin{bmatrix}
                                                                                                 e^{j\theta}\overline{\hat{g}_{11}}\hat{g}_{13}+e^{-j\theta}\hat{g}_{12}\overline{\hat{g}_{14}} 
                                                                                                \end{bmatrix}\right).
\end{align}}
Note that the term outside the parenthesis in (\ref{eqn-final3}), i.e., {\small $\frac{1}{\left(|\hat{g}_{31}|^2+|\hat{g}_{32}|^2\right))\left(|\hat{g}_{11}|^2+|\hat{g}_{12}|^2\right)}$}  is non-zero almost surely. We shall now prove that the term inside the parenthesis in (\ref{eqn-final3}) is also non-zero almost surely. Since {\small$\hat{G}=H_{21}V_{21}$}, the entries $\hat{g}_{3j}$ and $\hat{g}_{1j}$ are given by {\small$\hat{g}_{3j}=\sum_{k=1}^{4}h_{21_{3k}}v_{21_{kj}}$} and  {\small$\hat{g}_{1j}=\sum_{k=1}^{4}h_{21_{1k}}v_{21_{kj}}$} respectively, for $j=1,2,3,4$. Conditioning on all the random variables except $h_{21_{31}}$, we have  {\small$\hat{g}_{3j}=h_{21_{31}}v_{21_{1j}}+q_j$} where, $q_j$ is some function of the conditioned random variables. Note that $\hat{g}_{1j}$, for all $j$, are independent of 
$h_{21_{31}}$. Considering the terms inside the parenthesis in (\ref{eqn-final3}), the coefficient of $|h_{21_{31}}|^2$ is given by (\ref{eqn-final4}) (at the top of the next page).
{\begin{figure*}
\scriptsize \begin{align} \label{eqn-final4}
&\left(|\hat{g}_{11}|^2+|\hat{g}_{12}|^2\right)e^{j\theta}\overline{v_{21_{11}}} v_{21_{13}} +  \left(|\hat{g}_{11}|^2+|\hat{g}_{12}|^2\right)e^{-j\theta}v_{21_{12}}\overline{v_{21_{14}}} -\left(|v_{21_{11}}|^2+|v_{21_{12}}|^2\right)\left(e^{j\theta}\overline{\hat{g}_{11}}\hat{g}_{13}+e^{-j\theta}\hat{g}_{12}\overline{\hat{g}_{14}}\right)\\
 \label{eqn-final4a}
&=\left(\sqrt{\text{tr}\left(H^{-1}_{22}H^{-H}_{22}\right)}\right)^4\left[\left(|\tilde{g}_{11}|^2+|\tilde{g}_{12}|^2\right)e^{j\theta}\overline{h^{(-1)}_{22_{11}}} h^{(-1)}_{22_{13}} +  \left(|\tilde{g}_{11}|^2+|\tilde{g}_{12}|^2\right)e^{-j\theta}h^{(-1)}_{22_{12}}\overline{h^{(-1)}_{22_{14}}} -\left(|h^{(-1)}_{22_{11}}|^2+|h^{(-1)}_{22_{12}}|^2\right)\left(e^{j\theta}\overline{\tilde{g}_{11}}\tilde{g}_{13}+e^{-j\theta}\tilde{g}_{12}\overline{\tilde{g}_{14}}\right)\right]
\end{align}\hrule
\end{figure*}}If this coefficient is non-zero then, further conditioning on $h^I_{21_{31}}$, the terms inside the parenthesis in (\ref{eqn-final3}) constitute a non-zero polynomial of degree $2$ in $h^R_{21_{31}}$. Since $h^R_{21_{31}}$ is continuously distributed, the term inside the parenthesis in (\ref{eqn-final3}) is almost surely non-zero. 

Hence, the proof shall be complete if we prove that the expression in (\ref{eqn-final4}) is non-zero almost surely. Substituting for $v_{21_{ij}}$, we have (\ref{eqn-final4a}) where, $h^{(-1)}_{22_{ij}}$ denotes the entries of $H^{-1}_{22}$. Since {\small$\tilde{g}_{ij}=\sum_{k=1}^{4}h_{21_{1k}}h^{(-1)}_{22_{kj}}$}, the coefficient of\footnote{The coefficient of $|h_{21_{11}}|^2$ is equal to zero. So, we consider the coefficient of $|h_{21_{12}}|^2$.}  $|h_{21_{12}}|^2$ in the term inside the parenthesis of (\ref{eqn-final4a}) is given by

{\small \begin{align} \label{eqn-final5}
&\left(\left|h^{(-1)}_{22_{21}}\right|^2+\left|h^{(-1)}_{22_{22}}\right|^2\right)\left(e^{j\theta}\overline{h^{(-1)}_{22_{11}}}h^{(-1)}_{22_{13}}+e^{-j\theta}h^{(-1)}_{22_{12}}\overline{h^{(-1)}_{22_{14}}}\right)\\
\nonumber
&-\left(\left|h^{(-1)}_{22_{11}}\right|^2+\left|h^{(-1)}_{22_{12}}\right|^2\right)\left(e^{j\theta}\overline{h^{(-1)}_{22_{21}}}h^{(-1)}_{22_{23}}+e^{-j\theta}h^{(-1)}_{22_{22}}\overline{h^{(-1)}_{22_{24}}}\right).
\end{align}}

Note that the entries of $H^{-1}_{22}$ are rational polynomial functions in the variables {\small$h^R_{22_{ij}}$} and {\small$h^I_{22_{ij}}$}, for $i,j=1,2,3,4$. If the expression in (\ref{eqn-final5}) is a non-constant rational polynomial function in {\small$h^R_{22_{ij}}$} and {\small$h^I_{22_{ij}}$} then, clearly (\ref{eqn-final5}) is non-zero almost surely, for any $\theta$. This is because, under a common denominator, the numerator of (\ref{eqn-final5}) would be a non-constant polynomial function in {\small$h^R_{22_{ij}}$} and {\small$h^I_{22_{ij}}$} which are independent and continuously distributed random variables for all $i,j$. To show that the expression in (\ref{eqn-final5}) is a non-constant rational polynomial function in {\small$h^R_{22_{ij}}$} and {\small$h^I_{22_{ij}}$} for some $(i,j)$ and for any $\theta$, it is sufficient to show that (\ref{eqn-final5}) evaluates to different values for different choices of $H_{22}$. Choose two values for $H_{22}$ to be 

{\footnotesize
\begin{align*}
&H_{22}=\begin{bmatrix}   
   0 &  0 &  1 & 0\\
   0 &  0 &  0 & 1\\
   1.5 & -1 & -0.5 &  -0.5\\
   -1 &  1  & 0  & 0
\end{bmatrix}, \begin{bmatrix}
0 &  0 &  1 & 0\\
   0 &  0 &  0 & 1\\
   1 & -0.5 & -0.5 &  -0.5\\
   -0.5 &  0.5  & 0  & 0
\end{bmatrix}
\end{align*}
}so that for the first matrix,

{\small \begin{align*}
&h^{(-1)}_{22_{11}}=h^{(-1)}_{22_{12}}=h^{(-1)}_{22_{21}}=h^{(-1)}_{22_{22}}=1,\\
& h^{(-1)}_{22_{13}}=h^{(-1)}_{22_{14}}=h^{(-1)}_{22_{23}}=2, ~h^{(-1)}_{22_{24}}=3\\
& H^{-1}_{22}(3,:)=[1 ~0 ~0 ~0], ~ H^{-1}_{22}(4,:)=[0 ~1 ~0 ~0].
\end{align*}}and for the second matrix all the entries of $H^{-1}_{22}$ are the same as above except that $h^{(-1)}_{22_{24}}=4$. Thus, for any value of $\theta$, (\ref{eqn-final5}) evaluates to $-e^{-j\theta}$ and $-2e^{-j\theta}$ for the two chosen values of $H_{22}$. Hence, for any value of $\theta$, the expression in (\ref{eqn-final5}) is a non-constant rational polynomial function in the entries of $H_{22}$. 
\end{proof}

\begin{lemma}\label{lem5}
The random variable $p$ defined in (\ref{eqn-define_p}) is non-zero almost surely.
\end{lemma}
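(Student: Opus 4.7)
The plan is to exploit an immediate factorization of $p$. Collecting the common factor in (\ref{eqn-define_p}) gives
\begin{align*}
p = (|e_1|^2 + |e_2|^2)\Bigl[e^{i\theta}\,\overline{v_{11_{12}}}v_{11_{14}} + e^{-i\theta}\,v_{11_{11}}\overline{v_{11_{13}}}\Bigr].
\end{align*}
By Lemma \ref{lem4}, at least one of $e_1,e_2$ is non-zero almost surely, so the first factor is a.s.\ non-zero. The task therefore reduces to showing that the bracketed quantity, call it $q$, is a.s.\ non-zero.

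Next I would invoke an independence observation: the variables $e_1,e_2$ are deterministic functions of the matrices $G_i,G'_i$, and hence of $H_{21},H_{22}$ alone, whereas each $v_{11_{ij}}$ depends only on $H_{12}$ through $V_{11}=H_{12}^{-1}/\sqrt{\text{tr}(H_{12}^{-1}H_{12}^{-H})}$. The two factors in the display above are therefore independent, and since the scalar $1/\sqrt{\text{tr}(H_{12}^{-1}H_{12}^{-H})}$ is a.s.\ finite and positive, it suffices to show that
\begin{align*}
\tilde q := e^{i\theta}\,\overline{h^{(-1)}_{12_{12}}}h^{(-1)}_{12_{14}} + e^{-i\theta}\,h^{(-1)}_{12_{11}}\overline{h^{(-1)}_{12_{13}}}
\end{align*}
is a.s.\ non-zero.

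I would close with the argument used at the end of Lemma \ref{lem4}. Each $h^{(-1)}_{12_{ij}}$ is a rational function of the real and imaginary parts of the entries of $H_{12}$, so $\tilde q$ is a rational function of independent, continuously distributed random variables. It is therefore enough to exhibit a single deterministic value of $H_{12}$ at which $\tilde q\neq 0$: this shows $\tilde q$ is not identically zero as a rational function in the real/imaginary parts of $H_{12}$, and its zero locus then has Lebesgue measure zero, hence probability zero under the channel distribution. Taking $H_{12}^{-1}=I_4+E_{(1,3)}$, where $E_{(1,3)}$ has a single $1$ in row $1$, column $3$ (clearly invertible), the first row of $H_{12}^{-1}$ becomes $(1,0,1,0)$, which yields $\tilde q = e^{-i\theta}\neq 0$ for every $\theta$. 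This uniform-in-$\theta$ witness is consistent with the lemma imposing no side condition on $\theta$.

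The main thing to be careful about is not the calculation itself but the independence bookkeeping: one must verify that $e_1,e_2$ really involve no entries of $H_{12}$, which is immediate from tracing the definitions back through $E_i$, $G'_i$, $G_i$ to $\hat{G}=H_{21}V_{21}$ and noting that $V_{21}$ depends only on $H_{22}$. Once this is in hand, the rest of the proof is just the factorization, the citation of Lemma \ref{lem4}, and the one-line verification of the witness.
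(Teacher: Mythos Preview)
Your proof is correct and follows essentially the same route as the paper's: factor $p$, invoke Lemma~\ref{lem4} for the first factor, pass from $V_{11}$ to $H_{12}^{-1}$, and argue that the resulting expression is a non-identically-zero rational function of the real and imaginary parts of the entries of $H_{12}$. The only difference is cosmetic---you exhibit a single explicit witness matrix where $\tilde q\neq 0$, whereas the paper appeals to ``similar arguments as in Lemma~\ref{lem4}'' (evaluating at two points to show non-constancy); your independence remark about $e_1,e_2$ versus $v_{11_{ij}}$ is correct but not actually needed, since $P(AB=0)\le P(A=0)+P(B=0)$ already gives the conclusion without independence.
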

\begin{proof}
We have 
{
\begin{align*} \label{eqn-simplify_p}
& p=\left(|e_1|^2+|e_2|^2\right)\left[e^{i\theta}\overline{v_{11_{12}}}v_{11_{14}}+e^{-i\theta}v_{11_{11}}\overline{v_{11_{13}}}\right].
\end{align*}}From Lemma \ref{lem4}, since $e_1$ and $e_2$ are non-zero almost surely, we only need to need to prove that {\small$e^{i\theta}\overline{v_{11_{12}}}v_{11_{14}}+e^{-i\theta}v_{11_{11}}\overline{v_{11_{13}}}$} is non-zero almost surely. Since {\small$V_{11}=\frac{H^{-1}_{12}}{\text{tr}\left(H^{-1}_{12}H^{-H}_{12}\right)}$}, we only need to show that {\small$e^{i\theta}\overline{h^{(-1)}_{12_{12}}}h^{(-1)}_{12_{14}}+e^{-i\theta}h^{(-1)}_{12_{11}}\overline{h^{(-1)}_{12_{13}}}$} is non-zero because {\small$\text{tr}\left(H^{-1}_{12}H^{-H}_{12}\right)$} is non-zero almost surely. Using similar arguments as in Lemma \ref{lem4}, it can be shown that {\small$e^{i\theta}\overline{h^{(-1)}_{12_{12}}}h^{(-1)}_{12_{14}}+e^{-i\theta}h^{(-1)}_{12_{11}}\overline{h^{(-1)}_{12_{13}}}$} is a non-constant rational polynomial function in the entries of $H_{12}$, for any $\theta$. Hence, {\small$e^{i\theta}\overline{h^{(-1)}_{12_{12}}}h^{(-1)}_{12_{14}}+e^{-i\theta}h^{(-1)}_{12_{11}}\overline{h^{(-1)}_{12_{13}}}$} is non-zero almost surely.
\end{proof}

Let us now complete the proof for the statement that the first row, first column entry of the matrix {\small$\left[\frac{H''^H_3 H''_1}{||H''_3(1,:)||^2}-\frac{H''^H_4 H''_2}{||H''_4(1,:)||^2}\right]$} is non-zero almost surely. The coefficients of $h^{ {\scriptscriptstyle R}^2}_{11_{31}}$ and $h^{ {\scriptscriptstyle I}^2}_{11_{31}}$ in the expression {\small$\frac{1}{||H''_3(1,:)||^2}\left[H''^H_3 H''_1-||H''_3(1,:)||^2\frac{H''^H_4 H''_2}{||H''_4(1,:)||^2}\right]_{11}$} can be derived to be equal to

{\footnotesize\begin{align*}
&p-\left(|e_1|^2+|e_2|^2\right)\left(\left|v_{12_{13}}\right|^2+\left|v_{12_{14}}\right|^2\right)\frac{H''^H_4 H''_2}{||H''_4(1,:)||^2} \text{ and } \\
&-p-\left(|e_1|^2+|e_2|^2\right)\left(\left|v_{12_{13}}\right|^2+\left|v_{12_{14}}\right|^2\right)\frac{H''^H_4 H''_2}{||H''_4(1,:)||^2}
\end{align*}}respectively. Clearly, since $p$ is non-zero almost surely, both of the above coefficients cannot be equal to zero simultaneously. Thus, {\small$\left[H''^H_3 H''_1-||H''_3(1,:)||^2\frac{H''^H_4 H''_2}{||H''_4(1,:)||^2}\right]_{11}$} is a quadratic polynomial in the continuously distributed random variables $h^{ {\scriptscriptstyle R}^2}_{11_{31}}$ and $h^{ {\scriptscriptstyle I}^2}_{11_{31}}$ and hence, non-zero almost surely.

\end{IEEEproof}

\end{document}